\tikzstyle{every picture}=[node distance=2cm, inner sep=2pt, on grid, bend angle=20]
\tikzstyle{every state}=[circle, draw, black, fill=white, minimum size=0pt]
\tikzstyle{every edge}=[auto, swap, draw,->,>=stealth']
\tikzstyle{matched edges}=[-, draw, black, dotted, inner sep=0pt]
\tikzstyle{every loop}=[looseness=8]
\newtheorem{proposition}{Proposition}
\newtheorem{corollary}{Corollary}
\newtheorem{theorem}{Theorem}
\newtheorem{lemma}{Lemma}
\theoremstyle{remark}
\newtheorem{example}{Example}
\def\ldots{\mathinner{\ldotp\ldotp}}       
\newcommand{\trace}{\operatorname{trace}}
\newcommand{\bal}{\operatorname{bal}}
\newcommand{\Diag}{\operatorname{Diag}}
\newcommand{\MC}{\operatorname{MC}}
\newcommand{\Fact}{\operatorname{Fact}}
\newcommand{\Dyck}{\operatorname{Dyck}}
\newcommand{\Prime}{\operatorname{Prime}}
\newcommand{\Update}{\operatorname{Update}}
\newcommand{\diff}{\mathop{}\mathopen{}\mathrm{d}}
\def\N{\mathbb{N}}
\def\Z{\mathbb{Z}}
\def\A{\mathcal{A}}
\def\B{\mathcal{B}}
\def\C{\mathcal{C}}
\def\D{\mathcal{D}}
\def\E{\mathcal{E}}
\def\F{\mathcal{F}}
\def\G{\mathcal{G}}
\def\H{\mathcal{H}}
\def\P{\mathcal{P}}
\def\X{\mathsf{X}}
\def\Pgoth{{\mathfrak P}}
\def\MR{\text{MR}}
\newcommand{\ie}{{\itshape i.e.\ }}
\newcommand{\etal}{{\itshape et al.\ }}
\newcommand{\resp}{{resp.\ }}
\title{Sofic-Dyck shifts}
\author{Marie-Pierre B\'eal}
\address{Universit\'e Paris-Est, Laboratoire d'informatique Gaspard-Monge, UMR 8049 CNRS}
\email{beal@univ-mlv.fr}
\author{Michel Blockelet}
\address{Universit\'e Paris-Est, Laboratoire d'Algorithmique,
  Complexit\'e et Logique}
\email{michel.blockelet@u-pec.fr}
\author{C\v{a}t\v{a}lin Dima}
\address{Universit\'e Paris-Est, Laboratoire d'Algorithmique, Complexit\'e et Logique}
\email{catalin.dima@u-pec.fr}
\thanks{This work is supported by the French National Agency (ANR) through "Programme d'Investissements d'Avenir" (Project ACRONYME $\text{n}^\circ$ANR-10-LABX-58) and through the ANR EQINOCS}
\keywords{Dyck shift, Markov-Dyck shift, sofic-Dyck shift, sofic shift,
  symbolic dynamics, visibly pushdown automaton, visibly pushdown language, zeta function}
\date{\today}
\begin{document}

\begin{abstract}
We define the class of sofic-Dyck shifts which extends the class of
Markov-Dyck shifts introduced by Inoue, Krieger and Matsumoto. 
Sofic-Dyck shifts are shifts of sequences whose
finite factors form unambiguous context-free languages. 
We show that they correspond exactly to the class of shifts of
sequences whose sets of factors are visibly
pushdown languages. We give an expression of the zeta function of a 
sofic-Dyck shift.
\end{abstract}

\maketitle

\section{Introduction}

Shifts of sequences are defined as sets of bi-infinite sequences
of symbols over a finite alphabet avoiding a given set of finite
factors called forbidden factors. Well-known classes of shifts of
sequences are the shifts of finite type which avoid a finite set of
forbidden factors and the sofic shifts which avoid a regular
set of forbidden factors. Sofic shifts may also be defined as
labels of bi-infinite paths of a labeled directed graph.

Dyck shifts are shifts of sequences whose finite factors
are factors of well-parenthesized words. They were introduced by Krieger in \cite{Krieger1974}.
In \cite{Inoue2006}, \cite{KriegerMatsumoto2011}, \cite{InoueKrieger2010}, Inoue,
Krieger, and Matsumoto investigated generalizations of Dyck shifts called Markov-Dyck shifts.
Their languages of factors are unambiguous context-free languages.
Such shifts are presented by a finite-state directed graph equipped
with a graph inverse semigroup. The graph can be considered as an automaton which operates
on words over an alphabet which is partitioned into two disjoint
sets, one for the left parentheses, the
other one  for the right parentheses.
In \cite{InoueKrieger2010}, Inoue and Krieger introduced an extension
of Markov-Dyck shifts by constructing shifts from sofic systems and Dyck
shifts. Examples of shifts of this type are the Motzkin shifts.  
Dyck shifts and their extensions are in general not synchronized but Krieger and Matsumoto
introduced weaker
notions of synchronization suitable
for Markov-Dyck or Motzkin shifts (see \cite{Krieger2006}, \cite{Matsumoto2011b}, \cite{Matsumoto2011c}, 
\cite{KriegerMatsumoto2011b}).
Flow invariants for these shifts are obtained in \cite{Matsumoto2011b}
and \cite{CostaSteinberg2013}. 
In
\cite{Krieger2012} (see also \cite{HamachiKrieger2013} and
\cite{HamachiKrieger2013b}), 
Krieger considers subshift presentations, called 
$\mathcal{R}$-graphs, with
word-labeled edges partitioned into two disjoint sets of positive and negative
edges equipped with a relation $\mathcal{R}$ between 
positive and negative edges going backwards.

In this paper, we introduce a larger class of shifts. We consider
shifts of sequences presented by a finite-state automaton (a
labeled graph) equipped with a set of pairs of edges called matched
edges. The matched
edges may not be consecutive edges of the graph.
We call such structures Dyck automata. 
They may be equipped with a graph semigroup
which is no more an inverse semigroup.
The automaton operates
on words over an alphabet which is partitioned into three disjoint
sets of symbols, the call symbols, the return symbols, and the internal symbols
(for which no matching constraints are required).

We call the shifts presented by Dyck automata sofic-Dyck shifts. We prove that this class is
exactly the class of shifts of sequences whose set of factors is a visibly
pushdown language of finite words. Equivalently, they can be defined
as the sets of sequences which avoid some visibly pushdown language of factors. 
So these shifts could also be called visibly pushdown shifts.

Visibly pushdown languages were introduced by Mehlhorn \cite{Mehlhorn1980} and
Alur \etal \cite{AlurMadhusudan2004, AlurMadhusudan2009}.
They form a natural
and meaningful class inside the
class of unambiguous context-free languages extending the parenthesis languages
\cite{McNaughton1967}, \cite{Knuth1967}, the bracketed languages \cite{GinsburgHarrison1967},
and the balanced languages \cite{BerstelBoasson2002}, \cite{BerstelBoasson2002b}.
These languages share many interesting properties
with regular languages like stability by
intersection and complementation. 
Visibly pushdown languages are used as models for structured data
files like XML files.

We define also a subclass of sofic-Dyck shifts called finite-type-Dyck
shifts. We prove that sofic-Dyck shifts are images of finite-type-Dyck
shifts under proper block maps, \ie block maps mapping call 
(\resp return, internal) symbols to 
call (\resp return, internal) symbols. 
The classes of sofic-Dyck shifts and finite-type-Dyck shifts are invariant by proper conjugacies.

In a second part of the paper, we address the problem of the
computation of the zeta function of sofic-Dyck shift presented by a Dyck automaton.
The zeta function allows to count the
number of periodic points of a subshift. It is a conjugacy invariant of
a class of shifts. Two subshifts which are conjugate (or isomorphic)
have the same zeta functions. The invariant is not complete and it 
is not known, even for shifts of finite type, whether the
conjugacy is a decidable property \cite{LindMarcus1995}.

The formula of the zeta function of a shift of finite type is due to
Bowen and Lanford~\cite{BowenLanford1970}. Formulas for the zeta
function of a sofic shift were obtained by Manning~\cite{Manning1971}
and Bowen~\cite{Bowen1978}. Proofs of Bowen's formula can be found in
\cite{LindMarcus1995} and \cite{Beal1995, Beal1993}. An $\N$-rational
expression of the zeta function of a sofic shift has been obtained
by Reutenauer in~\cite{Reutenauer1997} (see also \cite{BerstelReutenauer1990}).
Formulas for zeta functions of flip systems of finite type are given in
\cite{KimLeePark2003}, and for sofic flip systems in
\cite{KimRyu2011}.
The zeta functions of the Dyck shifts were determined by Keller
in~\cite{Keller1991}.
For the Motzkin shift where some unconstrained symbols are added to the
alphabet of a Dyck shift, the zeta function was determined by
Inoue in~\cite{Inoue2006}. In \cite{KriegerMatsumoto2011}, Krieger and
Matsumoto obtained an expression for the zeta function of a Markov-Dyck shift by
applying a formula of Keller and with a clever encoding of periodic 
points of the shift. 

In Section~\ref{section.zeta}, we give an expression of the zeta
function of a sofic-Dyck shift. The proof
combines techniques used for computing the zeta function of a (non Dyck) sofic
shift and of a Markov-Dyck shift. We implicitly  use the fact that the
intersection of two visibly pushdown languages is a visibly pushdown
language. We give an example of the computation of the zeta function of a sofic-Dyck shift. 

A short version of this
paper appeared in \cite{BealBlockeletDima2014a}.

\section{Shifts} \label{section.shift}

We introduce below some basic notions of symbolic dynamics. We refer
to \cite{LindMarcus1995, Kitchens1998} for an introduction to this theory.
Let $A$ be a finite alphabet. The set of finite sequences or words
over $A$ is denoted by $A^*$
and the set of nonempty finite sequences or words
over $A$ is denoted by $A^+$.
The \emph{shift transformation} $\sigma$ on $A^\Z$ is defined by 
\begin{equation*}
\sigma((x_i)_{i \in \Z}) = (x_{i+1})_{i \in \Z}, 
\end{equation*}
for $(x_i)_{i \in \Z} \in A^\Z$.
A \emph{factor} of a bi-infinite sequence $x$ is a finite word $x_i
\cdots x_j$ for some $i,j$, the factor being the empty word if $j < i$.

A \emph{subshift}  (or \emph{shift}) of $A^\Z$ is a closed
shift-invariant subset of $A^\Z$ equipped with the product of the
discrete topology. If $X$ is a shift, a finite word
is \emph{allowed} for $X$  (or is a \emph{block of} $X$) if it appears as a factor of some bi-infinite
sequence of $X$.  We denote by $\B(X)$ the set of blocks of $X$ and by
$\B_n(X)$ the set blocks of length $n$ of $X$. Let
$F$ be a set of finite words over the alphabet $A$. We denote by
$\X_F$ the set of bi-infinite  sequences of $A^\Z$ avoiding all words of $F$, \ie where no factor belongs to $F$.
The set $\X_F$ is a shift and any shift is the set
of bi-infinite  sequences avoiding all words of some set of finite
words.
When $F$ can be chosen finite (\resp regular), the shift  $\X_F$ is called a \emph{shift of
  finite type} (\resp \emph{sofic}).

Let $L$ be a language of finite words over a finite alphabet $A$. The
language is \emph{extensible} if for any $u \in L$, there are letters
$w,z \in A^+$ such that $wuz \in L$. It is \emph{factorial} if any
factor of a word of the language belongs to the language. 

If $X$ is a subshift, $\B(X)$ is a factorial extensible
language. Conversely, if $L$ is a factorial extensible
language, then the set $\B^{-1}(L)$ of bi-infinite sequences $x$ such
that any finite factor of $x$ belongs to $L$ is a subshift \cite{LindMarcus1995}.


   Let $X \subseteq A^\Z$ be a shift and $m,n$
   be nonnegative integers. A map $\Phi: X \xrightarrow{} B^\Z$ is called
   an $(m,n)$-\emph{block map} with memory $m$ and anticipation $n$ if there
   exists a function $\phi : \mathcal{B}_{m+n+1}(X) \xrightarrow{} B$ such
   that, for all $x \in X$ and any $i \in \Z$, $\Phi(x)_i= \phi(x_{i-m} \dotsm x_{i-1}x_ix_{i+1}
   \dotsm x_{i+n})$. A \emph{block map} is a map which
   is an $(m,n)$-block map for some nonnegative integers $m, n$. 

A \emph{conjugacy} is a bijective block map from $X$ to $Y$.
 A property of subshifts which is invariant by conjugacies 
is called a \emph{conjugacy invariant}. 



\section{Sofic-Dyck shifts} \label{section.soficdyck}

In this section, we define the class of sofic-Dyck shifts which
generalizes the class of Markov-Dyck shifts introduced in \cite{Krieger1974} and \cite{Matsumoto2011} (see also \cite{KriegerMatsumoto2011}).

We consider an alphabet $A$ which is a disjoint union of three finite
sets of letters, the set $A_c$ of \emph{call letters}, the set $A_r$ of
\emph{return letters}, and the set $A_i$ of \emph{internal
  letters}. The set $A=A_c \sqcup A_r \sqcup A_i$ is called a \emph{pushdown alphabet}.

The two sets of call and return symbols may not have the
same size. We assume that any call symbol may match any return symbol.
We denote by $\MR(A)$ the set of all finite words
over $A$ where every return symbol is matched with a call symbol, \ie
$u \in \MR(A)$ if for every prefix $u'$ of $u$, the number of call
symbols of $u'$ is greater than or equal to the number of return symbols
of $u'$. These words are called \emph{matched-return}.
Similarly, $\MC(A)$ denotes the set of all words where 
every call symbol is matched with a return symbol, \ie $u \in \MC(A)$
if for every suffix $u'$ of $u$, the number of return symbols of $u'$ is
greater than or equal to the number of call symbols of $u'$. 
These words are called \emph{matched-call}.
We say that a word is a \emph{Dyck word} if it belongs to the
intersection of $\MC(A)$ and $\MR(A)$. Dyck words are
well-parenthesized or well-formed words. Note that the empty word or
all words over $A_i$ are Dyck words. The set of Dyck words over $A$ is
denoted by $\Dyck(A)$. 
For instance for $A_c =\{ (, [\}$, $A_r  =\{), ]\}$, $A_i = \{i\}$, the word  
$( \: (\: [ \: i  \: )$ is matched-return, the word $( \: ] \: i
\: ]$ is matched-call and $(\:[\:i\:] \:]\:(\:)$ is a Dyck word on $A$.


A \emph{(finite) Dyck automaton} $\A$
over $A$ is a pair $(\G,M)$ of an
automaton (or a directed labeled graph) $\G=(Q,E,A)$ over $A$ where $Q$ is the finite set of
states,
$E \subseteq Q \times A \times Q$ is the set of edges, and with a set $M$ of pairs of edges
$((p,a,q),(r,b,s))$ such that $a \in A_c$ and $b \in A_r$. 
 The edges labeled by call
letters (\resp return, internal) letters are also called \emph{call} (\resp
\emph{return}, \emph{internal}) \emph{edges} and are denoted by $E_c$ 
(\resp $E_r$, $E_i$).
The set $M$ is called the set of \emph{matched edges}.
 If $e$ is an edge we denote by $s(e)$ its
starting state and by $t(e)$ its target state. 


A finite path $\pi$
of $\A$ is said to be an \emph{admissible
  path} if for any factor 
$(p,a,q) \cdot \pi_1\cdot (r,b,s)$ of $\pi$ with $a \in A_c$,
$b \in A_r$ and the label of $\pi_1$ being a Dyck word on $A$, 
$((p,a,q),(r,b,s))$ is a matched pair.  Hence any path of length zero
is admissible
and factors of finite admissible paths are admissible.
A bi-infinite path is \emph{admissible} if all its finite factors are
admissible. 

The \emph{sofic-Dyck shift presented} by $\A$ is the
set of labels of bi-infinite admissible paths of $\A$ and 
$\A$ is called a \emph{presentation} of the shift.

An equivalent semantics of Dyck automata is given in
\cite{BealBlockeletDima2014a} with a graph semigroup associated to
$\A$. This graph semigroup is no more an inverse semigroup as for 
presentations associated to Markov-Dyck shifts \cite{Krieger1974}.

Note that the label of a finite admissible path may not be a block of
the presented shift since a finite admissible path may not be
extensible to a bi-infinite admissible path.

\begin{lemma}\label{lemma.compacity}
The sofic-Dyck shift presented by a  Dyck automaton is exactly the set of bi-infinite
sequences $x$ such that each finite factor of $x$ is the label of a finite
admissible path.
\end{lemma}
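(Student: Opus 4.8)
The inclusion "$\subseteq$" is essentially by definition: if $x$ is the label of a bi-infinite admissible path $\pi$, then every finite factor of $x$ is the label of the corresponding finite factor of $\pi$, which is admissible since factors of admissible paths are admissible. So the content is in the reverse inclusion. The plan is to take a bi-infinite sequence $x = (x_i)_{i\in\Z}$ all of whose finite factors are labels of finite admissible paths, and produce a single bi-infinite admissible path labeled $x$ by a compactness argument.

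First I would set up the combinatorial data. For each pair of integers $m \le n$, let $P_{m,n}$ be the (nonempty, by hypothesis) set of finite admissible paths in $\G$ labeling the factor $x_m \cdots x_n$; such a path is a finite sequence of edges $(e_m, \ldots, e_n)$ with $e_k$ labeled $x_k$, consecutive edges composable, and the admissibility (matched-pair) condition satisfied on every sub-factor that reads a Dyck word. I would organize the $P_{m,n}$ into an infinite, finitely branching tree: a natural choice is to index by $n \ge 0$ the set $T_n$ of admissible paths labeling $x_{-n}\cdots x_{n}$, with the edge relation given by restriction $T_{n+1} \to T_n$, $(e_{-n-1}, e_{-n}, \ldots, e_{n+1}) \mapsto (e_{-n},\ldots,e_n)$. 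Each $T_n$ is finite (there are finitely many edges, hence at most $|E|^{2n+1}$ candidate paths) and nonempty by assumption.

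The key point — and the step I expect to be the only real subtlety — is that the restriction maps $T_{n+1}\to T_n$ are well-defined, i.e. that a restriction of an admissible path is admissible. This is exactly the statement, already recorded in the excerpt, that factors of finite admissible paths are admissible, so it is immediate; but one must be a little careful that the relevant "factor reads a Dyck word" instances for the shorter path are a subset of those for the longer path, which they are. Granting this, the inverse system $\cdots \to T_{n+1} \to T_n \to \cdots \to T_0$ is a system of nonempty finite sets, so by König's lemma (equivalently, compactness / the finite intersection property for the profinite limit) there is a coherent choice: a sequence $\pi^{(n)} \in T_n$ with $\pi^{(n)}$ the restriction of $\pi^{(n+1)}$ for all $n$. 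Gluing the $\pi^{(n)}$ along these restrictions yields a bi-infinite path $\pi = (e_i)_{i\in\Z}$ in $\G$ labeled $x$.

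It remains to check that this glued path $\pi$ is admissible as a bi-infinite path, which by definition means every finite factor of $\pi$ is admissible. But any finite factor of $\pi$ is a factor of some $\pi^{(n)} \in T_n$ for $n$ large, hence admissible since factors of admissible paths are admissible. Therefore $\pi$ is a bi-infinite admissible path labeled $x$, so $x$ lies in the sofic-Dyck shift presented by $\A$. This completes the reverse inclusion and the proof. (I would phrase the compactness step either via König's lemma on the tree of the $T_n$, or, equivalently and perhaps more cleanly, by noting that $A^\Z$ is compact and the sets $\{x \in A^\Z : x_{-n}\cdots x_n \text{ labels some admissible path}\}$ form a decreasing family of nonempty closed sets whose intersection is the presented shift.)
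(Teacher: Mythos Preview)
Your argument is correct and follows essentially the same compactness/K\"onig's-lemma route as the paper: build, for each $n$, the finite nonempty set of admissible paths labeling $x_{-n}\cdots x_n$, extract a coherent family, and glue. One small caveat: the alternative phrasing in your final parenthetical (compactness in $A^\Z$ via the sets $\{x : x_{-n}\cdots x_n \text{ labels an admissible path}\}$) does not by itself produce the bi-infinite admissible \emph{path}---their intersection is exactly the set you are trying to identify with the presented shift, so you still need the K\"onig argument in the space of paths, which your main line already carries out correctly.
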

\begin{proof}
Let $X$ be the sofic-Dyck shift presented by a Dyck automaton $\A$.
By definition, any finite factor of a bi-infinite sequence of $X$ is
the label of a finite admissible path. 

The converse part is due to the following classical compacity
argument. Let $x$ be a bi-infinite
sequence such that each finite factor of $x$ is the label of a finite
admissible path.
Thus for any positive integer $i$, there is a path
\begin{equation*}
p_{i,-i-1} \xrightarrow{x_{-i}} p_{i,-i}\xrightarrow{x_{i-1}} \dotsm
p_{i,-1}\xrightarrow{x_{0}}p_{i,0}
\xrightarrow{x_{1}}p_{i,1} \dotsm \xrightarrow{x_{i}} p_{i,i},
\end{equation*}
which is admissible for $\A$. For each nonnegative integer $m$, 
there is an infinite number of such paths sharing the states $p_k$ at
all indices $k$ for $-m \leq k \leq m$. 
Then $\pi = ((p_{k-1},x_k,p_k))_{k
  \in \Z}$ is a bi-infinite path whose finite factors are admissible
paths of $\A$.
Thus the label $x$ of $\pi$ belongs to $X$.
\end{proof}

\begin{proposition}
A sofic-Dyck shift is a subshift.
\end{proposition}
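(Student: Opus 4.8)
We need to show that a sofic-Dyck shift $X$ presented by a Dyck automaton $\A$ is a closed, shift-invariant subset of $A^\Z$.

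Shift-invariance is basically immediate: if $x$ is the label of a bi-infinite admissible path $\pi = (e_k)_{k\in\Z}$, then $\sigma(x)$ is the label of the shifted path $(e_{k+1})_{k\in\Z}$, which is still bi-infinite, and admissibility is a property of all finite factors, which are unchanged by reindexing. So $\sigma(X) \subseteq X$, and similarly $\sigma^{-1}(X) \subseteq X$.

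Closedness: use Lemma~\ref{lemma.compacity}. By that lemma, $X = \{x \in A^\Z : \text{every finite factor of } x \text{ is the label of a finite admissible path}\}$. Set $L$ to be the set of labels of finite admissible paths of $\A$. Since factors of finite admissible paths are admissible (noted in the text just before the lemma), $L$ is a factorial language, and $X = \B^{-1}(L)$ in the notation of Section~\ref{section.shift}. One can either invoke the cited fact that $\B^{-1}(L)$ is a subshift whenever $L$ is factorial, or give the direct argument: if $x^{(n)} \to x$ in $A^\Z$, then for any fixed finite window $[-m,m]$ the central factor $x_{-m}\cdots x_m$ equals $x^{(n)}_{-m}\cdots x^{(n)}_{m}$ for $n$ large, hence lies in $L$; as every finite factor of $x$ sits inside some such window, every finite factor of $x$ lies in $L$, so $x \in X$ by Lemma~\ref{lemma.compacity}. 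This shows $X$ is closed.

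The only mildly delicate point is making sure Lemma~\ref{lemma.compacity} is genuinely applicable — i.e., that "label of a finite admissible path" really does form a factorial language so that the characterization of $X$ matches the definition of $\B^{-1}(L)$ — but this is exactly what the remark preceding the lemma (factors of finite admissible paths are admissible) provides, so there is no real obstacle; the proof is a short assembly of Lemma~\ref{lemma.compacity}, factoriality of $L$, and the standard compactness fact about $A^\Z$.
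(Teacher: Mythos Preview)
Your proof is correct and follows essentially the same approach as the paper: both rest on Lemma~\ref{lemma.compacity} to characterize $X$ by its finite factors. The paper is slightly more compact in that it sets $F$ to be the complement of your $L$ and observes $X = \X_F$ directly, whereas you verify closedness and shift-invariance by hand, but the substance is identical.
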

\begin{proof}
Let $X$ be a sofic-Dyck shift defined by an automaton $\A$.
Let $F$ be the set of finite words which are not the label of any finite
admissible path of $\A$. Then $X= \X_F$ by Lemma~\ref{lemma.compacity}
and thus $X$ is a subshift.
\end{proof}

We denote respectively by $\MR(X)$, $\MC(X)$ and $\Dyck(X)$, the
intersections of $\MR(A)$, $\MC(A)$ and $\Dyck(A)$ with the set of
blocks of $X$.

\begin{example}
Let $A=A_c \sqcup A_r \sqcup A_i$ with $A_c= \{a_1, \ldots,a_k\}$, $A_r= \{b_1,
\ldots,b_k\}$ and $A_i$ is the empty set.
The \emph{Dyck shift} of order $k$ over the alphabet $A$
is the set of all sequences accepted by the one-state Dyck automaton
$\A=(\G,M)$ containing all loops $(p,a,p)$ for $a \in A$, and where
the edge $(p,a_i,p)$ is matched with the edge $(p,b_i,p)$ for $1\leq i
\leq k$. 

A Motzkin shift is the set of bi-infinite sequences presented
by the automaton $A=A_c \sqcup A_r \sqcup A_i$ with $A_c= \{a_1, \ldots,a_k\}$, $A_r= \{b_1,
\ldots,b_k\}$, the set $A_i$ being no more the empty set.
 A Motzkin shift is represented in the left part of Figure~\ref{figure.example1}.
 It is shown in 
\cite{Inoue2006} that the entropy of the Motzkin shift on this alphabet
is $\log 4$.  Another example is the sofic-Dyck shift $X$ is presented by the Dyck automaton in
the right part of Figure~\ref{figure.example1}. For instance, the bi-infinite sequences
$\cdots ( \: (\: [ \: i \: i \:] \:[\: ]\: ) \cdots$ and $\cdots ) \:
) \: ) \: ) \: ) \cdots$ belong  to $X$ while the sequences
$\cdots (\: [ \: i  \:] \:  [\:  ]\:  ) \cdots$ or $\cdots (\: ]
\cdots$ do not.  We have $(\:i\:i \:)\:(\:) \in \Dyck(X)$ and $( \:)\: (\: [ \:
i \in \MR(X)$.

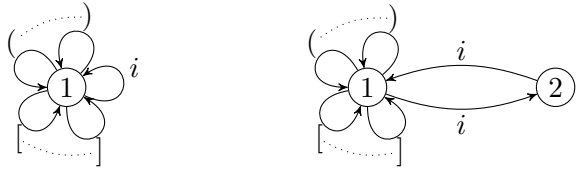
\begin{figure}[htbp]
    \centering
\begin{tikzpicture}[scale=0.4]
\node[state]            (3)         at   (0,0)                 {1};
\path     
(3) edge [in=35,out=335, loop] node {$i$}  (3)
(3) edge [in=105,out=45, loop] node  [inner sep=0mm,pos=0.5]   (a) {)} (3)
(3) edge [in=180,out=120, loop] node  [inner sep=0mm,pos=0.5]   (b) {(} (3)
(3) edge [in=250,out=190, loop] node  [inner sep=0mm,pos=0.5]   (c) {[} (3)
(3) edge [in=330, out=270, loop] node  [inner sep=0mm,pos=0.5]   (d) {]} (3)
(a) edge[matched edges, bend right] node {} (b)
(c) edge[matched edges, bend right] node {} (d);
\node[state]            (1)         at   (10,0)                 {1};
\node[state]            (2) [right =2.5cm of 1]            {2};
\path     
(1)  edge[bend right] node {$i$} (2)
(2)  edge[bend right] node {$i$} (1)
(1) edge [in=100,out=40, loop] node  [inner sep=0mm,pos=0.5]   (a) {)} (1)
(1) edge [in=180,out=120, loop] node  [inner sep=0mm,pos=0.5]   (b) {(} (1)
(1) edge [in=250,out=190, loop] node  [inner sep=0mm,pos=0.5]   (c) {[} (1)
(1) edge [in=330, out=270, loop] node  [inner sep=0mm,pos=0.5]   (d) {]} (1)
(a) edge[matched edges, bend right] node {} (b)
(c) edge[matched edges, bend right] node {} (d);
\end{tikzpicture}
\caption{A Motzkin shift (on the left) over $A=A_c \sqcup A_r \sqcup A_i$ with $A_c= \{(,[\}$, $A_r= \{),]
\}$ and $A_i=\{i\}$. A sofic-Dyck shift (on the right) over the same
tri-partitioned alphabet. Matched edges are linked with a dotted line.
      }\label{figure.example1}
\end{figure}
\end{example}
Note that a call symbol may match several return symbols and conversely although
it is not the case in the above examples.

\section{Finite-type-Dyck shifts} \label{section.finiteTypeDyck}

In this section we give a definition of a subclass of sofic-Dyck
shifts called finite-type-Dyck shifts. We show that sofic-Dyck shifts
are the images of finite-type-Dyck shifts by proper block maps.

Let $A$ and $B$ be two tri-partitioned alphabets. 
We say that a block-map $\Phi:  A^\Z \xrightarrow{} B^\Z$ is
\emph{proper} if and only if $\Phi(x)_i \in A_c$ (\resp $A_r$, $A_i$) 
whenever $x_i \in A_c$ (\resp $A_r$, $A_i$).

Let $A$ be a tri-partitioned alphabet. If $(u,v)$ and $(u',v')$ are two
pairs of words over $A$, we note $(u,v) \preceq (u',v')$ if $u$ is a
suffix of $u'$ and $v$ is a prefix of $v'$.

Let $F \subseteq A^*$ and $U\subseteq   (A^* \times A_c \times A^*) \times  (A^* \times
  A_r \times A^*)$. We say that a finite or bi-infinite sequence $x$
  \emph{avoids} $F$ if, for each finite factor $u$ of $x$, one has $u
  \notin F$.
We say that a finite or bi-infinite sequence $x$
  \emph{avoids} $U$ if for each finite factor $u=vawbz$ of $x$ with $a \in A_c, b \in
A_r$, $w \in \Dyck(A)$, there is no pair $((u_1, a, u_2),(v_1, b, v_2))$ in $U$ such that
$(u_1,u_2) \preceq (v,wbz)$ and $(v_1,v_2) \preceq (vaw,z)$.

A \emph{finite-type-Dyck shift} over $A$ is a set of bi-infinite
sequences $X$ for which there are two \emph{finite} sets $F \subseteq A^*$,
$U\subseteq   (A^* \times A_c \times A^*) \times  (A^* \times
  A_r \times A^*)$,
such that $X$ is the set of sequences \emph{avoiding} $F$ and $U$.

\begin{proposition} \label{proposition.FTDareSD}
A finite-type-Dyck shift is a sofic-Dyck shift.
\end{proposition}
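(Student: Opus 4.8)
The plan is to construct, from the two finite sets $F$ and $U$ defining a finite-type-Dyck shift $X$, a finite Dyck automaton $\A=(\G,M)$ presenting the same shift, mimicking the classical higher-block construction that realizes a shift of finite type as an edge shift. First I would fix an integer $N$ larger than the length of every word occurring in $F$ and every word occurring in the four coordinates of pairs of $U$; this $N$ will be the ``window size'' the automaton must remember. The states of $\G$ will be the words of length $N$ over $A$ that avoid $F$ (equivalently, the blocks of the relevant length that survive the local forbidden constraints), and there is an edge $(u, a, v)$ labelled $a\in A$ whenever $v$ is obtained from $u$ by dropping the first letter and appending $a$, with the further requirement that the overlapped word $ua$ avoid $F$. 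The partition of edges into call, return, internal edges is inherited from the label $a$, which makes the resulting block map automatically proper.

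The delicate point is the matching relation $M$, which must encode the constraints of $U$ correctly. The set $U$ forbids, for a factor $vawbz$ with $w\in\Dyck(A)$, certain combinations of a bounded suffix of $va$ and a bounded prefix of $wbz$ on the call side together with a bounded suffix of $vaw$ and a bounded prefix of $bz$ on the return side. Since every coordinate appearing in $U$ has length $<N$, all of this data is visible inside the length-$N$ state reached just after reading the call letter $a$ (which records a suffix of $va$ of length $N-1$ plus $a$, hence determines the relevant suffix of $va$) and the length-$N$ state reached just after reading the return letter $b$ (which records a suffix of $vaw$ and, through subsequent letters, a prefix of $bz$ — here one needs $N$ large enough that the required prefixes of $wbz$ and of $bz$ are available; if necessary I would use a window that remembers $N$ letters of anticipation as well, i.e.\ states are pairs (past block, future block), exactly as in the standard realization of an SFT as an edge shift, so that both the suffix-before and the prefix-after of each position are encoded in the state). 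Concretely, a pair of edges $\bigl((p,a,q),(r,b,s)\bigr)$ with $a\in A_c$, $b\in A_r$ is declared to be in $M$ precisely when the local data it carries is \emph{not} forbidden by any element of $U$; equivalently, I would first build $M$ as the set of all call/return edge pairs whose combined window data matches the patterns forbidden by $U$, and then take $M$ to be the complement, so that a call/return pair is ``matched'' exactly when the automaton is allowed to close a Dyck-balanced sub-path between them.

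Once $\A$ is defined, I would verify the two inclusions. For $X\subseteq$ (shift presented by $\A$): given $x\in X$, the sequence of length-$N$ windows of $x$ is a bi-infinite path in $\G$ (this uses that $x$ avoids $F$), and for any factor $(p,a,q)\cdot\pi_1\cdot(r,b,s)$ with $\pi_1$ labelled by a Dyck word, the fact that $x$ avoids $U$ says exactly that the corresponding forbidden pattern does not occur, hence $\bigl((p,a,q),(r,b,s)\bigr)\in M$ by construction; so the path is admissible and $x$ lies in the presented shift. For the reverse inclusion: if $x$ is the label of a bi-infinite admissible path of $\A$, then consecutive edges force $x$ to avoid $F$ (each length-$(N{+}1)$ factor of $x$ is an allowed overlap), and admissibility of the path together with the definition of $M$ forces $x$ to avoid $U$ (any occurrence of a forbidden $U$-pattern would produce a non-matched call/return pair straddling a Dyck factor, contradicting admissibility); hence $x\in X$. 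By Lemma~\ref{lemma.compacity} it suffices to check these statements on finite factors, which is where the compactness already proved does the bookkeeping for us.

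The main obstacle, and the step I would spend the most care on, is getting the window size and the exact correspondence between ``local data carried by a pair of edges'' and ``a forbidden element of $U$'' right: $U$ compares a suffix of $va$ with a suffix of $vaw$ and prefixes of $wbz$ and of $bz$, and these pieces of information live at two different positions of $x$ (just after the call, and around the return) that can be arbitrarily far apart. The resolution is that each of the four strings in a $U$-pair is shorter than the window, so each is determined by the finite state at the appropriate position; one must simply be careful that the state at the return position records enough \emph{past} (to see the suffix of $vaw$) and enough \emph{future} (to see the prefix of $bz$), which is exactly why the symmetric window (memory $N$ and anticipation $N$) is the clean choice. Everything else — properness of the block map, finiteness of $\A$, the two inclusions — is then routine.
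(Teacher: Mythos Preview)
Your approach is correct and essentially identical to the paper's: the paper also builds a higher-block Dyck automaton whose states are pairs $(u,v)$ with $u\in A^m$, $v\in A^n$, $uv\notin F$ (a two-sided window with memory $m$ and anticipation $n$), with the sliding-window edges $((bu,av),a,(ua,vc))$, and declares a call/return edge pair to be matched exactly when the associated $(A^m\times A_c\times A^{n-1})\times(A^m\times A_r\times A^{n-1})$ data is \emph{not} in $U$. Your initial one-sided window indeed does not suffice (as you yourself diagnose: the $U$-patterns require a bounded prefix of the future at both the call and the return position), and your corrected ``symmetric window'' is precisely the paper's construction; the paper simply goes there directly and then asserts without further detail that the resulting automaton presents $X$, whereas you spell out the two inclusions.
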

\begin{proof}
Let $X$ be a finite-type-Dyck shift of bi-infinite sequences over $A$
avoiding two finite sets $F$ and $U$. Without loss of generality we
may assume that there are positive integers $m,n$ such that
$F \subseteq A^{m+n+1}$ and $U \subseteq   (A^m \times A_c \times
A^{n}) \times  (A^m \times A_r \times A^{n})$.

We define the Dyck automaton
$\A=(\G,M)$ over $A$ as follows. Let us denote $\G=(Q,E)$. 
We set
\begin{itemize}
\item $Q = \{ (u,v) \mid  u \in A^m, v \in  A^n\}$,
\item $E = \{ ((bu,av), a, (ua,vc)) \mid a,b,c \in A, u \in A^{m-1}, v
  \in A^{n-1}, buavc \notin F\}$,
\item $M$ is the set of pairs of edges  $((du,av), a, (ua,vc))$,
  $((d'u',bv'), b$, $(u'b,v'c'))$, where $a \in A_c$,  $b \in A_r$, $c,c',d,d' \in A$, $u,u' \in A^{m-1}, v,v'
  \in A^{n-1}$ and such that $((du,a,vc),(d'u',b,v'c')) \notin U$.
\end{itemize}
The sofic-Dyck shift presented by $\A$ is $X$.
\end{proof}

\begin{proposition} \label{proposition.SDimageOfFTD}
Sofic-Dyck shifts are the images
of finite-type-Dyck shifts by proper block maps.
\end{proposition}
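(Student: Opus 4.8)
The plan is to prove the two inclusions separately. One direction is immediate from Proposition~\ref{proposition.FTDareSD}: a finite-type-Dyck shift is a sofic-Dyck shift, and the image of a sofic-Dyck shift under a proper block map is again a sofic-Dyck shift, since the property of being presented by a Dyck automaton is preserved when we relabel edges by a proper function and pass the matching relation along — one checks that admissibility of paths is transported correctly because the partition into call/return/internal letters is respected and Dyck words map to Dyck words. (If this last fact about proper block maps and sofic-Dyck shifts has not been established separately, I would first state and prove it as a short lemma.) So the content of the proposition is the reverse inclusion: every sofic-Dyck shift $X$ is the image under a proper block map of some finite-type-Dyck shift.

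For that direction, let $X$ be presented by a Dyck automaton $\A=(\G,M)$ with $\G=(Q,E,A)$. First I would form the higher edge/path presentation: take the alphabet $B = E$ of edges of $\A$ (tri-partitioned by $B_c=E_c$, $B_r=E_r$, $B_i=E_i$), and let $\pi\colon B \to A$ be the proper $(0,0)$-block map sending an edge to its label. The natural candidate for the finite-type-Dyck shift $\widehat X \subseteq B^\Z$ is the set of bi-infinite sequences of edges $(e_k)_{k\in\Z}$ that are admissible paths of $\A$, \ie consecutive edges compose ($t(e_k)=s(e_{k+1})$) and every matched-parenthesis factor respects $M$. Then $\pi(\widehat X)=X$ by the definition of the sofic-Dyck shift presented by $\A$ together with Lemma~\ref{lemma.compacity}, and $\pi$ is proper. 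It remains to verify that $\widehat X$ is a finite-type-Dyck shift, \ie that it is defined by avoiding a \emph{finite} set $F\subseteq B^*$ and a \emph{finite} set $U\subseteq (B^*\times B_c\times B^*)\times(B^*\times B_r\times B^*)$.

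The finite set $F$ handles the local constraint: put into $F$ all length-$2$ words $ef$ over $B$ with $t(e)\ne s(f)$; avoiding $F$ exactly says the sequence of edges is a genuine bi-infinite path. The finite set $U$ handles the matching constraint. The key observation is that a factor of the form $e\,w\,f$ with $e$ a call edge, $f$ a return edge, and the $B$-label $w$ projecting (equivalently, here, being itself) a Dyck word, forces — once $ew f$ is already a path — that $(e,f)\in M$; so I put into $U$ every pair $((\varepsilon, e, \varepsilon),(\varepsilon, f, \varepsilon))$ with $e\in B_c$, $f\in B_r$ and $(e,f)\notin M$. One then checks against the definition of "avoids $U$": a sequence avoids this $U$ iff for every factor $vawbz$ with $a\in B_c$, $b\in B_r$, $w\in\Dyck(B)$, the pair $(a,b)$ lies in $M$ — which is precisely admissibility relative to $M$. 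I expect the main obstacle to be this bookkeeping step: matching up the rather general "$\preceq$" formalism of Section~\ref{section.finiteTypeDyck} (with its context words $u_1,u_2,v_1,v_2$ and suffix/prefix conditions) against the intrinsic notion of admissible path for $\A$, and confirming that taking the context words empty is both legitimate (it is, since $m,n$ may be taken as any fixed size and empty contexts correspond to $m=0$ after absorbing the path condition into $F$, or one pads) and sufficient. A minor point to dispatch is that $\Dyck(B)=\pi^{-1}(\Dyck(A))\cap(\text{relevant paths})$ is handled correctly — but since $\pi$ is proper it preserves the call/return/internal counts letter by letter, so $w$ is a Dyck word over $B$ iff $\pi(w)$ is a Dyck word over $A$, and the matching constraint transfers without loss.
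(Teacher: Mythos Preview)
Your edge-shift construction for the harder direction (writing a sofic-Dyck shift as a proper letter-to-letter image of a finite-type-Dyck shift) is exactly what the paper does: the alphabet $B=E$, the forbidden pairs $F=\{ef: t(e)\neq s(f)\}$, and the forbidden matchings $U=\{((\varepsilon,e,\varepsilon),(\varepsilon,f,\varepsilon)) : (e,f)\notin M\}$ all coincide with the paper's choice, and your bookkeeping remarks about $\Dyck(B)$ and the $\preceq$ formalism are correct and routine.

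For the other direction you take a genuinely different organizational route. The paper does \emph{not} invoke a prior lemma of the form ``images of sofic-Dyck shifts under proper block maps are sofic-Dyck''; in fact that statement is its Proposition~\ref{proposition.image}, proved \emph{after} and \emph{from} Proposition~\ref{proposition.SDimageOfFTD}. Instead the paper works directly from the finite-type data: given a proper $(m,n)$-block map $\Phi$ with local rule $\phi$ and a finite-type-Dyck shift defined by finite $F,U$ (normalized so that the window sizes agree), it builds an explicit Dyck automaton $\A(\phi,F,U)$ over $A\times B$ whose states are pairs $(u,v)\in A^m\times A^n$, whose edges carry the pair $(a,\phi(buavc))$, and whose matching relation is read off from the complement of $U$; projecting onto the second coordinate gives the presentation of $\Phi(X)$. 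Your route would instead prove the general preservation lemma first and then deduce this direction via Proposition~\ref{proposition.FTDareSD}. Both organizations are valid; the paper's has the advantage of producing the automaton for $\Phi(X)$ in one concrete step, while yours isolates a reusable fact.

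One caution: your sketch of that lemma (``relabel edges by a proper function and pass the matching relation along'') only handles $1$-block maps as stated. For a $1$-block map the relabeled automaton does present the image --- the key point, which you should make explicit, is that matched call/return pairs in an admissible path are pairwise disjoint, so each can be lifted independently to a matched pair in $M$. For a general $(m,n)$-block map you must first pass to a higher-block Dyck automaton (states remembering $m$ past and $n$ future symbols, tri-partition on blocks by the type of the central letter) before relabeling; this is precisely the content of the paper's $\A(\phi,F,U)$ construction, so you end up doing the same work, just packaged differently.
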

\begin{proof}
We first show that any sofic-Dyck shift is the image of a
finite-type-Dyck shift by a proper block map.

Let $\A=(\G,M)$ be a Dyck automaton accepting a sofic-Dyck shift $X$
over $A$ with $\G=(Q, E)$. Let $E = E_c \sqcup E_r \sqcup E_i$ be the tri-partitioned alphabet of
edges of $\A$ where $E_c$ (\resp $E_r$, $E_i$) is the set of call (\resp return, internal)
edges of $\A$.

We define a Dyck automaton $\B$ over $E$ as follows. The set of states of
$\B$ is the set of states $Q$ of $\A$. There is an edge $(p,e,q) \in \B$
if and only if $e$ is an edge of $\A$ starting at $p$ and ending in
$q$. A pair of edges $((p,e,q),(r,f,s))$ of $\B$ is matched if $(e,f)$
is a matched pair of $\A$.

Let $Y$ be the sofic-Dyck shift presented by $\B$. It is the set of
sequences avoiding
\begin{itemize}
\item $F = \{ ef \in E^2 \mid t(e) \neq s(f) \}$,
\item $U = \{ ((p,e,q),(r,f,s)) \in  E_c  \times  E_r \mid (e,f)
  \notin M\}$,
\end{itemize}
Since $F$ and $U$ are finite, the shift $Y$ is a finite-type-Dyck shift. 

Let $\Phi: E^\Z \xrightarrow{} A^Z$ be the $(0,0)$-block map defined by
$\phi : \mathcal{B}_1(Y) \xrightarrow{} A$ as follows. We set
$\phi(e) = a$ where $a$ is the label of the edge $e$ of $\A$. 
The map $\Phi$ is clearly a proper block map sending each bi-infinite
admissible path of $\A$ to its label. 
As a consequence $X= \Phi(Y)$.

We now prove that the image of a finite-type-Dyck shift by a proper block map is a
sofic-Dyck shift.

Let $\Phi: A^\Z \xrightarrow{} B^\Z$ be a proper block map and 
$X$ be a finite-type-Dyck shift of sequences over $A$.
Without loss of generality we
may assume that there are positive integers $m,n$ such that 
$\Phi$ is a proper $(m,n)$-block map and
$X$ is the
set of sequences avoiding two finite sets $F$ and $U$ with 
$F \subseteq A^{m+n+1}$ and $U \subseteq   (A^m \times A_c \times
A^{n}) \times  (A^m \times A_r \times A^{n})$. 

Let $\phi: A^{m+n+1} \xrightarrow{} B $ be the function defining $\Phi$.
We define the Dyck automaton
$\A(\phi,F,U)=(\G,M)$ over $A\times B$ as follows. Let us denote $\G=(Q,E)$. 
We set
\begin{itemize}
\item $Q = \{ (u,v) \mid  u \in A^m, v \in  A^n\}$,
\item $E = \{ ((bu,av), (a, \phi(buavc)), (ua,vc)) \mid a,b,c \in A, u \in A^{m-1}, v
  \in A^{n-1} \text{ and } buavc \notin F\}$,
\item $M$ is the set of pairs of edges $(e,f)$ with $e=((bu,av), (a, \phi(buavc)),$
  $(ua,vc))$, 
$f=((b'u',a'v'), (a', \phi(b'u'a'v'c')),
  (u'a',v'c'))$, where $a \in A_c$,  $a' \in A_r$, $b, b', c, c' \in A$,
$u,u' \in A^{m-1}, v,v' \in A^{n-1}$ and the pair $(bu, a, vc), (b'u', a',
v'c') \notin U$.
\end{itemize}
Let $\A_1$ (\resp $\A_2$) be the Dyck automaton obtained by removing
the second (\resp first) components of the labels of the edges. 
The Dyck automaton $\A_1$ is a presentation of $X$. Further, if $x \in
X$, there is a unique admissible path of $\A_1$ labeled by $x$. Indeed, 
each factor of a bi-infinite path of $\A_1$ labeled by $uv$ with $u
\in A^m$,  $v \in  A^n$,
goes through the state $(u,v)$ after reading $u$.
A pair of bi-infinite sequences $(x,y)$ is the label of a bi-infinite
admissible path of $\A(\phi,F,U)$ if and only if $x \in X$ and $\Phi(x)=y$. Hence $\A_2$ is a
presentation of $\Phi(X)$ which is thus sofic-Dyck.
\end{proof}

\begin{proposition} \label{proposition.image}
The image of a sofic-Dyck shift by a proper block map is a
sofic-Dyck shift.
\end{proposition}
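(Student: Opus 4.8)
The plan is to deduce this from Proposition~\ref{proposition.SDimageOfFTD}, whose proof already establishes that the image of a finite-type-Dyck shift under a proper block map is a sofic-Dyck shift. First I would take a sofic-Dyck shift $X$ over a tri-partitioned alphabet $A$ together with a proper block map $\Psi\colon A^\Z \xrightarrow{} B^\Z$. By Proposition~\ref{proposition.SDimageOfFTD} applied to $X$, there is a finite-type-Dyck shift $Y$ over some tri-partitioned alphabet $C$ and a proper block map $\Phi\colon C^\Z \xrightarrow{} A^\Z$ with $X = \Phi(Y)$. Then $\Psi(X) = \Psi(\Phi(Y)) = (\Psi\circ\Phi)(Y)$, so it suffices to check that $\Psi\circ\Phi$ is again a proper block map and to invoke Proposition~\ref{proposition.SDimageOfFTD} once more.

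The main (and essentially only) step is the verification that proper block maps are closed under composition. That the composition of two block maps is a block map is standard: if $\Phi$ is an $(m_1,n_1)$-block map given by a local rule $\varphi$ and $\Psi$ is an $(m_2,n_2)$-block map given by a local rule $\psi$, then for every $x \in Y$ and $i \in \Z$ one has
\begin{equation*}
(\Psi\circ\Phi)(x)_i = \psi\bigl(\Phi(x)_{i-m_2}\dotsm\Phi(x)_{i+n_2}\bigr),
\end{equation*}
and each symbol $\Phi(x)_j$ with $i-m_2 \le j \le i+n_2$ is a function of $x_{j-m_1}\dotsm x_{j+n_1}$; hence $(\Psi\circ\Phi)(x)_i$ is a function of $x_{i-m_1-m_2}\dotsm x_{i+n_1+n_2}$, so $\Psi\circ\Phi$ is an $(m_1+m_2,n_1+n_2)$-block map. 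Properness passes to the composite: if $x_i \in C_c$ then $\Phi(x)_i \in A_c$ because $\Phi$ is proper, and then $(\Psi\circ\Phi)(x)_i = \Psi(\Phi(x))_i \in B_c$ because $\Psi$ is proper; the same argument with $A_r, B_r$ and with $A_i, B_i$ shows that $\Psi\circ\Phi$ sends each call (\resp return, internal) position of $x$ to a call (\resp return, internal) position.

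Putting these together, $\Psi(X) = (\Psi\circ\Phi)(Y)$ is the image of the finite-type-Dyck shift $Y$ under the proper block map $\Psi\circ\Phi$, hence a sofic-Dyck shift by Proposition~\ref{proposition.SDimageOfFTD}. I do not expect a genuine obstacle here: everything reduces to the composition bookkeeping above, together with the already proved characterization of sofic-Dyck shifts in Proposition~\ref{proposition.SDimageOfFTD}. One could instead argue directly by a product/relabelling construction on a given Dyck automaton presentation of $X$, mirroring the second half of the proof of Proposition~\ref{proposition.SDimageOfFTD}, but the reduction via Proposition~\ref{proposition.SDimageOfFTD} is shorter and cleaner.
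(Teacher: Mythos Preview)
Your proposal is correct and follows essentially the same route as the paper: factor the sofic-Dyck shift as the image of a finite-type-Dyck shift under a proper block map via Proposition~\ref{proposition.SDimageOfFTD}, compose the two proper block maps, and apply Proposition~\ref{proposition.SDimageOfFTD} again. The paper states this in three lines, taking for granted the closure of proper block maps under composition that you spell out explicitly.
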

\begin{proof}
Let $\Phi$ a proper block map from a sofic-Dyck shift $X$ onto $Y$.
By Proposition \ref{proposition.SDimageOfFTD}, $X$ is the image of a
finite-type-Dyck shift $S$ by a proper block map $\Psi$. The map $\Phi
\circ \Psi: S \xrightarrow{} Y$ is a proper block map and thus its image
$Y$ is a sofic-Dyck shift by Proposition \ref{proposition.SDimageOfFTD}.
\end{proof}

The following corollary is a direct consequence of Proposition \ref{proposition.image}.
\begin{corollary}
The class of sofic-Dyck shifts is invariant by proper conjugacy.
\end{corollary}

We prove below that the same result holds for finite-type-Dyck shifts.

\begin{proposition} \label{proposition.FTDinvariant}
The class of finite-type-Dyck shifts is invariant by proper conjugacy.
\end{proposition}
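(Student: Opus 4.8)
The plan is to mimic the structure of Proposition~\ref{proposition.image}, but in a way that goes through finite-type-Dyck shifts rather than sofic-Dyck shifts, so that the finite-type property is preserved. So suppose $\Phi : X \to Y$ is a proper conjugacy, with $X$ a finite-type-Dyck shift. The inverse $\Phi^{-1} : Y \to X$ is also a block map, and it is proper as well: indeed, since $\Phi$ is proper, $\Phi$ sends call (resp.\ return, internal) symbols to call (resp.\ return, internal) symbols, and because $\Phi$ is a bijection between $X$ and $Y$, the inverse must send call (resp.\ return, internal) symbols back to call (resp.\ return, internal) symbols — this needs a short argument since properness of $\Phi$ alone does not immediately give properness of $\Phi^{-1}$, but it follows because the type of $y_i = \Phi(x)_i$ is determined by the type of $x_i$, so conversely the type of $x_i = \Phi^{-1}(y)_i$ is determined by the type of $y_i$.

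Next I would set up coordinates. Without loss of generality $X$ is the set of sequences over $A$ avoiding finite sets $F \subseteq A^{m+n+1}$ and $U \subseteq (A^m \times A_c \times A^n) \times (A^m \times A_r \times A^n)$, and $\Phi$ is a proper $(m,n)$-block map given by $\phi : A^{m+n+1} \to B$; by increasing $m,n$ we may also assume $\Phi^{-1}$ is a proper $(m,n)$-block map given by $\psi : B^{m+n+1} \to A$. The idea is now to run the construction from the proof of Proposition~\ref{proposition.SDimageOfFTD}: form the Dyck automaton $\A(\phi,F,U)$ over $A \times B$ whose states are pairs $(u,v) \in A^m \times A^n$, with edges carrying both the $A$-label and its image under $\phi$, and let $\A_2$ be the presentation of $Y = \Phi(X)$ obtained by forgetting the first components. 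This shows $Y$ is sofic-Dyck; the point is to show it is in fact \emph{finite-type}-Dyck. For that I would exhibit finite sets $F' \subseteq B^{m'+n'+1}$ and $U' \subseteq (B^{m'} \times B_c \times B^{n'}) \times (B^{m'} \times B_r \times B^{n'})$ such that $Y$ is exactly the shift avoiding $F'$ and $U'$. The natural choice: $F'$ is the set of $B$-words $w$ of length $m+n+1$ that are not of the form $\phi$ applied along any edge of the automaton, together with words $w$ such that $\psi$-decoding $w$ produces a forbidden $A$-factor from $F$; and $U'$ is the set of pairs of a call-edge-context and a return-edge-context in $B$-coordinates whose $\psi$-preimages form a pair in $U$. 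Crucially, because $\Phi^{-1}$ is a genuine block map with a \emph{bounded} window, the $A$-symbol sitting at a given position of $x$ depends only on a bounded window of $y = \Phi(x)$, so the "forbidden" data $F$ and $U$ on the $A$-side pull back to genuinely \emph{finite} forbidden data on the $B$-side — this is exactly what fails in the sofic (non-finite-type) case and is what makes the finite-type class closed under proper conjugacy but (a priori) the argument for the general sofic-Dyck class needs the detour through Proposition~\ref{proposition.image}.

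The main obstacle, and the step deserving the most care, is verifying that the pulled-back sets $F'$ and $U'$ capture \emph{exactly} the sequences not in $Y$ — equivalently, that $y$ avoids $F'$ and $U'$ if and only if $y = \Phi(x)$ for some $x \in X$. One direction is routine: if $y = \Phi(x)$ with $x \in X$, then running $\psi$ recovers $x$ (since $\psi = \phi^{-1}$ as block maps on these shifts), and the avoidance of $F$ and $U$ by $x$ forces avoidance of $F'$ and $U'$ by $y$. The converse is the delicate part: given $y$ avoiding $F'$ and $U'$, one must produce $x$ with $\Phi(x) = y$, and the candidate is of course $x_i = \psi(y_{i-m}\cdots y_{i+n})$; one then has to check that this $x$ avoids $F$ and $U$ (which is arranged by the definition of $F', U'$) \emph{and} that $\Phi(x) = y$, i.e.\ that $\phi$ and $\psi$ really are mutually inverse when read off along admissible paths — this last point uses that $\Phi \circ \Phi^{-1} = \mathrm{id}_Y$ as a block map identity, which after enlarging the windows becomes a purely local (finite-check) condition that we may fold into $F'$. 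A subtlety to watch: the Dyck/matching condition on the $B$-side must line up with the one on the $A$-side, and here properness of both $\Phi$ and $\Phi^{-1}$ is what guarantees that a $B$-word is a Dyck word exactly when its $\psi$-preimage is, so the "$w \in \Dyck(A)$" clause in the definition of avoiding $U$ transfers correctly between the two alphabets.
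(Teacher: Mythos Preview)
Your strategy is correct and in fact more direct than the paper's, though the two share the same definition of $U'$.

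Both you and the paper pull back the matching constraints via $\psi$: one sets $U'=\{(u',v'):(\psi(u'),\psi(v'))\in U\}$ on $B$-contexts of width $(m+m',n+n')$. The divergence is in the choice of $F'$ and in the verification of $Z\subseteq Y$. The paper takes the blunt choice $F'=B^{5r}\setminus\B_{5r}(Y)$, i.e.\ it forbids \emph{every} non-block of a fixed large length, and then must show that a sequence $z$ all of whose length-$5r$ factors lie in $\B(Y)$ and which avoids $U'$ actually has \emph{all} of its factors in $\B(Y)$. This is done by an inductive overlapping argument: two length-$jr$ factors of $z$ that overlap on a sufficiently long Dyck-compatible piece are glued, via the deterministic Dyck automaton $\A(\phi,F,U)$, into a longer block of $Y$, bootstrapping from $j=5$ upward.

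Your route avoids that induction entirely: you take $F'$ to be the $\psi$-pullback of $F$ together with the finitely many windows on which the local identity $\phi\circ\psi=\mathrm{id}$ fails, and then for any $z$ avoiding $F',U'$ you simply set $x=\Psi(z)$, observe $x\in X$ (from the pullback parts) and $\Phi(x)=z$ (from the local-identity part), whence $z\in Y$. This is exactly the classical proof that shifts of finite type are conjugacy-invariant, with properness of both $\Phi$ and $\Phi^{-1}$ doing the extra work of transporting the Dyck condition in the clause ``$w\in\Dyck(A)$'' faithfully between the two alphabets. The argument is shorter and conceptually cleaner; the paper's block-growth induction buys nothing additional here, though its choice of $F'$ has the minor aesthetic advantage of not depending on an arbitrary extension of $\psi$ to non-blocks.

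Two small points to tighten in a write-up: your description of $F'$ as ``$B$-words not of the form $\phi$ applied along any edge of the automaton'' is imprecise (edges carry single letters); what you need there is exactly the local $\phi\circ\psi=\mathrm{id}$ constraint you identify later. And when you extend $\psi$ off $\B(Y)$, do it \emph{properly} (send $B_c$-centered windows into $A_c$, etc.) so that the type-preservation you invoke for the Dyck condition genuinely holds on all of $B^{\Z}$, not just on $Y$.
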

\begin{proof}
Let $X$ be a finite-type-Dyck shift over $A$ which is properly conjugate to a 
shift $Y$ over $B$. Let $\Phi$ be a proper block map from $A^\Z$ to $B^Z$ 
that induces a conjugacy from $X$ to $Y$.
Without loss of generality we
may assume that there are positive integers $m,n$ such that 
$\Phi$ is a proper $(m,n)$-block map and
$X$ is the
set of sequences avoiding two finite sets $F$ and $U$ with 
$F \subseteq A^{m+n+1}$ and $U \subseteq   (A^m \times A_c \times
A^{n}) \times  (A^m \times A_r \times A^{n})$. 

Let $\Psi= \Phi^{-1}: Y \rightarrow X$ be the proper $(m',n')$-block map
inverse of $\Phi$ and $\psi$ the block function of $\Psi$. 
It induces a map (still denoted by $\psi$) from $B^{m'+m+1+n + n'}$
to $A^{m+1+n}$.
We set $d=m'+m$, $k = n' + n$, $r = d + 1 + k$. 
Let $F' = B^{5r} \setminus \B_{5r}(Y)$ and let $U' \subseteq
(B^d \times B_c \times B^{k}) \times  (B^d \times B_r \times B^{k})$
be the set of pairs $(u',v')$ such that $(\psi(u'),\psi(v')) \in U$.
Let us show that $Y$ is the set $Z$ of sequences avoiding $F'$ and $U'$. 

By construction, $Y \subseteq Z$. Let now $z \in Z$.  We prove by induction that each factor of length $jr$ belong to
$\B(Y)$ for $j \geq 5$. We first have by definition of $Z$ that each factor of length $5r$
belongs to $\B(Y)$. Assume now that each factor of $z$  of length $jr$
belongs to $\B(Y)$ for some $j \geq 5$. 
Let $z'$ be a factor of $z$ of length $2(j-1)r$ decomposed as
$z' = u'_1u'_2 u'_3w' v'_3v'_2v'_1$ with $|u'_1|= |u'_2| = |v'_1| = |v'_2| = r$, $|w'| =
2r$ and $|u'_3| = |v_3| = (j-4)r$. 
The factors $u'=u'_1u'_2 u'_3w'$ and $v'=w' v'_3v'_2v'_1$ of $z'$ are of
length $jr$ and are assumed to be blocks of $Y$.
We set $u = u_1u_2 u_3w_1= \psi(u')$, where $|u_1|=
|u'_1|-m'$ and $|w_1|= |w'|-n'$ and
$v =w_2v_3v_2v_1 = \psi(v')$, where $|w_2|=
|w'|-m'$ and $|v_1|= |v_1|-n'$. 
Note that $w_1[m',|w_1|-1] = w_2[0,|w_2|-n']$. Hence $w_1$ and $w_2$
overlap on a part $w$ of length at least $|w'|-m'-n' =2r-m'-n' \geq m+n$. Since $u', v' \in
\B(Y)$, we have $u, v \in \B(X)$.

Let $\A(\phi,F,U)$ be the Dyck automaton defined in the proof of Proposition
\ref{proposition.SDimageOfFTD}. A pair of bi-infinite sequences $(x,y)$ is the label of a bi-infinite
admissible path of $\A(\phi,F,U)$ if and only if $x \in X$ and
$\Phi(x)=y$. Further, all finite paths of the input Dyck automaton
$\A_1$ of $\A(\phi,F,U)$ which are labeled
by a given block $x_1x_2 \in \B(X)$ with $|x_1| = m$ and
$|x_2| = n$ go through the same state 
after reading
$x_1$. As a consequence, since $|w| \geq m+n$, there is a path in
$\A_1$ labeled by $x'= u_1u_2 u_3w_1tv_3v_2v_1=\psi(z')$,
where $w_2=wt$. 
Since $z'$ avoid $U'$, we have $x'$ avoids $U$. Hence $x' \in \B(X)$,
implying $\phi(x') = u'_0u'_2u'_3w'v'_3v'_2v'_0$,
where $u'_0$ is the suffix of $u'_1$ of length 
$|u'_1|-m'-m$ and $v'_0$ is the prefix of $v'_1$ of length 
$|v'_1|-n'-n$. 
We obtain that $u'_2u'_3w'v'_3v'_2 \in B(Y)$.
Hence each
factor of length $2(j-2)r$ of $z$ belongs to $\B(Y)$ and $2(j-2)r
\geq (j+1)r$ for $j \geq 5$.  This proves that
each factor of $z$ belongs to $\B(Y)$. We get $Z  = Y$ and $Y$ is 
a finite-type-Dyck shift.
\end{proof}




\section{Presentations of sofic-Dyck shifts}

In this section we define several particular presentations of
sofic-Dyck shifts which will be useful for the computation of zeta function.

A Dyck automaton is \emph{deterministic}\footnote{Deterministic
  presentations are also called \emph{right-resolving} in \cite{LindMarcus1995}.} if
there is at most one edge starting in a given state and with a given
label. Sofic shifts (see \cite{LindMarcus1995}) always have a deterministic
presentation. Although visibly pushdown languages are accepted by deterministic
visibly pushdown automata \cite{AlurMadhusudan2009}, sofic-Dyck shifts
may not be presented by any deterministic Dyck automaton as is shown
in Example \ref{example.example2}.
Indeed, the two notions of determinism do not match. The notion of
 determinism for visibly pushdown languages includes the stack
symbol as input for return transitions of visibly pushdown automata.


Let $\A$ be a Dyck automaton. We define the \emph{left reduction} of
$\A$ as the Dyck automaton obtained through some determinization
process. The process is an adaptation to Dyck automata of the determinization 
of visibly pushdown automata \cite{AlurMadhusudan2004}. It is
sketched in \cite{BealBlockeletDima2014b} and we detail it here.

Let $\A=(\G,M)$ with $\G=(Q,E)$ be a Dyck automaton over $A$.
We define a Dyck automaton $\D=(\H,N)$ over $A$, where $\H=(Q',E')$
with $Q'= \Pgoth(Q \times Q)
\times \Pgoth(Q)$ and $\Pgoth(Q)$ is the
set of subsets of $Q$. States are pairs $(S,R)$ where $S$ is called
the \emph{summary}\footnote{The definition of summaries differs
  slightly from the one given in \cite{AlurMadhusudan2004}.}
of the state and $R$ is a nonempty subset of $Q$. 
The state $I=(\emptyset, Q)$ is called the initial state.
For each state $(S,R)$, the set $S$ is empty if and only 
the admissible paths going from $I$ to $(S,R)$ are labeled by a matched-call word.
It is nonempty if all admissible paths going from $I$ to $(S,R)$ are of the form
\begin{equation*}
I \xrightarrow{u} (S",R") \xrightarrow{a} (T,U) \xrightarrow{w} (S,R),
\end{equation*}
where $a \in A_c$ and $w$ is a Dyck word.
If there is such a path, the summary $S$ of the state $(S,R)$ is the set of
pairs $(p,q)$ in $U \times R$ such that there is an admissible path of $\A$
labeled by the Dyck word $w$ from $p$ to $q$. In both cases, if there is a
path labeled by $v$ in $\D$ from $I$ to $(S,R)$, then $R$ is the set
of states $q$ such that there is an admissible path in $\A$ labeled
by $v$ ending in~$q$.

For a subset $R$ of $Q$, we  denote by $\Diag(R)$ the set of
all pairs $(p,p)$ for $p \in R$. 
The edges of $\D$ are defined as follows.

\begin{itemize}
\item For every $\ell \in A_i$, $((S,R),\ell,(S',R')) \in E'$ if
  $S'=\{(p,q) \mid \exists r \in Q,  (p,r) \in S, (r,\ell,q) \in E\}$
and 
$R'=\{q \mid \exists p \in R, (p,\ell,q) \in E\}$ is nonempty.
\item For every $a \in A_c$, $((S,R), a, (\Diag(R'),R')) \in E'$ if
$R'=\{q \mid \exists p \in R, (p,a,q) \in E\}$ is nonempty.
\item For every $b \in A_r$, the edges stating from $(S,R)$ with $S
  \neq \emptyset$ labeled by $b$ are defined as follows. 
For any edge $((S'',R''), a, (T,U))$ with $a \in A_c$ we define
\begin{itemize}
\item $\Update = \{(p,p') \mid \exists p_1,p_2 \colon (p,a,p_1) \in E,
(p_1,p_2) \in S, (p_2,b,p') \in E, ((p,a,p_1), (p_2,b,p')) \in M\}$,
\item $S'=\{(p,q) \mid \exists p', (p,p') \in S", (p',q) \in
\Update\}$,
\item $R'=\{q \mid \exists p \in R", (p,q) \in \Update \}$.
\end{itemize}
If $R'$ is not empty, we define an edge $((S,R), b, (S',R')) \in E'$
and set this edge matched with $((S",R"), a, (T,U))$.
\item For every $b \in A_r$, we define an edge $((\emptyset,R), b, (\emptyset,V)) \in E'$
where 
$V=\{q \mid \exists p \in R, (p,b,q) \in E\}$ is nonempty. This
return edge is not
matched with any call edge.
\end{itemize}
We only keep in $\D$ the states reachable from $I$.

\begin{proposition}
The left reduction of a Dyck automaton $\A$ presents the same
sofic-Dyck shift as $\A$.
\end{proposition}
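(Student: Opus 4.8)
The plan is to show that the left reduction $\D$ of $\A$ presents the same sofic-Dyck shift $X$ by establishing the two inclusions between the sets of labels of bi-infinite admissible paths, using the characterization from Lemma~\ref{lemma.compacity}: it suffices to show that a finite word $w$ is the label of a finite admissible path of $\A$ (with both endpoints extendable appropriately, though by compacity we only need the factor-wise statement) if and only if it is the label of a finite admissible path of $\D$. Actually, the cleanest route is to prove that $\B(X_\A) = \B(X_\D)$, where $X_\A$ and $X_\D$ are the presented shifts, and for that the key lemma is a precise statement of the invariant claimed informally in the construction: for any admissible path $I \xrightarrow{v} (S,R)$ in $\D$, the set $R$ equals $\{q \in Q \mid \exists\ \text{admissible path of }\A\text{ labeled }v\text{ ending in }q\}$, and when $v \notin \MC(A)$ the summary $S$ records the Dyck-word connections inside the last pending call as described.

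First I would set up this invariant carefully and prove it by induction on $|v|$, splitting into the four cases of the edge definition (internal, call, matched return, unmatched return). The internal and call cases are direct. The subtle cases are the return cases: for a matched return $b$, the edge out of $(S,R)$ is indexed by a call edge $((S'',R''),a,(T,U))$ of $\D$, and I must verify that the $\Update$, $S'$, $R'$ formulas correctly compose "an admissible path reaching the state just before the matching $a$" with "a Dyck-labeled admissible path through the summary" with "the matched edge $((p,a,p_1),(p_2,b,p'))$ of $M$". Here I would need the observation that in $\D$ the summary $S$ of the current state, combined with $S''$ and the call edge, exactly reconstructs which $\A$-paths labeled by the already-read word can be extended by $b$ consistently with the matching constraint of $M$; this is where the definition of admissible path (matching only required when the intervening label is a Dyck word) is used, and it is the heart of why left reduction preserves admissibility rather than some coarser property.

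Once the invariant is established, both inclusions follow. If $w = w_1 \cdots w_\ell$ is the label of a finite admissible path of $\A$, then starting from $I$ in $\D$ and reading $w$ we never get stuck (each $R$ along the way is nonempty because it contains the endpoints of the $\A$-path), and the resulting path of $\D$ is admissible because the matching in $\D$ was defined precisely to mirror $M$ — a matched pair of $\D$-edges straddling a Dyck word forces, by the invariant, a matched pair of $\A$-edges, and conversely every matched pair of $\A$-edges in the path is reflected by the construction, so no admissibility violation is introduced. Conversely, if $w$ labels an admissible path of $\D$ from $I$, the invariant gives at least one admissible $\A$-path labeled $w$ (pick any $q \in R$ at the end, and unwind; admissibility of this $\A$-path follows since any would-be unmatched pair spanning a Dyck factor would have been excluded when $\D$ refused to mark the corresponding return edge matched, or would contradict the $\Update$ condition). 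Then invoking Lemma~\ref{lemma.compacity} on both sides, the bi-infinite admissible paths have the same labels, so $X_\D = X_\A$.

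The main obstacle I expect is the bookkeeping in the matched-return case: one must be scrupulous that the summary $S$ at a state genuinely ranges over \emph{all} pairs $(p,q)$ connected by a Dyck-labeled admissible $\A$-path within the current pending call (not fewer, not more), because an over-approximation would create spurious matchings and an under-approximation would drop legitimate ones, and this has to be threaded through the inductive step where the Dyck word is built up by balanced insertions. A secondary technical point is handling non-extendable finite admissible paths correctly — but since we route everything through Lemma~\ref{lemma.compacity}, which speaks only of finite factors being labels of finite admissible paths, this never actually bites: we only ever need the factor-level equivalence, and the compacity argument does the rest.
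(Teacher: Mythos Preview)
Your proposal is correct and follows essentially the same approach as the paper: both prove, by induction on the length of the label with the matched-return case as the crux where the summary $S$ is used, that the sets of labels of finite admissible paths of $\A$ and of $\D$ coincide, and then conclude via Lemma~\ref{lemma.compacity}. The paper phrases its inductive hypothesis for an arbitrary start state in $\D$ rather than only for paths from $I$, but this is cosmetic since (as the paper remarks immediately after the proof) every label of a finite admissible path of $\D$ is also the label of one starting at $I$.
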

\begin{proof}
Let $X$ be the sofic-Dyck presented by $\A$ and $\D$ be the left
reduction of $\A$.
Let $v$ be the label of an admissible path of $\A$ going from $p$ to
$q$. By construction there is an admissible path of $\D$ labeled by $v$
going from $I$ to some state $(S,R)$ with $q \in R$. 
Thus labels of finite admissible paths of $\A$ are labels of finite 
admissible paths of $\D$. 

Conversely, let $v$ be the label of some finite admissible path $\pi$ of
$\D$. We claim that $v$ is the label of an admissible path of $\A$.

We prove the claim by recurrence on the length of $v$.  It is 
true if $v$ is the empty word.
Let $v = uc$ where $c \in A$ and 
$\pi = (S_1,R_1) \xrightarrow{u} (S,R) \xrightarrow{c} (S',R')$.
By induction hypothesis, we assume that for any state $r \in R$
there is an admissible path labeled
by $u$ from some state $q\in R_1$ to $r$.
If the edge $((S,R),c,(S',R'))$ is a call or internal edge or is a
return edge not matched with a call edge of $\pi$, the result holds 
by construction for $uc$.
Let us assume that 
\begin{equation*}
\pi = (S_1,R_1) \xrightarrow{u} (S",R") \xrightarrow{a} (T,U) \xrightarrow{w} (S,R)
\xrightarrow{b} (S',R'),
\end{equation*}
where $w$ is a Dyck word over $A$, $a \in A_c$, and $b \in A_r$.
By induction hypothesis, we assume that for any state $p \in R"$
there is an admissible path $q \xrightarrow{u} p$
in $\A$ for some $q \in R_1$.

For any $r \in R'$ there are $ p \in R''$ and $(p_1,p_2) \in S$ such
that $(p, a, p_1)$ and $(p_2, b, r)$ are matched in $\A$. Further, $S$ is the set
of pairs $(s,s') \in U\times R$ such that there is an admissible path
in $\A$ labeled by
$w$ from $s$ to $s'$.  It follows that there is in $\A$ an admissible
path labeled by $w$ from $p_1$ to $p_2$ and thus an admissible path
$q \xrightarrow{u} p \xrightarrow{a} p_1 \xrightarrow{w} p_2 \xrightarrow{b} r$ in $\A$
which concludes the proof of the claim.
\end{proof}
Note that since the label of an admissible path of $\D$ is the label
of an admissible path of $\A$, each label of an admissible path of
$\D$ is the label of an admissible path of $\D$ starting at $I$. 

We similarly define the
\emph{right reduction} of $\A$ with a co-determinization of 
$\A$ and an exchange of roles played by call and return
edges. Note the left reduction of $\A$ may have more states
than $\A$.

Let $L$ be a language of finite words. A Dyck automaton is
\emph{$L$-deterministic} if there is at most one admissible path starting in a
given state and with a given label in $L$.

By construction the left reduction of a Dyck automaton is
$A_c$-deterministic and $A_i$-deterministic.

A Dyck automaton is \emph{weak-deterministic} if there is a state $I$
such that for any word $u$ there is at most one admissible path
labeled by $u$ starting at $I$.

\begin{proposition}\label{proposition.weakDeterministic}
The left reduction of a Dyck automaton is weak-deterministic. 
\end{proposition}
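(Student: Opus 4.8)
The plan is to show that starting from the designated initial state $I=(\emptyset,Q)$, the left reduction $\D$ has at most one admissible path labeled by any given word $u$. Since determinism was already observed for call and internal edges, and since the construction of $\D$ was explicitly set up so that each state reached from $I$ records, in its second component $R$, exactly the set of $\A$-states accessible by the label read so far, the only genuine issue is the behavior of return edges: a state $(S,R)$ of $\D$ may be the target of several matched return edges (one per choice of the matching call edge $((S'',R''),a,(T,U))$ lying "above" the current position on the stack), and also of an unmatched return edge when $S=\emptyset$. So the crux is that, at a given position in an admissible path from $I$, the stack is forced, hence the call edge to be matched is forced, hence there is at most one outgoing (equivalently incoming) return edge consistent with the path.

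First I would make precise the invariant that makes this work: along any admissible path $I\xrightarrow{v}(S,R)$ in $\D$, the sequence of states visited is determined by $v$ together with the sequence of states visited at the positions that are still "open" (the unmatched call positions, i.e.\ the current stack contents). This is proved by induction on $|v|$ using the four clauses of the edge definition. For internal and call letters the next state is a function of the current state alone; for a return letter $b$ read at a position whose matching call occurred from state $(S'',R'')$ via edge $((S'',R''),a,(T,U))$, the new summary $S'$ and new reachable set $R'$ are computed by the $\Update$/$S'$/$R'$ formulas purely from $(S,R)$, $(S'',R'')$ and the edge $a$; and when the stack is empty at that point (so $S=\emptyset$), the unmatched-return clause gives $R'$ as a function of $R$ alone and $S'=\emptyset$. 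Thus the whole path is pinned down once we know which earlier transition is being matched — and the visibly-pushdown discipline (call/return/internal letters are fixed by the input word) forces the matching position: in $vb$ with $b\in A_r$ read at the end, the matched call position is the position of the last unmatched call letter of $v$, which depends only on the word $v$, not on the run.

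Second, I would combine the invariant with the observation, recorded in the note just before Proposition~\ref{proposition.weakDeterministic}, that every admissible path of $\D$ has the same label as one starting at $I$; here however we need the stronger uniqueness statement, so I would argue directly: suppose $I\xrightarrow{u}(S_1,R_1)$ and $I\xrightarrow{u}(S_2,R_2)$ are two admissible paths with the same label $u$. Decompose $u$ according to its call/return/internal structure and proceed by induction on $|u|$, splitting into the four cases for the last letter. In the internal and call cases the two paths agree on their last edge and we conclude by the induction hypothesis applied to the prefix. In the return case, the last letter $b\in A_r$: if $u$ has an unmatched call letter, both runs must use the matched return edge whose partner is the edge taken at the (word-determined) matching position — and by the induction hypothesis the two runs coincide up to that position, so the partner edge is the same, so by the $\Update$ formula the last edge is the same; if $u$ has no unmatched call letter, both runs have $S=\emptyset$ at the previous state and both must use the unmatched-return edge, whose target is a function of $R$ alone, again forced. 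Hence $(S_1,R_1)=(S_2,R_2)$ and the two paths are equal.

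The step I expect to be the main obstacle is the bookkeeping in the return-edge case: one must verify carefully that the matched call edge of $\D$ associated to a given position is uniquely determined — i.e.\ that two admissible runs on the same word, once they agree on the prefix up to the matching call position (which the induction hypothesis gives), must have taken the \emph{same} call edge there, and that the $\Update$ computation then leaves no freedom in the return edge. This is where one uses that $\D$ is already $A_c$-deterministic (so the call edge at the matching position is unique given the source state) together with the nesting structure of Dyck words to locate that position from the word alone. Once this is pinned down, the rest is a routine induction on word length over the four edge clauses, and no computation beyond unwinding the definitions of $S'$, $R'$ and $\Update$ is needed.
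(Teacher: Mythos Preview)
Your approach is essentially the paper's: reduce to the return-letter case via $A_c$- and $A_i$-determinism, then use that the two runs (agreeing on the prefix by induction/minimality) took the same call edge at the word-determined matching position, so the $\Update$ formula forces the same return edge. The paper packages this as a minimal-counter-example argument rather than a direct induction, but the content is identical.

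One slip to fix: your case split in the return step should be on whether the \emph{prefix} $v$ (with $u=vb$) has an unmatched call, not on whether $u$ does. If $u=ab$ with $a\in A_c$ matched to $b\in A_r$, then $u$ has no unmatched call letter, yet the state reached after reading $v=a$ has nonempty summary and the edge on $b$ is the matched one, not the unmatched-return clause. The correct dichotomy (which the paper uses) is: $b$ is matched in $u$ iff $v$ contains an unmatched call; otherwise $v$ is matched-call, the previous state has $S=\emptyset$, and the unmatched-return clause applies. With that correction your argument goes through verbatim.
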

\begin{proof}
Let $\D$ be the left reduction of a Dyck automaton $\A$ and let $I$ be
the initial state of $\D$. Let us suppose that the property is false.
We consider two minimal-length distinct admissible paths starting at $I$ and
sharing the same label. 
\begin{align*}
I &\xrightarrow{u} (S,R) \xrightarrow{b} (T,U), \\
I &\xrightarrow{u} (S',R') \xrightarrow{b} (T',U'), \\
\end{align*}
with $b \in A$ and $(T,U) \neq (T',U')$.
We may assume $(S,R) =(S',R')$ since these
paths are of minimal length. Since $\D$ is $A_c$-deterministic and
$A_i$-deterministic, we may assume that $b \in A_r$.
By definition, we have $U=U'$.
If $ub$ is matched-call, $T$ and $T'$ are empty, hence
$(T,U) = (T',U')$.
If $ub = u'awb$, where $w$ is a Dyck word and $a \in A_c$, the two above paths are
\begin{align*}
I &\xrightarrow{u'} (S_1,R_1) \xrightarrow{a}  (S_2,R_2)  \xrightarrow{w} (S,R) \xrightarrow{b} (T,U), \\
I &\xrightarrow{u'} (S_1,R_1)  \xrightarrow{a} (S_2,R_2)
\xrightarrow{w}  (S,R) \xrightarrow{b} (T',U). \\
\end{align*}
Since the paths are admissible, $((S,R),b,(T,U))$ is matched with
$((S_1,R_1), a,$ $(S_2,R_2))$ and $((S,R),b,(T',U))$ is matched with
$((S_1,R_1),a, (S_2,R_2))$. By definition of the summary we get $T =
T'$, a contradiction.
\end{proof}

\begin{corollary} \label{proposition.DyckDeterministic}
The left reduction of a Dyck automaton over $A$ is $Dyck(A)$-deterministic. 
\end{corollary}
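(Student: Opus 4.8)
The plan is to derive this as a straightforward consequence of Proposition~\ref{proposition.weakDeterministic} (weak-determinism of the left reduction) together with the fact, noted just after the proof that the left reduction presents the same shift as $\A$, that every label of an admissible path of $\D$ is the label of an admissible path of $\D$ \emph{starting at the initial state $I$}. So let $\D$ be the left reduction of $\A$, let $I$ be its initial state, and let $w \in \Dyck(A)$. Suppose there were two distinct admissible paths in $\D$ labeled by $w$ and starting in the same state $(S,R)$.

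The key step is to ``lift'' the common starting state $(S,R)$ to $I$ so that Proposition~\ref{proposition.weakDeterministic} applies. Since $(S,R)$ is reachable from $I$ (we only keep reachable states), pick an admissible path $I \xrightarrow{v} (S,R)$ for some word $v$. The concern is whether prepending this prefix keeps the two continuations admissible and keeps them labeled distinctly; the label issue is trivial since the labels on the edges are unchanged, and admissibility of the concatenation follows because $w$ is a Dyck word, so no new matched-pair constraint is created that straddles the junction between $v$ and $w$ (a call in $v$ can only be ``closed'' by a return in $v$, as the portion $w$ after it that would need to be a Dyck word is itself Dyck only when it ends before any such return — more precisely, any factor $(p,a,q)\cdot\pi_1\cdot(r,b,s)$ with $\pi_1$ labeled by a Dyck word and $a$ a call, $b$ a return, that lies across the junction, already appeared inside $vw$ read as a path of $\D$, which is admissible by hypothesis on the two continuations). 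Thus we obtain two distinct admissible paths $I \xrightarrow{vw} \cdot$ sharing the label $vw$, contradicting weak-determinism.

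The main obstacle I anticipate is making the admissibility-of-concatenation argument fully rigorous: one must check that every ``matched-pair witness'' (a factor consisting of a call edge, a Dyck-labeled subpath, and a return edge) occurring in the concatenated path $v \cdot w$ either lies entirely within $v$, or entirely within $w$, or — if it straddles the junction — its constraint is already implied by admissibility of $w$ and of the chosen path for $v$. This is where one uses that $w$ is a Dyck word: a call edge inside $v$ that is matched (in the sense of the admissibility definition) with a return edge inside $w$ would require the label of the intervening subpath to be a Dyck word, but that subpath would be a suffix of $v$ followed by a proper prefix of $w$, and one checks that no such decomposition yields a Dyck word unless the return edge in question is actually already inside $v$. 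Once this bookkeeping is dispatched, the corollary follows immediately.
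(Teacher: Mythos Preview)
Your proposal is correct and follows essentially the same route as the paper: pick an admissible path $\pi$ from $I$ to $(S,R)$, concatenate with the two putative admissible paths labeled by $w$, and invoke Proposition~\ref{proposition.weakDeterministic}. The paper's proof is in fact terser than yours---it simply asserts that $\pi\pi_1$ and $\pi\pi_2$ are admissible without discussing the straddling case---so your ``bookkeeping'' about why no matched-pair constraint can cross the junction (because a proper prefix of the Dyck word $w$ ending just before a return letter has strictly positive balance, hence cannot be the suffix of a Dyck word) is a welcome addition rather than a detour.
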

\begin{proof}
Let $(S,R)$ be a state of the left reduction of a Dyck automaton over
$A$ and $w$ be a Dyck word over $A$. Let us assume that there are two 
admissible paths $\pi_1$ and $\pi_2$ labeled by $w$ starting at $(S,R)$. 
Since there is an admissible path $\pi$ from $I$ to $(S,R)$, the paths 
$\pi\pi_1$ and $\pi \pi_2$ are two admissible paths starting at $I$. 
They are then equal by Proposition \ref{proposition.weakDeterministic}.
\end{proof}
\begin{example}
The Dyck automaton $\A$ on the left of Figure
\ref{figure.exampleLetReduction} has as left reduction the Dyck
automaton on the right of the picture.  The initial state is the state $I=(\emptyset,\{1,2,3\})$.
\begin{figure}[htbp]
    \centering
\tikzstyle{every edge}=[auto,draw,->,>=stealth'] 
\tikzstyle{every loop}=[looseness=8]
\tikzset{packet/.style={rectangle, draw,
minimum size=6mm, rounded corners=1mm}}
\begin{tikzpicture}[scale=0.4]
\node[packet]            (1a)         at   (21,0) {(1,1);1};
\node[packet]            (123)       at   (10,0) {$\emptyset$;123};
\node[packet]            (2b)         at   (17,2.8) {$\emptyset$;2};
\node[packet]            (2a)         at   (25,3) {(1,2);2};
\node[packet]            (3b)         at   (25,-3) {(1,3);3};
\node[packet]            (3a)         at   (17,-2.8) {$\emptyset$;3};
\node[packet]            (1b)         at   (13,-0) {$\emptyset$;1};
\node[state]            (1)         at   (-3,0)                 {1};
\node[state]            (2)         at    (1,3)             {2};
\node[state]            (3)         at    (1,-3)            {3};
\path     
(1)  edge[bend left] node [inner sep=0.5mm,pos=0.5] (1b2) {$b$} (2)
(2)  edge[bend left] node {$i$} (1)
(1)  edge[bend right,swap] node [inner sep=0.5mm,pos=0.5] (1b3) {$b$} (3)
(3)  edge[bend right,swap] node {$i$} (1)
(2) edge [in=35,out=335, loop,swap] node {$j$}  (2)
(3) edge [in=35,out=335, loop,swap] node {$k$}  (3)
(1) edge [in=160,out=100, loop,swap] node  [inner sep=0.5mm,pos=0.5]   (loopa) {$a$} (1)
(1) edge [in=260,out=200, loop,swap] node  [inner sep=0.5mm,pos=0.5]   (loopaprime)
{$a'$} (1)
(loopa) edge[matched edges, bend left] node {} (1b2)
(loopaprime) edge[matched edges, bend right] node {} (1b3)
(123)  edge[bend left=40,swap,pos=0.3] node {$j$} (2b)
(123)  edge[bend right=40,pos=0.3] node {$k$} (3a)
(123)  edge[bend right=80,swap] node [inner sep=0.5mm,pos=0.7] (123-a'-1a) {$a'$} (1a)
(123)  edge[bend left=80] node [inner sep=0.5mm,pos=0.7] (123-a-1a) {$a$} (1a)
(1a)  edge[bend left=0,swap] node [inner sep=0.5mm,pos=0.5] (1a-b-2b) {$b$} (2b)
(1a)  edge[bend right=0] node [inner sep=0.5mm,pos=0.6] (1a-b-3a) {$b$} (3a)
(1a)  edge[bend left=10] node [inner sep=0.5mm,pos=0.6] (1a-b-2a) {$b$} (2a)
(2a)  edge[bend left=10] node [inner sep=0.5mm] {$i$} (1a)
(1a)  edge[bend right=10,swap] node [inner sep=0.5mm,pos=0.6] (1a-b-3b) {$b$} (3b)
(3b)  edge[bend right=10,swap] node [inner sep=0.5mm] {$i$} (1a)
(1a) edge [in=90,out=60, loop,swap] node  [inner sep=0.5mm,pos=0.55]   (bouclea){$a$} (1a)
(1a) edge [in=-60,out=-90, loop,swap] node  [inner sep=0.5mm,pos=0.45]
(boucleap) {$a'$} (1a)
(2b) edge [in=110,out=70,loop,swap] node  [inner sep=0.5mm,pos=0.55]   (bouclej){$j$} (2a)
(3a) edge [in=-70, out=-110,loop,swap] node  [inner sep=0.5mm,pos=0.55]   (bouclek){$k$} (3a)

(123-a-1a) edge[matched edges, bend left=50] node {} (1a-b-2b)
(123-a'-1a) edge[matched edges, bend right=50] node {} (1a-b-3a)
(bouclea) edge[matched edges, bend left=50] node {} (1a-b-2a)
(boucleap) edge[matched edges, bend right=80] node {} (1a-b-3b)
(2b)  edge[bend right=40, inner sep=0mm] node {$i$} (1b)
(3a)  edge[bend left=40,swap,inner sep=0mm] node {$i$} (1b)
(1b)  edge[bend left=10] node [inner sep=0.5mm,pos=0.5] (1b-a-1a) {$a$} (1a)
(1b)  edge[bend right=10,swap] node [inner sep=0.5mm,pos=0.5] (1b-a'-1a) {$a'$} (1a)
(1b-a-1a) edge[matched edges, bend left=0] node {} (1a-b-2b)
(1b-a'-1a) edge[matched edges, bend right=0] node {} (1a-b-3a)
(2a) edge [in=60,out=100, loop] node {$j$}  (2a)
(3b) edge [in=-100,out=-60, loop] node {$k$}  (3b)
(1b)  edge[bend left=0, swap,inner sep=0mm] node {$b$} (2b)
(1b)  edge[bend right=0,inner sep=0mm] node {$b$} (3a);

\end{tikzpicture}
\caption{A Dyck automaton $\A$ (on the left) over $A=A_c \sqcup A_r \sqcup A_i$ with $A_c= \{a,a'\}$, $A_r= \{b\}$ and $A_i=\{i,j,k\}$. The left reduction of $\A$ (on the right) over the same
tri-partitioned alphabet. Matched edges are linked with a dotted line
and each state is represented by its summary set $S$ of pairs of edges
and the set $R$. 
      }\label{figure.exampleLetReduction}
\end{figure}
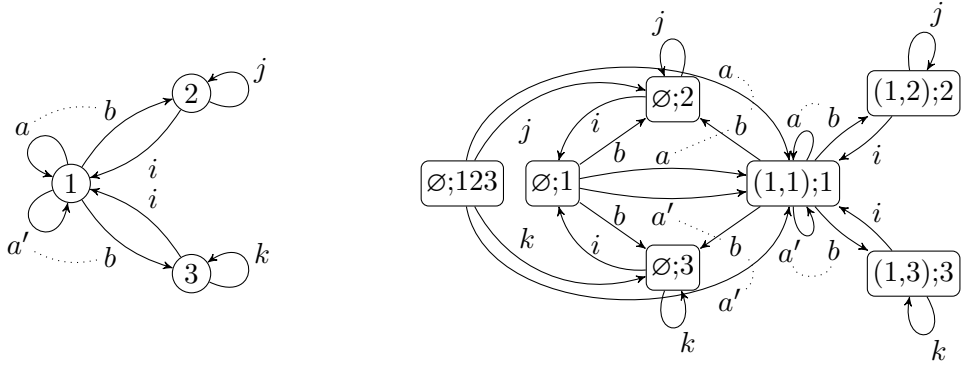
\end{example}

\begin{example} \label{example.example2}
The sofic-Dyck shift $X$ presented by the Dyck automaton $\A$ of Figure
\ref{figure.nondeterminizable} has no deterministic presentation. 
\begin{figure}[htbp]
    \centering
\begin{tikzpicture}[scale=0.4]
\node[state]            (3)         at   (0,0)                 {3};
\node[state]            (1) [right =2.5cm of 3]            {1};
\node[state]            (2) [right =2.5cm of 1]            {2};
\path     
(1)  edge[bend right] node {$i$} (2)
(2)  edge[bend right] node {$j$} (1)
(1)  edge[bend right] node {$i$} (3)
(3)  edge[bend right] node {$k$} (1)
(3) edge [in=120, out=60, loop] node  [inner sep=1mm,pos=0.5]   (x) {$b$} (3)
(1) edge [in = 120, out = 60, loop] node  [inner sep=1mm,pos=0.5]   (a) {$a$} (1)
(2) edge [in=120, out=60, loop] node  [inner sep=1mm,pos=0.5]   (b) {$b$} (2)
(a) edge[matched edges, bend left] node {} (b);
\end{tikzpicture}
\caption{A Dyck automaton $\A$ over $A$, with $A_c= \{a\}$, $A_r=
  \{b\}$ and $A_i= \{i,j,k\}$, presenting a sofic-Dyck shift which has
  no deterministic presentation. Matched edges are linked with a
  dotted line.
      }\label{figure.nondeterminizable}
\end{figure}
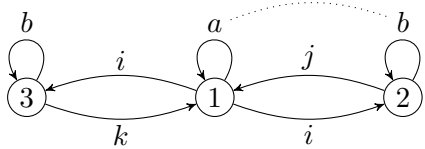
Let us briefly give a sketch of the proof of this fact.
Let $\B$ be a deterministic Dyck automaton over $\A$ accepting the same
shift $X$.  For any positive integers $n, m, r$, the words
$(a^{n+m}ib^nj)^r(ib^mj)^r(ib^mk)^r$ are blocks of $X$ and are thus
factors of labels of
admissible paths in $\B$, the edges labeled by $b$ of the path
labeling $(ib^mj)^r$ being matched with the edges labeled by $a$ of the path
labeling $(a^{n+m}ib^nj)$. 
If $\B$ is finite and deterministic, this implies that there are $m,r,s > 0$ and $n' > n >0$ such that
$(a^{n'+n+m}ib^nj)^r(ib^mj)^r(ib^mk)^r$ is a block of $X$, a contradiction.
\end{example}


\subsection{Visibly pushdown shifts}

In this section we show that the class of sofic-Dyck shifts is
the class of \emph{visibly pushdown shifts}, \ie the subshifts whose
set of blocks are factorial extensible visibly pushdown languages.

The class of visibly pushdown languages of finite words can be described either by
pushdown automata or by context-free grammars. 

A \emph{visibly pushdown automaton} on finite words over
$A=A_c \sqcup A_r \sqcup A_i$ is a tuple $M= (Q,I,\Gamma,\Delta,F)$ where Q is a finite set
of states, $I \subseteq Q$ is a set of initial states, $\Gamma$ is a
finite stack alphabet that contains a special bottom-of-stack symbol
$\bot$, $\Delta \subseteq (Q \times A_c \times Q \times (\Gamma
\setminus \{\bot\})) \cup (Q \times A_r \times \Gamma \times Q)
\cup (Q \times A_i \times Q)$, and $F \subseteq Q$ is a set of final
states.

A transition $(p,a,q,\gamma)$, where $a \in A_c$ and $\gamma \neq\bot$,
is a push-transition. On reading $a$, the stack symbol $\gamma$ is pushed onto the
stack and the control changes from state $p$ to $q$. 
A transition $(p,a,\gamma,q)$ is a pop-transition. The symbol $\gamma$
is read from the top of the stack and popped. If $\gamma = \bot$, the
symbol is read but not popped. A transition $(p,a,q)$ is a local action.

A stack is a nonempty finite sequence over $\Gamma$ starting with $\bot$.
A \emph{run} of $M$ labeled by $w=a_1 \ldots a_k$ is a sequence
$(p_0,\sigma_0) \cdots (p_k,\sigma_k)$ where $p_i \in Q$, $\sigma_0 = \bot$,
$\sigma_i \in   (\Gamma \setminus\{\bot\})$ for $1 \leq i \leq k$,
and such that:
\begin{itemize}
\item If $a_i$ is call symbol, then there are $\gamma_i \in \Gamma$ and
  $(p_{i-1},a_i,p_i,\gamma_i) \in \Delta$ with $\sigma_{i}= 
  \sigma_{i-1} \cdot \gamma_i$.
\item  If $a_i$ is a return symbol, then there are $\gamma_i \in \Gamma$ and
  $(p_{i-1},a_i,\gamma_i,p_i) \in \Delta$ with either $\gamma_i \neq
  \bot$ and $\sigma_{i} \cdot \gamma_i = \sigma_{i-1}$ or $\gamma_i = \bot$ and
  $\sigma_i=\sigma_{i-1}=\bot$. 
\item  If $a_i$ is an internal symbol, then $(p_{i-1},a_i,p_i) \in \Delta$ and $\sigma_i=\sigma_{i-1}$.
\end{itemize}
A run is \emph{accepting} if $p_0 \in I$, $\sigma_0=\bot$, and the last state is final, \ie $p_k \in
F$. A word over $A$ is \emph{accepted} if it is the label of an
accepting run. The language of words accepted by $M$ is denoted by $L(M)$.
The language accepted by a visibly pushdown automaton is called 
a \emph{visibly pushdown language}.

We will use also the
following grammar-based characterization (see \cite[Section~5]{AlurMadhusudan2005}).

A context-free grammar over an alphabet $A$ is a tuple $G=(V,S,P)$,
where $V$ is a finite set of variables, $S \in V$ is a start variable
and $P$ is a finite set of rules of the form $X \xrightarrow{}
\alpha$ such that $X \in V$ and $\alpha \in (V \cup A)^*$. The
semantics of the grammar $G$ is defined by the derivation relation
$\xrightarrow{}$ over $(V \cup A)^*$. If $X \xrightarrow{} \alpha$ is a
rule and $\beta,\beta'$ are words of $(V \cup A)^*$, then 
$\beta X \beta' \xrightarrow{} \beta \alpha\beta'$ holds.
The language \emph{accepted by the grammar} $G$, denoted $L(G)$ is the
set of words $u$ in $B^*$ such that $S \xRightarrow{*} u$, where
$\xRightarrow{*}$ is the transitive closure of the relation~$\xrightarrow{}$.

Let $A$ be a tri-partitioned alphabet.
A context-free grammar $G=(V,S,P)$ over $A$ is a \emph{visibly
  pushdown grammar} with respect to $A$
if the set $V$ of variables is partitioned into two disjoint sets
$V^0$ and $V^1$, such that all rules in $P$ are of one 
of the following forms
\begin{itemize}
\item $X \rightarrow \varepsilon$;
\item $X \rightarrow aY$, such that if $X \in V^0$, then $a \in A_i$ and
  $Y \in V^0$;
\item $X \rightarrow aYbZ$, such that $a \in A_c$, $b \in A_r$, $Y \in V^0$,
  and if $X \in V^0$, then $Z \in V^0$.
\end{itemize}
The variables in $V^0$ derive only Dyck words.
The variables in $V^1$ derive words that may contain unmatched call
letters as well as unmatched return letters. 
In the rule $X \rightarrow aY$, if $a$ is a call it is
unmatched and the variable $X$ must be in $V^1$ if $a$ is a call or
return.
In the rule $X \rightarrow
aYbZ$, the symbols $a$ and $b$ are the matching call and return.
The words generated by $Y$ belong to
$V^0$ and thus are Dyck words. Furthermore, if $X$ is required to
generate Dyck words, then $Z$ also.

It is shown in \cite{AlurMadhusudan2005} that a language is visibly
  pushdown language if and only if it is accepted by
  a visibly pushdown grammar.


\begin{proposition} \label{proposition.admissibleVPL}
The set of labels of finite admissible paths of a Dyck automaton is a
visibly pushdown language.
\end{proposition}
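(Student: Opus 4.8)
The plan is to exhibit a visibly pushdown automaton whose accepted language is exactly the set of labels of finite admissible paths of the given Dyck automaton $\A=(\G,M)$ with $\G=(Q,E,A)$. Since a finite admissible path may begin and end at any state, I take the state set of the VPA to be $Q$ with every state both initial and final, so $I=F=Q$. The stack alphabet is $\Gamma=E_c\sqcup\{\bot\}$, i.e. the call edges of $\A$ together with a fresh bottom symbol. The transitions mirror the edges of $\G$: each internal edge $(p,\ell,q)$ yields a local action $(p,\ell,q)$; each call edge $e=(p,a,q)$ yields the push-transition $(p,a,q,e)$, which pushes the edge $e$ itself onto the stack; and each return edge $e'=(r,b,s)$ yields, for every call edge $e$ with $(e,e')\in M$, the pop-transition $(r,b,e,s)$, together with the transition $(r,b,\bot,s)$ that reads $\bot$ without popping, to handle return letters having no matching call before them.

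I then prove that the language accepted by this VPA equals the set of labels of finite admissible paths, through a direct correspondence between runs and paths. In one direction, a finite admissible path $e_1\cdots e_k$ from $p_0$ to $p_k$ with label $a_1\cdots a_k$ is turned into a run by pushing or popping according to the edge types; the state of the run after step $i$ is $t(e_i)$, so the required transitions are always available for internal and call edges. The only point to check is that when $a_i$ is a return letter, $e_i=(r,b,s)$, and the top of the stack is a call edge $e_j$ rather than $\bot$, one has $(e_j,e_i)\in M$. This is where the bookkeeping occurs: since $e_j$ sits on top of the stack just before reading $a_i$ and was pushed when $a_j$ was read, the stack height never dropped to or below its level in between, so the infix label $a_{j+1}\cdots a_{i-1}$ is matched-return; and because $e_j$ is exactly on top (the stack content is the same as right after reading $a_j$), that infix is also balanced, hence a Dyck word. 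Applying admissibility of the path to the factor $e_j\cdot(e_{j+1}\cdots e_{i-1})\cdot e_i$ gives $(e_j,e_i)\in M$, so the pop-transition $(r,b,e_j,s)$ exists. The resulting run is accepting because $p_0\in I=Q$ and $p_k\in F=Q$.

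Conversely, from an accepting run I read off the sequence of edges used; since consecutive transitions join consecutive run states, these edges form a path of $\G$ between two states of $Q$, labeled by the input word. I must check the path is admissible, i.e. that for every factor $(p,a,q)\cdot\pi_1\cdot(r,b,s)$ with $a\in A_c$, $b\in A_r$ and $\pi_1$ labeled by a Dyck word, the pair lies in $M$. Using the stack discipline of the run: reading the call $a$ pushed the edge $(p,a,q)$; because the intervening label is balanced and every proper prefix of it is matched-return, that symbol is never popped during $\pi_1$ and is precisely the top of the stack when the return $b$ is read. Hence the pop-transition applied to $(r,b,s)$ reads this non-$\bot$ symbol, and by construction such a transition exists only when $((p,a,q),(r,b,s))\in M$. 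Thus every factor of the required shape is a matched pair and the path is admissible, so its label is accepted.

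I expect the main obstacle to be precisely the stack-height/Dyck-word bookkeeping used in both directions: establishing rigorously that ``the call edge pushed at position $j$ is on top of the stack immediately before the return at position $i$'' is equivalent to ``the infix between positions $j$ and $i$ has a Dyck-word label.'' The delicate case is when a proper prefix of that infix is balanced — one must rule out that the pushed symbol is popped at that moment, using the fact that the symbol of a Dyck word following a balanced prefix cannot be a return (else some prefix would have more returns than calls). Once this equivalence is in place, matching it against the admissibility condition on one side and against the membership condition $(e,e')\in M$ built into the pop-transitions on the other is routine.
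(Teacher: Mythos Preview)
Your construction and argument are correct and coincide with the paper's approach: the paper builds the same visibly pushdown automaton with $I=F=Q$, stack alphabet consisting of the (call) edges of~$\A$, push-transitions that push the call edge itself, and pop-transitions $(r,b,\gamma,s)$ for each call edge $\gamma$ matched with $(r,b,s)$. Your proof is in fact more explicit than the paper's, which merely asserts the run/admissible-path correspondence without spelling out the stack-height/Dyck-word bookkeeping and does not even mention the $\bot$-pop transitions you correctly include.
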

\begin{proof}
Let $\A=(\G,M)$ be a Dyck automaton over $A$, where $\G=(Q,E)$.

We define a visibly pushdown
automaton $V= (Q,I,\Gamma,\Delta,F)$ over $A$, where $I=F=Q$ and $\Gamma$
is the set of edges of $\A$. The set of transitions $\Delta$ is
obtained as follows.
\begin{itemize}
\item If $(p,a,q) \in E$ with $a \in A_c$, then $(p,a,q,(p,a,q)) \in
\Delta$. 
\item If $(p,b,q) \in E$ with $b \in A_r$, then $(p,b,\gamma,q) \in
\Delta$ for each call edge $\gamma$ which is matched with
the return edge $(p,b,q)$.
\item If $(p,\ell,q) \in E$ with $\ell \in A_i$, then $(p,\ell,q) \in \Delta$.
\end{itemize}
Let $w$ be a finite word over $A$. There is run $(p_0,\sigma_0) \cdots (p_k,\sigma_k)$  in $V$ 
labeled by $w$ such that $\sigma_0=\bot$, $p_0=p$ and $p_k=q$
if and only if $w$ be the label of an admissible path $\pi$ of $\A$
going from $p$ to $q$. Thus $w$ is the label of an admissible path of $\A$ if and only if
it is the label of an accepting run of $V$, which proves the proposition.

\end{proof}

In order to prove that the set of blocks of sofic-Dyck shift is a visibly
pushdown language, we have to prove that the subset of words
labeling a finite admissible path which are extensible to labels of
bi-infinite admissible paths is also a visibly pushdown language. 

Let $L$ be a language of finite words over $A$. We denote by 
$\E(L)$ the set words $w \in L$ such that, for any integer $n$,
there are words $u,v$ of length greater than $n$ such that $uwv \in
L$. Note that $\E(L)$ is a factorial language when $L$ is factorial.
This set is called in \cite{CulikYu1991} the \emph{bi-extensible}
subset of $L$. 

We show below that the bi-extensible subset of a
factorial visibly pushdown language is a visibly pushdown language.
It is shown in \cite{CulikYu1991} that it is not true that the
bi-extensible subset of  a context-free language is a context-free
language but the result holds for factorial context-free languages.
We prove a similar result for factorial visibly pushdown languages.

We first recall the following pumping lemma (see \cite[Lemma 5.6]{CulikYu1991}).
\begin{lemma} \label{lemma.pumping}
Let $G=(V,S,P)$ be a context-free grammar and $L = L(G)$. Then for any integer
$t > 0$, there exists an integer $p(t)$ such that for each $z \in L$
and any set $K$ of distinguished positions in $z$, if $|K| \geq p(t)$, then there is a decomposition
$z=ux_1 \cdots x_twy_1\cdots y_tv$
such that
\begin{itemize}
\item There exists a variable $X \in V$ such that
$$
S \xrightarrow{*} uXv \xrightarrow{*} ux_1Xy_1v \xrightarrow{*} \cdots \xrightarrow{*} 
 ux_1\cdots x_tXy_t \cdots y_1v  \xrightarrow{*} ux_1\cdots x_twy_t \cdots y_1v   .$$
\item For any $i_1, \ldots, i_t$, we have 
$ux_1^{i_1}\cdots x_t^{i_t}wy_t^{i_t} \cdots y_1^{i_1}v \in L$.
\item If $K(x)$ denotes the distinguished positions of $K$ in a word $x$,
then either $K(u), K(x_1), \ldots K(x_t), K(w) \neq \emptyset$, or 
$K(w), K(y_t), \ldots K(y_1)$, $K(v) \neq \emptyset$.
We also have $|K(x_1) \cup \ldots K(x_t)\cup K(w) \cup K(y_t) \cup
\ldots K(y_1)|\leq p(t)$.
\end{itemize}
\end{lemma}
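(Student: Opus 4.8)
The plan is to prove Lemma~\ref{lemma.pumping} by an Ogden-style argument on a single parse tree of $z$, iterated so as to obtain $t+1$ nested repetitions of one variable; this is essentially the argument behind \cite[Lemma 5.6]{CulikYu1991}, so here I only sketch it. Fix a derivation tree $T$ of $z$ in $G$ and mark in $T$ the leaves corresponding to the positions of $K$; let $k$ be the maximal length of a right-hand side of a rule of $G$, which we may assume is $\ge 2$ (otherwise $L$ is finite and the statement is vacuous once $p(t)$ exceeds $\max_{z\in L}|z|$). For a node $v$ let $f(v)$ be the number of marked leaves below $v$. Build a path $P = v_0,v_1,\dots,v_\ell$ from the root $v_0$ downward by always letting $v_{i+1}$ be a child of $v_i$ with the largest value of $f$; since $v_i$ has at most $k$ children and $f(v_i)$ is the sum of $f$ over them, $f(v_{i+1}) \ge f(v_i)/k$, so $P$ terminates at a marked leaf $v_\ell$. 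Call $v_i$ a \emph{branch node} if it has a child other than $v_{i+1}$ with a marked leaf below it, and a \emph{left} (resp.\ \emph{right}) branch node if such a child lies to the left (resp.\ right) of $v_{i+1}$. The ratio $f(v_i)/f(v_{i+1})$ is $1$ at non-branch nodes and $\le k$ everywhere, so $|K| = f(v_0) \le k^{b(n)}$ where $b(n)$ is the number of branch nodes on $P$ on or above a node $n$; in particular $f(n) \le k^{b'(n)}$ where $b'(n)$ counts the branch nodes on $P$ from $n$ down to $v_\ell$. Set $p(t) := k^{\,2(t\,|V|+2)}$. If $|K|\ge p(t)$, then $P$ has at least $2(t|V|+2)$ branch nodes.

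Now take the lowest $2(t|V|+2)$ branch nodes of $P$. Each is a left or a right branch node, so by pigeonhole at least $t|V|+2$ of them are of one type, say left; list them top to bottom as $\mu_0,\mu_1,\dots,\mu_{t|V|+1}$. The internal nodes $\mu_1,\dots,\mu_{t|V|+1}$ are labelled by variables of $G$, so some variable $X$ labels $t+1$ of them, say $n_0,\dots,n_t$ top to bottom. Reading off $T$: let $u$ (resp.\ $v$) be the yield to the left (resp.\ right) of the subtree at $n_0$, let $x_j$ (resp.\ $y_j$) be the yield of the siblings of the path-child at $n_{j-1}$ lying to its left (resp.\ right) for $1\le j\le t$, and let $w$ be the yield of the subtree at $n_t$. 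This produces derivations $S \xRightarrow{*} uXv$, $X \xRightarrow{*} x_jXy_j$ for each $j$, and $X \xRightarrow{*} w$; iterating the middle ones gives $S\xRightarrow{*} u x_1^{i_1}\cdots x_t^{i_t}\, w\, y_t^{i_t}\cdots y_1^{i_1} v \in L$ for all $i_1,\dots,i_t\ge 0$, which is the first two bullets. For the third: $v_\ell$ is a marked leaf below $n_t$, so $K(w)\neq\emptyset$; each $n_{j-1}$ is a left branch node, so $K(x_j)\neq\emptyset$; and $\mu_0$ is a left branch node strictly above $n_0$, so its witnessing marked leaf lies to the left of the subtree at $n_0$, i.e.\ in $u$, so $K(u)\neq\emptyset$ — giving the first alternative (the right case gives the second). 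Finally, since $n_0$ is among the lowest $2(t|V|+2)$ branch nodes we have $b'(n_0)\le 2(t|V|+2)$, hence $|K(x_1)\cup\dots\cup K(w)\cup\dots\cup K(y_1)| = f(n_0) \le k^{\,2(t|V|+2)} = p(t)$.

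The main obstacle is the simultaneous bookkeeping that makes all three conditions hold at once. Choosing maximal-weight children is what both guarantees that $P$ reaches a marked leaf (so $K(w)\neq\emptyset$) and produces the estimate $|K|\le k^{b}$ that converts "many marked positions" into "many branch nodes"; restricting to the \emph{lowest} block of $2(t|V|+2)$ branch nodes is what keeps $f(n_0)$ — and hence the inner part $x_1\cdots x_t w y_t\cdots y_1$ of the decomposition — bounded by $p(t)$; and one must leave exactly one spare branch node ($\mu_0$) strictly above the chosen $n_0$ on the same side in order to certify $K(u)\neq\emptyset$ (resp.\ $K(v)\neq\emptyset$), which is why the count is $t|V|+2$ rather than $t|V|+1$. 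Pinning down the constant $p(t)$ and matching the indexing of the $x_j$'s and $y_j$'s to the displayed derivation is then routine.
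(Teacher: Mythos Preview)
The paper does not prove this lemma at all: it simply recalls it with the citation ``see \cite[Lemma~5.6]{CulikYu1991}'' and then uses it as a black box in the proof of Proposition~\ref{proposition.biextensible}. Your sketch therefore supplies what the paper omits, and the Ogden-style argument you give is the standard one behind the cited result and is essentially correct.

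One small imprecision worth fixing: you define $x_j$ (resp.\ $y_j$) as ``the yield of the siblings of the path-child at $n_{j-1}$ lying to its left (resp.\ right)'', but the path $P$ may take several steps between $n_{j-1}$ and $n_j$ (the $n_j$'s are only certain branch nodes, not consecutive vertices of $P$). What you need is that $x_j$ (resp.\ $y_j$) is the yield of the subtree rooted at $n_{j-1}$ lying to the left (resp.\ right) of the subtree rooted at $n_j$; with that correction the derivation $X\xRightarrow{*} x_jXy_j$ is genuinely the piece of $T$ between $n_{j-1}$ and $n_j$, and the rest of your bookkeeping (the spare left branch node $\mu_0$ giving $K(u)\neq\emptyset$, the leaf $v_\ell$ giving $K(w)\neq\emptyset$, and the bound $f(n_0)\le k^{\,2(t|V|+2)}=p(t)$ coming from restricting to the lowest block of branch nodes) goes through unchanged.
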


\begin{proposition}\label{proposition.biextensible}
If $L$ is a factorial visibly pushdown language, then $\E(L)$ is a factorial visibly pushdown language.
\end{proposition}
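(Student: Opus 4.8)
The plan is to start from a visibly pushdown grammar $G=(V,S,P)$ with $L=L(G)$, which exists by the grammar characterization of Alur and Madhusudan, and to build a new visibly pushdown grammar $G'$ generating $\E(L)$. First I would pass to a "reduced'' grammar in which every variable is both reachable from $S$ and productive, and then refine the variable set by context: for each variable $X$, I would keep track of a finite amount of information about how $X$ can be reached, namely whether $X$ occurs in derivations $S\xRightarrow{*}\alpha X\beta$ with $\alpha$ (resp. $\beta$) having unboundedly long terminal yield. Since the grammar rules have the very restricted shapes $X\to\varepsilon$, $X\to aY$, $X\to aYbZ$, the left context of an occurrence of a variable grows in a controlled way, and one can compute by a fixpoint over the (finite) variable set a predicate $\textrm{growsLeft}(X)$ and $\textrm{growsRight}(X)$ expressing that $X$ admits left (resp. right) context of arbitrarily large yield. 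The variables of $G'$ are then pairs $(X,\text{flag})$ recording which of these two possibilities is being "used,'' and the new start variable is $S$ together with the constraint that both flags will eventually be satisfied.

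The key technical point, and where Lemma~\ref{lemma.pumping} enters, is to show that $\textrm{growsLeft}(X)$ can actually be \emph{witnessed} by iterating a single rule application along a derivation: if $X$ occurs with arbitrarily long left context, then by the pumping lemma applied with the distinguished positions chosen in the left context there is a variable $Y$ and a self-embedding derivation $Y\xRightarrow{*}x_1\cdots x_t\,Y\,y_t\cdots y_1$ with nonempty terminal material to the left, which can be inserted arbitrarily many times to the left of $X$; symmetrically for $\textrm{growsRight}$. Because $L$ is \emph{factorial}, any word sandwiched between such iterates still lies in $L$, so a word $w\in L$ lies in $\E(L)$ precisely when $w$ can be derived through a variable that is both $\textrm{growsLeft}$ and $\textrm{growsRight}$, or more precisely when the derivation of $w$ passes through variables witnessing both sides. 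I would then argue that $\E(L)$ equals the language of words $w$ such that $S\xRightarrow{*}uXw'Xv$-style derivations exist with both $u,v$ extendable — made precise by the fixpoint predicates — and that this language is generated by a visibly pushdown grammar obtained from $G$ by restricting to productive/useful variables carrying the flags. The visibly pushdown \emph{shape} of the rules is preserved automatically, since we only delete rules and relabel variables, never change the form $X\to\varepsilon\mid aY\mid aYbZ$; one must only check that the partition $V^0\sqcup V^1$ is respected, which it is because the flags do not interact with the Dyck-word-generating status of a variable.

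The main obstacle I anticipate is \textbf{matching call/return symbols across the inserted context}: in a visibly pushdown grammar the rule $X\to aYbZ$ forces $a$ and $b$ to be a matched call/return pair, so when one pumps a left-growing context one must ensure the pumped block is itself of the form allowed by the grammar and does not break the matching with symbols of $w$. Concretely, a left context of large yield need not be a Dyck word — it may carry unmatched returns that are matched inside $w$ — so the pumping must be done at a $V^1$ variable and the self-embedding derivation $Y\xRightarrow{*}x_1\cdots x_t Y y_t\cdots y_1$ may split nontrivially around $w$. Handling this cleanly is exactly where the restricted rule shapes help: an occurrence of a $V^1$ variable in a sentential form has at most a bounded-depth "stack'' of pending calls to its left, recoverable from the finite derivation structure, so the pumping lemma can be localized to the $V^1$ spine. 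I would therefore carry out the argument by first proving the combinatorial statement "$w\in\E(L)$ iff the $V^1$-spine of some derivation of $w$ passes through a $\textrm{growsLeft}$ variable and a $\textrm{growsRight}$ variable,'' and only then package it as a grammar; the grammar construction itself is then routine bookkeeping.
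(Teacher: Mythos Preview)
Your plan diverges from the paper's and contains a concrete gap. In a visibly pushdown grammar every $X\in V^1$ can occur only as the rightmost symbol of a sentential form: a $V^1$ variable is introduced only as $Y$ in $X\to aY$ or as $Z$ in $X\to aYbZ$ (the $Y$-slot of the latter is forced into $V^0$), and one checks by induction that the unique $V^1$ variable, if present, stays rightmost. Hence $S\Rightarrow^*\alpha X\beta$ with $X\in V^1$ forces $\beta=\varepsilon$, and your predicate $\textrm{growsRight}$ is identically false on $V^1$; already the grammar $S\to aS\mid\varepsilon$ for $L=A^*$, where $\E(L)=L$, shows the spine criterion is vacuous on the right. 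The left side fails too, more subtly: if $X_i$ lies on the spine with $S\Rightarrow^* u_iX_i\Rightarrow^* u_iw_i=w$, then $\textrm{growsLeft}(X_i)$ produces \emph{some} derivation $S\Rightarrow^* u'X_i$ with $|u'|$ large, hence $u'w_i\in L$, but $u'w_i$ need not contain $w=u_iw_i$ as a factor since $u'$ need not end in $u_i$. Even when the witness is a spine self-embedding $Y\Rightarrow^* xY$ with $u_i=pq$ split at $Y$, the pumped word $px^nqw_i$ contains $w$ only if $p$ happens to be a suffix of $x^n$. Your flags are properties of variables in isolation; bi-extensibility of $w$ depends on pumpable positions inside a \emph{particular} derivation, located relative to $w$.

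The paper sidesteps this by making those positions syntactic. It enlarges the alphabet by markers $\$_0,\$_1,\overline{\$_1}$ and adds to $G$ the rules $X\to\$_1X\overline{\$_1}X_1$ for each $X\in V^0$ admitting a self-embedding $X\Rightarrow^*uXv$ with $u,v\in A^+$, and $X\to\$_0X$ for each $X$ with $X\Rightarrow^*uX$, $u\in A^+$, obtaining a visibly pushdown grammar $G'$. Two applications of Lemma~\ref{lemma.pumping} (the first with distinguished positions in a long left extension $z_1$, the second --- after a case analysis on where $w$ sits in the resulting decomposition --- on the right) show that $\E(L)$ is exactly the set of $w\in A^*$ lying between two suitable markers in some word of $L(G')$. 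Closure of visibly pushdown languages under taking factors and under intersection with regular sets then gives the conclusion, without ever writing down an explicit grammar for $\E(L)$.
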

\begin{proof}
Let $G=(V,S,P)$ be a visibly pushdown grammar over $A$ accepting $L$.
We define a grammar $G'=(V\cup\{X_i\},S,P')$ over $A'=(A_c\cup
\{\$_1\},A_r\cup \{\overline{\$_1}\},A_i\cup \{\$_0\})$
obtained by adding the following rules to
$G$:
\begin{itemize}
\item $X \rightarrow \$_1X\overline{\$_1}X_1$ and $\X_1 \rightarrow \varepsilon$, for each $X \in V^0$ such that $X \xrightarrow[G]{*} uXv$ with $u,v \in
A^+$,
\item  $X \rightarrow \$_0X$, for each $X \in V$ such that $X \xrightarrow[G]{*} uX$ with $u \in
A^+$.
\end{itemize}

Note that it is not possible to have a rule $X \in V$ such that $X
\xrightarrow{*} Xu$
with $u \in A^+$.
The grammar $G'$ is a visibly pushdown grammar over $A'$.

Let $L_1= \{ w \in A^* \mid  \exists w_i \in A^*, \: w_1\$ w_2ww_3\$'w_4 \in L(G'), \$ =\$_0 \text{ or } \overline{\$_1},
\$'= \$_0 \text{ or } \$_1\}$, $L_2 =
\{ w \in A ^* \mid \exists w_i \in A^*, w_1\$_1w_2ww_3\overline{\$_1}w_4, \in L(G')\}$  and $L_3= L_1 \cup L_2$.

Let us prove that $L_3 \subseteq \E(L)$.
We first consider a word $w \in L_2$ such that  $w_1\$_1w_2ww_3
\overline{\$_1}w_4 \in L(G')$. Then $w_2ww_3$ is generated in $G$ by some variable $X \in V^0$ such that $X
\xrightarrow{*} uXv$, for $u,v \in A^+$. Thus, for any integer $n$, we have $u^nw_2ww_3v^n \in
L$. Thus $w \in \E(L)$.

Let us consider a word $w \in L_1$ such that  $w_1\overline{\$_1} w_2ww_3
\$_1w_4 \in L(G')$. Thus there are words $u_1,u_2,u_3,u_4$ such that $u_1\$_1u_2
\overline{\$_1} w_2ww_3 \$_1 u_3 \overline{\$_1}u_4\in L(G')$. It follows that
there are words $x,y,z,t \in A^+$ such that $u_1x^nu_2y^nw_2ww_3$ $z^nu_3t^nv_4 \in
L$, for any positive integer $n$. Thus $w \in \E(L)$.

We now consider the case where  $w \in L_1$ with  $w_1\$_0 w_2ww_3
\$_0w_4 \in L(G')$.  Then there are variables $X$, $Y$ such that
$X \xrightarrow{*} uX$ for some $u \in A^+$,
$Y \xrightarrow{*} vY$ with $v \in A^+$, such that
$S \xrightarrow{*} \alpha X\beta Y\gamma$ and $X\beta \xrightarrow{*} w_2ww_3$.
It follows that, for any positive integer $n$, we have
$u^nw_2ww_3v^n \in L$.  Hence $w \in \E(L)$.
The remaining cases are proved similarly.

We now prove that $\E(L) \subseteq L_3$. Let $z \in \E(L)$ of length
$t$. We choose $z_1,z_2 \in A^+$ of length greater than $p(t)$, where $p(t)$
is defined in Lemma \ref{lemma.pumping}, such that $z' = z_1zz_2 \in
L$. For technical reasons that will appear below, we also choose
$|z_2| > 4|K|(|z_1| + |z|)$.

We consider a set of distinguished
positions in $z_1$.
By Lemma \ref{lemma.pumping}, 
there is a variable $X$ in $V$ such that
\begin{equation*}
S \xrightarrow{*} uXv \xrightarrow{*} ux_1Xy_1v \xrightarrow{*} \cdots
 ux_1\cdots x_tXy_t \cdots y_1v 
  \xrightarrow{*}ux_1\cdots x_twy_t \cdots y_1v = z'.
\end{equation*}
Let $T$ be the induced derivation tree and $T'$ be the subtree of $T$ (labeled
by $X$) generating $w$.
Let $\pi$ be the path going from the
root of $T$ to the parent of the root of $T'$. The length $\ell$ of
$\pi$ is at most $|ux_1\cdots
x_t|$ since all rules of $G$ produce either the empty word or a non empty
word over $V \cup A$ with a terminal symbol on the left.

At least one of two following cases holds.
\begin{itemize}
\item $K(u),K(x_1),\ldots,K(x_t), K(w) \neq \emptyset$.
  \begin{itemize}
  \item If $z$ is a factor of $wy_t \cdots y_1$ and $X \rightarrow
    \$_1X\overline{\$_1}X_1$ is a rule of $G'$, then 
    $u\$_1x_1\cdots x_t$ $wy_t \cdots y_1\overline{\$_1}v \in L(G')$ and
    thus $z \in L_3$.
\item If $z$ is a factor of $wy_t \cdots y_1$ and $X \rightarrow
    \$_1X\overline{\$_1}X_1$ is not a rule of $G'$. Then $y_t \cdots y_1
    = \varepsilon$ and $X \rightarrow \$_0X$ is a rule of $G'$.
 Thus the word $\hat{z}$ obtained after inserting $\$_0$ between $u$ and $x_1$ is
 still in $L(G')$. 

Furthermore, $z$ is a factor of $w$. We set $w=
 w_1zw_2$. 
If $|w_2| < |K|$, 
$|y_t \cdots y_1v| = |z_2| - |w_2| > 
4|K|\times|z_1z| - |K| \geq 3|K|\times|ux_1\cdots x_t| \geq 3\ell |K|$.
We denote by $R$ the set of nodes in $T$ which are children of nodes
of $\pi$ on the right of $\pi$ and thus generate $y_t \cdots y_1v$.
The size of the set $R$ is at most $3\ell$ since all rules of $G$ have
an arity at most $4$. 

At most $3 \ell$ variables generating a sequence of length greater
than $3 \ell |K|$, there is a variable $Y$ in $R$ such that $Y$ generates
a factor of length at least $|K|$ of $y_t \cdots y_1v$ which is a
factor of $z_2$. We do a second pumping for words generated by $Y$
using distinguished positions on $y_t \cdots y_1v$ and get 
that the word obtained from $\hat{z}$ after inserting either $\$_0$ or $\$_1$
in $y_t \cdots y_1v$ is still in $L(G')$.

If $|w_2| \geq |K|$,  we do a second pumping for words generated by $X$
using distinguished positions on $w_2$ and get 
that the word obtained from $\hat{z}$ after inserting either $\$_0$ or $\$_1$
in $w_2$ is still in $L(G')$.

\item If $z$ is a factor of $y_t \cdots y_1v$, then $w$ can be
  replaced either by $\overline{\$_1}w$ or by 
  $\$_0w$ and gives a word $\hat{z}$.  
A similar argument as above for a second pumping still holds.
We have this time $|z_2| > 
4|K|\times|z_1z| \geq 4|K|\times|ux_1\cdots x_t| \geq 3\ell |K|$.
Hence there is a variable $Y$
  in $R$ such that $Y$ generates a factor of length at least $|K|$ of
  $z_2$.  
We do a second pumping for words generated by $Y$
using distinguished positions on $z_2$ and get 
that the word obtained from $\hat{z}$ after inserting either $\$_0$ or $\$_1$
in $z_2$ is still in $L(G')$.

\item Otherwise $z$ is a factor of $wy_t \cdots y_1v$ and $z$ crosses $w$,
  $y_t \cdots y_1$ and $v$. Then $|y_t \cdots y_1 | < |z| =t$. 
 Thus there is $1 \leq i \leq t$ such that $y_i = \varepsilon$.
So we can replace $x_i$ by $\$_0x_i$ obtaining $\hat{z}$.
Again here there is a variable $Y$
  in $R$ such that $Y$ generates a factor of length at least $|K|$ of
  $z_2$. We do a second pumping for words generated by $Y$
using distinguished positions on $z_2$ and get 
that the word obtained from $\hat{z}$ after inserting either $\$_0$ or $\$_1$
in $z_2$ is still in $L(G')$.

  \end{itemize}
\item $K(w),K(y_t),\ldots,K(y_1), K(v) \neq \emptyset$. Then $z$ is a
  factor of $v$ since the distinguished positions are on $z_1$.  Then $w$ can be
  replaced either by $\overline{\$_1}w$ or by 
  $\$_0w$.  The second pumping is done as the in the last item of the
  previous case.
\end{itemize}
We obtain that for any $z \in \E(L)$, there is in $L(G')$ either a word of the form
$w_1\$ w_2zw_3\$'w_4$ with  $\$ =\$_0 \text{ or } \overline{\$_1}$,
$\$'= \$_0 \text{ or } \$_1$ or a word of the form
$w_1\$_1w_2zw_3\overline{\$_1}w_4$. Thus $z \in L_3$.
Hence $L_3 = \E(L)$.
 
We now show that $L_3$ is a visibly pushdown language. 
Indeed, let us show that
$L_{\$\$'}= \{ w \in A^* \mid  w_1\$ w_2ww_3\$' w_4 \in L(G')\}$
is visibly pushdown. 

Let $L' = \Fact(L(G')) \cap
\$A^*\$'$, where $\Fact(L(G'))$ denotes the set of factors of
$L(G')$, and  let $L" = \Fact(L') \cap A^*$.
Since the class of visibly pushdown languages is closed by prefix
and suffix, it is closed by factor. Hence the languages $L'$ and $L"$ are visibly
pushdown. We have $L_{\$\$'} = L"$.

As a consequence, $L_{\$\$'}$ is visibly pushdown.
The class of visibly pushdown languages being closed by union,
we get that $L_3$ is visibly pushdown.

Note that that $G'$ can be constructed in an effective way since it is
decidable whether $X \xrightarrow{*} uXv$ or $X \xrightarrow{*} uX$
for some words $u,v \in A^+$.
\end{proof}

\begin{theorem} \label{theorem.VPL}
Let $X$ be a sofic-Dyck shift. Then $\B(X)$ is a visibly pushdown
language. Conversely, if $L$ is a factorial extensible visibly pushdown
language, then $\B^{-1}(L)$ is a sofic-Dyck shift.
\end{theorem}

\begin{proof}
Let $X$ be the sofic-Dyck shift presented by a Dyck automaton $\A$.
By Propositions \ref{proposition.admissibleVPL}, the set $L$ of labels
of finite admissible paths of $\A$ is a visibly pushdown
language. 
By \ref{proposition.biextensible}, the language $\E(L)$ also. 
By Lemma \ref{lemma.compacity}, we have $\B(X)= \E(L)$ and thus
$\B(X)$ is a visibly pushdown
language.

Conversely, let $L$ be a factorial extensible visibly pushdown 
language. Let $G=(V,S,P)$ be a visibly pushdown grammar over $A$
accepting $L$. We may assume that variables that do not generate any
word are discarded.
We define a Dyck automaton
$\A=(\G,M)$ with $\G=(V \cup (V\times (\{\$\} \cup (A \times V)),E)$ as follows.  
We denote below by $(X,\circ)$ any state which is either $X$ or $(X,\$)$, or $(X, (a,Y))$.
\begin{itemize}
 \item If $X \rightarrow \ell Y \in P$ with $\ell \in A_i$, then
$((X,\circ),\ell,Y) \in E$.
\item If $X \rightarrow a Y \in P$ with $a \in A_c$, then 
$((X,\circ),a,(Y,\$)) \in E$.
\item If $X \rightarrow a Y b Z \in P$, then 
$((X,\circ),a,(Y,(b,Z))) \in E$.
\item If $X \rightarrow b Y \in P$ with $b \in A_r$, then
$((X,\circ),b,Y) \in E$.
\item If $b \in A_r$, $Z \xrightarrow{} \varepsilon$ and $Z \in V^0$, then 
$((Z,\circ),b,T) \in E$ for any $T \in V$.
Each of these edges is also matched with each edge of the form $((X,\circ)),a,(Y,(b,T)))$.
\end{itemize}
Note that all states $(X,\circ)$ have the same outgoing edges. 
A state $(X,\circ)$ is \emph{nullable} if $X$ generates the empty word.

We claim that if $w$ is a word generated by $X$ in $G$, there is 
an admissible path in $\A$ labeled by $w$ from $X$ to some nullable state $T$ or
$(T,\$)$. 

The proof is by induction on the size of $w$. Let us first consider
the case $w=\varepsilon$. If $w$ is generated by $Z$, then 
$Z \rightarrow \varepsilon$ is a rule of $G$. Thus the claim is true.

If $w$ is nonempty, since $w$ is generated by $X$, then either $X \rightarrow aY \in P$,
$w=aw_1$  with $w_1$ is generated by $Y$ and $a$ is not matched with
symbols of $w_1$, or $X \rightarrow aY bZ\in P$, $w=aw_1bw_2$ and $w_1,w_2$ are generated by $Y$ and $Z$ respectively, 
with $Y \in V^0$.

In the first case, there is an edge $(X,a,(Y,\$))$. By induction, there is 
an admissible path in $\A$ from $Y$ to some nullable state $T$ or
$(T,\$)$. Thus there is an admissible path labeled by $w_1$ 
from $(Y,\$)$ to some nullable state $T$ or $(T,\$)$ and thus there
is an admissible path
labeled by $w$ from $X$ to some nullable state $T$ or $(T,\$)$.

In the second case, there is an edge $(X,a,(Y,(b,Z)))$.
By induction, there is an admissible path labeled by $w_1$ 
from $(Y,(b,Z))$ to some nullable state $T$ or $(T,\$)$.
There is also an admissible path labeled by $w_2$
from $(Z,\circ)$ to some nullable state $U$ or $(U,\$)$
and an edge $((T,\circ),b,Z)$. Thus we obtain the path
\begin{equation*}
X \xrightarrow{a} (Y,(b,Z)) \xrightarrow{w_1} T (\text{or } (T,\$))
\xrightarrow{b}  Z \xrightarrow{w_2} U (\text{or } (U,\$)).
\end{equation*}
Since $T$ is nullable and in $V^0$, any edge $((T,\circ),b,Z)$ is
matched with $(X,a$, $(Y,(b,Z)))$,
this path is an admissible path labeled by $w$ going 
from $X$ to either $U$ or $(U,\$)$.
Thus $L$ is included in the set of labels of admissible paths of $\A$.

Conversely, let $w$ be the label of an admissible path $\pi$ in
$\A$ starting at a state $(X,\circ)$. 
Then $w$ is a prefix of a word generated by $X$ in $G$. 
If $w$ is moreover a Dyck word and $X$ is nullable, then $w$ is generated by $X$. 


The proof is again by induction on the size of $w$. Note that it holds
for the empty word.
We first decompose $\pi$ into one of the following paths:
\begin{enumerate}
\item $(X,\circ) \xrightarrow{a} Y  \xrightarrow{w_1} (U,\circ), \text{
  with } a \in A_i,$
\item  $(X,\circ) \xrightarrow{a} (Y,\$)  \xrightarrow{w_1} (U,\circ)$, 
\text{$a \in A_c$ not matched with letters of $w_1$}
\item 
$(X,\circ) \xrightarrow{a} Y \xrightarrow{w_1} (U,\circ), \text{
  with }  a \in A_r,$
\item
$(X,\circ) \xrightarrow{a} (Y,(b,Z)) \xrightarrow{w_1} (U,\circ)$,
\text{$a$ not matched with letters of $w_1$},
\item
$(X,\circ) \xrightarrow{a} (Y,(b,Z)) \xrightarrow{w_1} (T,\circ)
\xrightarrow{b} Z \xrightarrow{w_2} (U,\circ), 
\text{ with } a \in A_c,  b \in A_r, w_1 \text{ is a Dyck word}$.
\end{enumerate}
In Cases (1) to (3), by induction, $w_1$ is a prefix of a word generated by
$Y$ and there is a rule $X \rightarrow a Y$ in $G$. Thus $aw_1$
is a prefix of a word generated by~$X$. 
If $w$ is a Dyck word and $U$ is nullable, then $w = aw_1$, where
$w_1$ is a Dyck word $a \in A_i$.  By induction hypothesis, the word $w_1$
is generated by $Y$ and thus $w$ is generated by $X$.



In Case (4), by induction, $w_1$ is a prefix of a word generated by
$Y$ and there is a rule $X \rightarrow aYbZ$ in $G$. Thus $aw_1$
is a prefix of a word generated by $X$. The word $w$ is never a Dyck word.

In Case (5), there is a rule $X \rightarrow aYbZ$ in $G$ and the
edges $((X,\circ),a,$ $(Y,(b,Z))$ and $((T,\circ),b,Z)$ are matched
Thus $T$ is nullable and in $V^0$.
By induction $w_2$ is a prefix of a word generated by $Z$. Since $w_1$ is
a Dyck word, by induction again, $w_1$ is generated
by $Y$. It follows that $w$ is a prefix of a word generated by $X$.
If $w$ is a Dyck word, then $w_2$ is a Dyck word and thus $w_2$
is generated by $Z$. As a consequence, $w$ is generated by $X$.

Thus labels of admissible paths of $\A$ are prefixes of words of $L$.
Since $L$ is factorial, they belong to $L$. As a consequence $L$ is the set
of labels of finite admissible paths of $\A$. By definition,
$\B^{-1}(L)$ is the set of infinite sequences whose finite factors
belong to $L$ and thus $\A$ presents $\B^{-1}(L)$. 
\end{proof}


This gives the following characterization of sofic-Dyck shifts.
\begin{theorem} \label{theorem.VPL2}
Sofic-Dyck shifts over $A$ are shifts $\X_F$ where $F$ is a visibly
pushdown language over $A$.
\end{theorem}
\begin{proof}
If $X$ is a sofic-Dyck shift over $A$, then Theorem \ref{theorem.VPL} says
that $\B(X)$ is a visibly pushdown language over $A$. Let $F = A^*
\setminus \B(X)$. Since visibly pushdown languages are closed y
complementation, $F$ is visibly pushdown and $X = X_\F$. 

Conversely, if $X = \X_F$ where $F$ is a visibly
pushdown language. Let $L = A^* \setminus F$ which is a factorial
visibly pushdown language. 
The set $\B(X)$ is the set of extensible words of $L$ and is thus
visibly pushdown. Thus $X$ is sofic-Dyck.
\end{proof}

\begin{proposition}
It is decidable whether a sofic-Dyck shift is empty.
\end{proposition}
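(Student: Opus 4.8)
The plan is to reduce the emptiness problem for a sofic-Dyck shift to the emptiness problem for a visibly pushdown (hence context-free) language, which is classically decidable. Let $X$ be the sofic-Dyck shift presented by a given Dyck automaton $\A$, and let $L$ be the language of labels of finite admissible paths of $\A$. By Proposition~\ref{proposition.admissibleVPL}, $L$ is a visibly pushdown language, and inspecting that proof the witnessing visibly pushdown automaton is built explicitly from $\A$, so $L$ is effectively available. Note that $L$ is factorial, since factors of finite admissible paths are admissible.

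The first step is to observe that $X$ is nonempty if and only if $\E(L)$ is nonempty. If $x\in X$, then by Lemma~\ref{lemma.compacity} every finite factor of $x$ lies in $L$, and any fixed factor of $x$ is flanked inside $x$ by arbitrarily long factors that are themselves in $L$, so that factor belongs to $\E(L)$. Conversely, if $w\in\E(L)$, then for every $n$ there are words $u,v$ of length $>n$ with $uwv\in L$; since $L$ is factorial, the compactness argument of Lemma~\ref{lemma.compacity} yields a bi-infinite sequence all of whose finite factors lie in $L$, hence a point of $X$. (Equivalently one may invoke $\B(X)=\E(L)$, as in the proof of Theorem~\ref{proposition.VPL}, together with the standard correspondence between a subshift and its language of blocks from Section~\ref{section.shift}.)

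The second step is to make $\E(L)$ effectively available as a visibly pushdown language. By Proposition~\ref{proposition.biextensible}, $\E(L)$ is a visibly pushdown language; one must check that its construction is effective. The proof of that proposition builds a visibly pushdown grammar $G'$ from a grammar for $L$, and the only step that is not purely syntactic is deciding, for each variable $X$, whether $X\xRightarrow{*}uXv$ or $X\xRightarrow{*}uX$ for some $u,v\in A^+$; both are reachability questions about the grammar and are decidable, and indeed this effectivity is already remarked at the end of that proof. The remaining operations used there — intersection with a regular language, closure under factors, union — are all effective on visibly pushdown automata. Hence from $\A$ one effectively produces a pushdown automaton accepting $\E(L)$.

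Finally, emptiness of the language of a pushdown automaton is decidable (equivalently: test whether the start symbol of a context-free grammar generates any terminal word); applying this to the automaton obtained above decides whether $\E(L)=\emptyset$, hence whether $X=\emptyset$. The only genuine work is the bookkeeping of the previous paragraph — checking that every transformation in the chain from $\A$ to $L$ to $\E(L)$ to a pushdown automaton is effective — and I expect the effectivity clause of Proposition~\ref{proposition.biextensible} to be the crux. It is also worth stating explicitly why one cannot simply test $L\neq\emptyset$: the label of a finite admissible path need not be a block of $X$, so the passage to the bi-extensible subset $\E(L)$ is essential.
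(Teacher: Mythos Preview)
Your proof is correct and follows essentially the same approach as the paper: reduce to emptiness of the visibly pushdown language $\B(X)$ and invoke decidability of emptiness for context-free languages. The paper simply cites Theorem~\ref{proposition.VPL} to get that $\B(X)$ is a (computable) visibly pushdown language and then uses $X\neq\emptyset \iff \B(X)\neq\emptyset$, whereas you unpack that reference into its constituents (Proposition~\ref{proposition.admissibleVPL}, Proposition~\ref{proposition.biextensible}, and the identification $\B(X)=\E(L)$) and are more explicit about the effectivity of each step; the extra care about why testing $L\neq\emptyset$ directly would not suffice is a nice addition but not a different method.
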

\begin{proof}
Let $X$ be a sofic-Dyck shift. By Theorem \ref{theorem.VPL}, 
the set of blocks of $X$ is generated by a context-free grammar
which is furthermore computable from some Dyck automaton accepting
the sofic-Dyck shift. Since the emptiness is decidable for a
language generated by a context-free grammar, the emptiness of $X$ is decidable. Indeed, $X$ is nonempty
if and only if its set of blocks is nonempty.
\end{proof}

\section{Zeta function of sofic-Dyck
  shifts} \label{section.zeta}

Zeta functions count the periodic orbits of subshifts and constitute
stronger invariants by conjugacies than the entropy (see
\cite{LindMarcus1995}).

In this section, we give an expression of the zeta function of a 
sofic-Dyck shift which extends the formula obtained by Krieger and
Matsumoto in \cite{KriegerMatsumoto2011} for Markov-Dyck shifts.
The proof of Krieger and Matsumoto is based on Markov-Dyck
codes which encode periodic sequences. We use a similar
encoding to compute the zeta function of sofic-Dyck shifts.

As counting periodic points for sofic shifts is trickier than for
shifts of finite type, counting periodic points of sofic-Dyck shifts
is also trickier than for Markov-Dyck of finite-type-Dyck shifts.

\subsection{Definition and general formula} \label{section.zetaDefinition}

The \emph{zeta
  function}  $\zeta_X(z)$ of the shift $X$ is defined as the
zeta function of its set of periodic patterns, \ie 
\begin{equation*}
\zeta_X(z) = \exp \sum_{n \geq 1} p_n \frac{z^n}{n},
\end{equation*}
where $p_n$ the number of sequences of $X$ of period $n$, \ie of sequences 
$x$ such that $\sigma^n(x) = x$. Note that $n$ may not be the smallest
period of $x$.

Call \emph{periodic pattern} of $X$ a word $u$ such that the
bi-infinite concatenation of $u$ belongs to $X$ and denote $P(X)$
the set of periodic patterns of $X$. 
These definitions are extended to $\sigma$-invariant 
sets of bi-infinite sequences which may not be
shifts (\ie which may not be closed subsets of sequences).



Let $\A$ be a Dyck automaton over a tri-partitioned alphabet $A$. 

We say that a Dyck word $w$ over $A$ is \emph{prime} if it is nonempty
and any Dyck word prefix of $w$ is $w$ or the empty word. We denote by
$\Prime(A)$ the set of prime Dyck words over $A$ and by $\Prime(X)$
the set of prime Dyck words which are blocks of a shift $X$.

We define the following matrices where $Q$ is the set of states of $\A$.
\begin{itemize}
\item  $C=(C_{pq})_{p,q \in Q}$ where $C_{pq}$ is the
set of prime Dyck words labeling an admissible path from
$p$ to $q$ in $\A$.

\item $M_c=(M_{c,pq})$, (\resp $M_r$) where $M_{c,pq}$ is the sum of call (\resp
return) letters labeling
an edge from $p$ to $q$ in $\A$.

\end{itemize}
Let $H$ be one of the matrices $C$, $C{M_c}^*$, $M_c$, ${M_r}^*C$ or $M_r$.
We call \emph{$H$-path} a path $(p_i,c_i,p_{i+1})_{i \in
  I}$ in $\A$, where $I$ is $\Z$ or an interval and $c_i \in
H_{p_ip_{i+1}}$. Note that an $H$-path is admissible. 
We denote by $\X_H$ be the $\sigma$-invariant set containing all
of sequences labeling a bi-infinite $H$-path of $\A$.

\begin{proposition} \label{proposition.P(X)}
Let $X$ be a the sofic-Dyck shift accepted by a Dyck automaton $\A$. We have
$P(X) = P(\X_{M_c}) \sqcup P(\X_{M_r}) \sqcup
((P(\X_{C{M_c}^*}) \cup p(\X_{{M_r}^*C})))$, and
$P(X_C) = P(\X_{C{M_c}^*}) \cap P(\X_{{M_r}^*C})$,
where $\sqcup$ denotes a disjoint union.
\end{proposition}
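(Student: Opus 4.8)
The plan is to characterize periodic patterns of $X$ by looking at a bi-infinite periodic admissible path and tracking which call letters remain unmatched, which return letters remain unmatched, and how prime Dyck words fit together. Given a periodic point $x = \cdots uuu \cdots$ of $X$, fix a bi-infinite admissible path $\pi$ labeled by $x$; since the label is periodic of period $n = |u|$, we may by a compacity/pigeonhole argument choose $\pi$ itself to be periodic (the set of periodic admissible paths with label $x$ is non-empty because among the finitely many states there must be one repeating at multiples of $n$, and admissibility is a local-plus-matching property preserved under this folding). The matching relation on $\pi$ pairs certain call edges with certain return edges. The key dichotomy is: either every call in $\pi$ is matched (equivalently $x \in P(\X_{M_c \cup C})$ in a suitable sense), or some call is unmatched, and symmetrically for returns. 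Because $\pi$ is periodic, if one call is unmatched then by periodicity infinitely many are, and the ``excess'' of calls over returns per period is a fixed integer; this integer is either $0$, positive, or negative, and these three cases are mutually exclusive.

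First I would treat the all-matched case. If $\pi$ is periodic and every edge is either an internal edge or lies in a matched call/return pair that is itself a factor of $\pi$, then reading off maximal prime Dyck blocks along $\pi$ exhibits $x$ as the label of a bi-infinite $C$-path, so $x \in P(\X_C)$; conversely any periodic $C$-path has all calls matched. I would then show $P(\X_C) = P(\X_{C M_c^*}) \cap P(\X_{M_r^* C})$ by noting that a periodic $C$-path is in particular a periodic $C M_c^*$-path (take the $M_c^*$ part empty) and a periodic $M_r^* C$-path, and conversely if a periodic point is presented both as a $C M_c^*$-path and as an $M_r^* C$-path then the first presentation forces all returns to be consumed inside prime Dyck blocks (no leftover $M_r$) while the second forces all calls to be (no leftover $M_c$), so the path is purely a concatenation of prime Dyck words and internal edges — a $C$-path. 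This is the cleanest part.

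Next I would handle the unmatched cases. Suppose some call edge of the periodic path $\pi$ is never matched. Cutting $\pi$ at an unmatched call, the structure to its right (up to the next unmatched call, one period later) is: a prime-Dyck-or-internal stretch, then the call letter, then more prime-Dyck-or-internal stretch, i.e. the label decomposes periodically as blocks from $C$ interleaved with single call letters where after each call the remainder before the next call is a $C M_c$ pattern — precisely a bi-infinite $C M_c^*$-path after regrouping (each call letter absorbs the maximal following Dyck-block sequence as its $M_c^*$-tail, or symmetrically). So $x \in P(\X_{C M_c^*})$; moreover since calls outnumber returns per period, $x \notin P(\X_{M_r})$ and $x \notin P(\X_C)$ and $x\notin P(\X_{M_c})$ (the latter because $M_c$-paths use only call letters, no Dyck blocks or internals, unless the period is a pure block of call letters — which is the separate $P(\X_{M_c})$ summand). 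Symmetrically, unmatched returns give $x \in P(\X_{M_r^* C})$. The degenerate sub-cases where the period consists only of call letters, or only of return letters, are exactly $P(\X_{M_c})$ and $P(\X_{M_r})$; these are disjoint from everything else because a $C M_c^*$-path or $M_r^* C$-path that contains a genuine prime Dyck block (or internal letter) cannot be a pure-call or pure-return periodic word, and a word cannot be simultaneously all-calls and all-returns.

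The main obstacle I expect is bookkeeping the regrouping of a periodic admissible path into an $H$-path for $H = C M_c^*$ or $M_r^* C$, and in particular verifying the disjointness claims rigorously: one must argue that the ``number of unmatched calls minus unmatched returns per period'' is a well-defined invariant of the periodic point $x$ (independent of the chosen path $\pi$), which requires observing that this equals $\#\{\text{calls in } u\} - \#\{\text{returns in } u\}$, a quantity read directly off the word $u$, hence path-independent. Once that invariant is in hand, the trichotomy positive/zero/negative gives the disjoint union $P(\X_{M_c}) \sqcup P(\X_{M_r}) \sqcup (\text{rest})$ for free, and the only remaining subtlety is that within the ``rest'' the two presentations $C M_c^*$ and $M_r^* C$ may overlap (on exactly $P(\X_C)$, by the intersection formula), which is why the statement has $\cup$ there rather than $\sqcup$. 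I would also double-check the boundary case $n$ not minimal and the case of periods lying entirely over $A_i$ (these are Dyck words, handled inside the all-matched branch), to make sure no periodic point is double-counted or missed.
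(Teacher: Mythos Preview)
Your core idea --- split on the sign of your ``excess of calls over returns per period'', i.e.\ on $\bal(u)$ --- is exactly the paper's. The paper, however, never selects a periodic admissible \emph{path}: it works purely at the word level, passing to a suitable conjugate of $u$ (via a cycle-lemma-type argument) and decomposing that conjugate as a product in $C^*$, $(CM_c^*)^+$, or $(M_r^*C)^+$ according to whether $\bal(u)$ is zero, positive, or negative; then \emph{any} bi-infinite admissible path labeling $u^\infty$ restricts on each block to an admissible path, which already exhibits the required $H$-path structure. Your opening step of folding to a periodic $\pi$ is therefore dispensable, and your justification (``admissibility is a local-plus-matching property preserved under this folding'') skates over the fact that matchings can be long-range --- the folding does in fact work, but it needs the observation that when $\bal(u)\neq 0$ matched pairs have bounded span, and when $\bal(u)=0$ they lie inside single Dyck blocks. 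On the plus side, you handle the disjointness of $P(\X_{M_c})$, $P(\X_{M_r})$ from the third summand and the intersection identity $P(\X_C) = P(\X_{CM_c^*}) \cap P(\X_{M_r^*C})$ more explicitly than the paper, which leaves these essentially implicit in the balance trichotomy.
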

For a finite word $u$, we denote the \emph{balance} of $u$ by
$\bal(u)$. It is the difference between the number of letters of $u$ in $A_c$ and the number
of letters of $u$ in $A_r$. A word $u$ is \emph{positive} if $\bal(u)
>0$ and $\bal(v) \geq 0$ for any prefix $v$ of $u$. 
We
say that $u$ and $v$ are \emph{conjugate} if they are words $w,t$ such that
$u=wt$ and $v= tw$. 
\begin{proof}
Let us assume that a sequence $x$ of $X$ is equal to $u^\infty= \dotsm uu\cdot uu
\dotsm $. Let $u = u_0u_1 \cdots u_{n-1}$ where $u_i$ are letters. 
We consider the following three cases.
\begin{itemize}
\item If $\bal(u)= 0$, then $u$ is conjugate to a word in $\Prime(X)^*$ and thus $x$ is a periodic
point of $\X_C$. 
\item If $\bal(u) > 0$, then $u$ is conjugate to a word $v$ such that
$\bal(v_0 \ldots v_{i}) \geq 0$ for any $0 \leq i \leq n-1$.
If $v \in A_c^+$, then $x$ is a periodic point of $\X_{M_c}$.
If $v \notin A_c^+$, there are two indices $0 \leq m_1 < m_2 \leq n-1$
such that $\bal(v_0 \ldots v_{m_1}) = \bal(v_0 \ldots v_{m_2})$. 
Let $(m_1,m_2)$ two such indices with moreover $\bal(v_0 \ldots
v_{m_1}) = \bal(v_0 \ldots v_{m_2})$ minimal. 
Let $w = v_{m_1} \cdots v_{n-1}v_0 \cdots v_{m_1 -1}$. The word $w$ is
again a conjugate of $v$ and $u$.

Let $j_1$ be the largest integer less than or equal to $n-1$ 
  such that $w_{0} \cdots w_{j_1}$ has a suffix in $\Prime(X)$ and $i_1$ be
  the smallest integer such that $w_{i_1} \cdots w_{j_1} \in \Prime(X)^*$.
Then $w_{i_1} \cdots w_{n-1} \in \Prime(X)^+A_c^*$ and $w_0\cdots w_{i_1-1}$ is
a positive word. We define indices $i_2,j_2$ similarly for the word
$w_{0} \cdots w_{i_1-1}$ and thus iteratively decompose $w$ into a product of
words in $\Prime(X)A_c^*$. It follows that $x$ belongs to $\X_{CM_c^*}$.
\item If $\bal(u) < 0$, we denote by $\tilde{u}$ the word $u_{n-1}
  \cdots u_0$. By exchanging the roles played by call and return
  symbols, we have $\bal(\tilde{u}) > 0$ and thus 
either $\tilde{u}$ is conjugate to a word in $A_r^+$ or 
$\tilde{u}$ is conjugate to a word in $(\tilde{P}A_r^*)^+$, where
$\tilde{P}=\{\tilde{c} \mid c \in \Prime(X) \}$. We thus get that $u$
is conjugate to a word in  $(A_r^*\Prime(X))^+$ and $x$ belongs to $\X_{M_r^*C}$.
\end{itemize}
\end{proof}

As a consequence, we obtain the following expression of the zeta
function of a sofic-Dyck shift.

\begin{proposition} \label{proposition.decomposition}
Let $X$ be a sofic-Dyck shift presented by a Dyck automaton $\A$ and
$C$, $M_r$, $M_c$ defined as above from $\A$.
The zeta function of $X$ is
\begin{equation} 
 \zeta_{X}(z)  = \frac{\zeta_{\X_{C{M_c}^*}}(z) \zeta_{\X_{{M_r}^*C}}(z) 
\zeta_{\X_{M_c}}(z)  \zeta_{\X_{M_r}}(z)}{\zeta_{\X_C}(z)}
\end{equation}
\end{proposition}
\begin{proof}
The formula is a direct consequence of Proposition
\ref{proposition.P(X)} and of the definition of the zeta function.
\end{proof}

We recall below the notion of circular codes (see for instance
\cite{BerstelPerrinReutenauer2010}).
We say that a subset $S$ of nonempty words over $A^*$ is
\emph{a circular code} if for all $n,m \geq 1$ and
$x_1, x_2, \dotsc, x_n \in S$, $y_1, y_2,\dotsc, y_m \in
S$ and $p \in A^*$ and  $s \in A^+$, the equalities
\begin{align}
sx_2x_3\dotsm x_np &= y_1y_2\dotsm y_m,\label{equation.1}\\
x_1&=ps \label{equation.2}
\end{align}
implies 
\begin{align*}
n=m && p=\varepsilon && \text{ and } && x_i=y_i&& (1\leq i \leq n).
\end{align*}

\begin{proposition}
Let $A$ be a tri-partitioned alphabet. 
The sets $\Prime(A)$ and $\Prime(A){A_c}^*$ are circular codes.
\end{proposition}
\begin{proof}
We prove that $\Prime(A) {A_c}^*$ is circular. This implies that
its subset $\Prime(A)$ is circular.

Let us suppose that Equations \ref{equation.1} and \ref{equation.2}
imply $n=m$ and $x_i=y_i$ for $n+m < N$. Assume that Equations
\ref{equation.1} and \ref{equation.2} hold for some $n,m$ with
$n+m=N$.




Let us assume that $s \neq x_1$. Since $s$ is a prefix of some $y_1y_2
\cdots y_j$ and a suffix of $x_1$, we have $s \in \Prime(A)$ or $s \in
{A_c}^+$.  As $ps \in \Prime(A){A_c}^*$, we get that $p \in
\Prime(A)$ and $s \in {A_c}^+$. It implies that the balance of 
each nonempty prefix of $sx_2x_3\dotsm x_np$ is positive, in
contradiction with $y_1$ prefix of $sx_2x_3\dotsm x_np$.
Hence $s=x_1$ and $p = \varepsilon$. If $y_1\neq x_1$, one of these
two words is a prefix of the other. Let us assume that $x_1 = y_1z$
with $z \in A_c^*$. Then $zx_2x_3\dotsm x_n$ is positive, a
contradiction with that fact that it has $y_2$ as prefix.   Thus
$x_1=y_1$. By iteration of this process, we get $n=m$ and $x_i=y_i$.
\end{proof}

The notion of circular matrix below extends the classical notion of circular codes.
We say that the matrix $(H_{pq})_{p,q \in Q}$, where each $H_{pq}$ is a
set of nonempty words over $A$ is \emph{circular} if for all $n,m \geq 1$ and
$x_1 \in H_{p_0,p_1}, x_2 \in H_{p_1,p_2}$, $\dotsc, x_n \in
H_{p_{n-1}p_{0}}$, $y_1 \in H_{q_0,q_1}, y_2 \in H_{q_1,q_2},\dotsc, y_m \in
H_{q_{m-1}q_{0}}$ and $p \in A^*$ and  $s \in A^+$, the equalities
\begin{align}
sx_2x_3\dotsm x_np &= y_1y_2\dotsm y_m,\label{equation.4}\\
x_1&=ps \label{equation.5}
\end{align}
implies 
\begin{align*}
n=m && p=\varepsilon && \text{ and } && x_i=y_i && (1\leq i \leq n).
\end{align*}

\begin{proposition}
Let $\A$ be a Dyck automaton. The matrices $C$, $M_c$ and $C{M_c}^*$
defined from $\A$ are circular matrices. 
\end{proposition}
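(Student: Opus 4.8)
The plan is to verify the circularity condition directly from the combinatorial structure of the three matrices, reducing everything to the case of $C$ and then bootstrapping to $C{M_c}^*$ and $M_c$. I would first isolate the key structural fact about entries of $C$: each $C_{pq}$ consists of \emph{prime} Dyck words labeling admissible paths, so any entry of $C$ has $\bal = 0$, begins with a call letter, ends with a return letter, and — crucially — no proper nonempty prefix of a prime Dyck word is itself a Dyck word. Entries of $M_c$ are single call letters, hence $\bal = +1$, and entries of $C{M_c}^*$ are concatenations $w a_1 \cdots a_k$ with $w$ a prime Dyck word and the $a_j$ call letters, so they begin with a call letter and have $\bal = k \geq 0$.

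For $C$ itself: suppose $s x_2 \cdots x_n p = y_1 \cdots y_m$ and $x_1 = ps$ with $s \in A^+$. Since $x_1 = ps$ is a prime Dyck word and $p$ is a proper prefix of it, $\bal(p) > 0$ unless $p = \varepsilon$ (a nonempty proper prefix of a prime Dyck word cannot have $\bal = 0$, else it would be a Dyck word by matched-return-ness, contradicting primeness — and it cannot be a prefix with $\bal < 0$). If $p \neq \varepsilon$, then reading the left side $s x_2 \cdots x_n p$: after the prefix $s x_2 \cdots x_n$ (which has $\bal = -\bal(p) < 0$, since the whole thing is the Dyck word $x_1$ read as $ps$, wait — $s x_2 \cdots x_n p$ equals $y_1 \cdots y_m$ which is a Dyck word, and each $x_i$, $y_j$ has $\bal 0$, so $\bal(s) = -\bal(p)$). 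The right side $y_1 \cdots y_m$, being a product of entries of $C$, factors uniquely into prime Dyck words at the points where the balance returns to $0$ \emph{for the first time after each factor} — this is exactly the defining property of prime Dyck words. Comparing this canonical prime-factorization of the Dyck word $y_1 \cdots y_m$ with the description coming from the left-hand side forces the factor boundaries to align; the presence of the nonzero-balance pieces $s$ and $p$ on the left would create a prime factor straddling a boundary, which is impossible. This forces $p = \varepsilon$, hence $s = x_1$, and then $x_1 x_2 \cdots x_n = y_1 \cdots y_m$; uniqueness of prime factorization of a Dyck word gives $n = m$ and $x_i = y_i$.

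For $C{M_c}^*$ and $M_c$ I would run the same argument tracking the balance function more carefully, since entries now have $\bal \geq 0$. The point is that an entry $x$ of $C{M_c}^*$ has a canonical ``landmark'': reading $x$ from the left, the balance starts at $0$, returns to $0$ exactly once (at the end of the prime Dyck prefix $w$), and thereafter strictly increases. So in a product $x_1 x_2 \cdots$ of such entries, the factor boundaries are recoverable as the positions where the running balance achieves a new strict minimum that is then immediately followed by a return to a previous level — more simply, the start of each $x_i$ is the last position before which the balance equals $\bal(x_1 \cdots x_{i-1})$ and is about to drop (inside the $w$-part) or has been constant. From equation \eqref{equation.2}, $x_1 = ps$, and since $x_1$ begins with a call letter, if $p \neq \varepsilon$ then $p$ begins with a call; analysing where the balance of $p$ sits relative to these landmarks, together with \eqref{equation.1}, again forces $p = \varepsilon$ and then componentwise equality. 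The main obstacle, and the step I would spend the most care on, is making the ``landmark alignment'' argument fully rigorous for $C{M_c}^*$: one must rule out the degenerate configurations where a $w$-part of one factor on the right is split across two factors on the left, which requires simultaneously exploiting primeness of the Dyck prefixes and the monotonicity of the $M_c$-tails, and handling the edge case where some $w$-part is empty (so the entry is a pure call-letter word in $M_c^+$). I expect $M_c$ alone to be the easiest case, essentially the classical circularity of a prefix code of single letters.
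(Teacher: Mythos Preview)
Your treatment of $C$ and $M_c$ is correct and matches the paper's line: the single observation $\bal(p)>0$ for any nonempty proper prefix $p$ of a prime Dyck word, combined with $p$ being a suffix of the Dyck word $y_1\cdots y_m$ (hence $\bal(p)\le 0$), forces $p=\varepsilon$; unique prime factorisation then gives $n=m$ and $x_i=y_i$. Your phrasing via ``prime-factor boundaries'' is just a rewording of this.

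For $C M_c^*$ your landmark idea is headed in the right direction but is heavier than needed, and as you say yourself it is not yet a proof. The paper's argument is a two-line balance dichotomy that you could adopt instead. Write $x_1=w v$ with $w$ a prime Dyck word and $v\in A_c^*$. If the proper suffix $s$ of $x_1$ reaches into $w$ (so $s=w''v$ with $w''$ a nonempty proper suffix of $w$), then $\bal(w'')<0$, hence $s$ has a prefix of negative balance --- impossible since $s$ is a prefix of $y_1\cdots y_m$, whose every prefix has nonnegative balance. Therefore $s\in A_c^+$ and $p\in C A_c^*$. But then every nonempty prefix of $s x_2\cdots x_n p$ has strictly positive balance, while $y_1$ begins with a prime Dyck word (a nonempty prefix of balance $0$): contradiction. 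Hence $p=\varepsilon$, and the unique-factorisation argument for the language $\bigcup_{p,q}(CM_c^*)_{pq}$ finishes the proof. This replaces your ``landmark alignment'' entirely.

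Two small slips to fix. First, a prime Dyck word need not begin with a call letter: a single internal letter is prime, so your sentence ``any entry of $C$\ldots begins with a call letter, ends with a return letter'' is false (this does not harm the argument, since such $x_1$ have length $1$ and force $p=\varepsilon$ immediately). Second, your ``edge case where some $w$-part is empty'' does not arise: by definition prime Dyck words are nonempty, so every element of $C M_c^*$ has a nonempty Dyck prefix; there is no pure $M_c^+$ entry to worry about.
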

\begin{proof}
It is a direct consequence of the fact that $\Prime(A)$, $A$ and
$\Prime(A)A_c^*$ are circular codes.
\end{proof}

We say that $\A$ is \emph{left reduced} (\resp \emph{right reduced}) 
it is the left (\resp right) reduction of some Dyck automaton.

We say that $\A$ is 
\emph{$H$-deterministic} if and only if for any two (admissible)
$H$-paths sharing the same start and label are
equal.

\begin{proposition}
If $\A$ is left reduced, it is $H$-deterministic when $H$ is $M_c$, $C$ or $C{M_c}^*$.
\end{proposition}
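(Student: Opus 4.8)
The plan is to reduce each of the three cases ($H = M_c$, $H = C$, $H = C M_c^*$) to the weak-determinism of the left reduction established in Proposition~\ref{proposition.weakDeterministic}, together with its consequence Corollary~\ref{proposition.DyckDeterministic} that the left reduction is $\Dyck(A)$-deterministic. Recall that an $H$-path is in particular an admissible path, and that since $\A$ is left reduced, every admissible path of $\A$ has the same label as an admissible path starting at the initial state $I$ (the remark following Proposition~\ref{proposition.weakDeterministic} is stated for paths starting at $I$, but $\Dyck(A)$-determinism via Corollary~\ref{proposition.DyckDeterministic} gives determinism from an arbitrary state for Dyck-word labels).

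First I would handle $H = M_c$. An $M_c$-path is just a path reading a word letter-by-letter over $A_c$, so two $M_c$-paths with the same start and the same label $u = a_1 \cdots a_k$ (all $a_i \in A_c$) are two admissible paths reading the same word. Since the left reduction is $A_c$-deterministic (stated after the construction of the left reduction), at each step the next edge is uniquely determined, so the two paths coincide. Next, for $H = C$: a $C$-path reads a sequence $w_1 w_2 \cdots$ of prime Dyck words, and its label is the concatenation $w = w_1 w_2 \cdots$. The key point is that the decomposition of a Dyck word into prime Dyck factors is \emph{unique} — this is where primeness is used: the first prime factor $w_1$ is the shortest nonempty Dyck prefix of $w$, hence determined by $w$ alone, and one peels off factors inductively. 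So two $C$-paths with the same label pass through the \emph{same} intermediate states after each $w_i$ provided the sub-path reading each $w_i$ is uniquely determined; but each $w_i$ is a Dyck word, so by Corollary~\ref{proposition.DyckDeterministic} the admissible path reading $w_i$ from a given state is unique. Chaining these gives that the two $C$-paths are equal.

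For $H = C M_c^*$, an $H$-path reads blocks of the form $w \cdot v$ where $w \in C$ (a prime Dyck word) and $v \in M_c^*$ (a, possibly empty, word over $A_c$). The subtlety is that the concatenation of such blocks need not make the block boundaries recoverable from the label alone: a prime Dyck word $w$ can begin right after some call letters of the preceding $M_c^*$-part. The main obstacle, then, is to argue that the \emph{path} nonetheless has no branching. Here I would use that within one block, after reading the prime Dyck word $w$ (by $\Dyck(A)$-determinism the path is forced), the $M_c^*$ tail is read by forced $A_c$-deterministic steps, and then the next block starts with a call edge whose matched return lies further along — but at the junction we are simply continuing to read call letters or starting a Dyck word, and in both situations the left reduction, being $A_c$-deterministic and $\Dyck(A)$-deterministic, forces the next edge. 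More carefully: reading the full label left to right, at every position the edge to take is determined either because the current letter is a call (so $A_c$-determinism applies) or because we are at the start of a maximal Dyck-word factor (so $\Dyck(A)$-determinism applies after identifying that factor, whose extent — by primeness of the $C$-parts and the fact that $M_c^*$ contributes only unmatched calls — is fixed by the label). Since two $C M_c^*$-paths with the same start and label thus agree edge by edge, they are equal. I expect this last bookkeeping — showing that the positions of the ``Dyck-word starts'' depend only on the label, so that determinism can be applied uniformly — to be the crux of the argument; the cases $M_c$ and $C$ are essentially immediate from the already-established determinism properties.
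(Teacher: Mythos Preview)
Your proposal is correct and follows the same approach as the paper for $H=M_c$ (immediate from $A_c$-determinism of the left reduction) and $H=C$ (via Corollary~\ref{proposition.DyckDeterministic}). The paper's own proof is in fact only two sentences and treats just these two cases, leaving $H=CM_c^*$ implicit; your write-up goes further and supplies the missing argument.

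For $CM_c^*$, your worry that the block boundaries might not be recoverable from the label is unfounded, and this actually simplifies your sketch. The label of a $CM_c^*$-path lies in $\MR(A)$: every return letter is matched with a call letter inside one of the prime Dyck factors, and the letters of the $M_c^*$-tails are exactly the unmatched calls. Hence the label alone determines which positions belong to some $v_i\in A_c^*$ and which belong to the Dyck part; the Dyck part then factors uniquely into primes, recovering all the $w_i$'s and therefore all the $c_i=w_iv_i$. With the factorization fixed, your left-to-right argument finishes the job: each $w_i$ is a Dyck word (apply Corollary~\ref{proposition.DyckDeterministic} from the current state) and each $v_i$ is read letter by letter using $A_c$-determinism, so the underlying admissible path---and hence the $H$-path---is uniquely determined. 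You may streamline your exposition by stating this recoverability up front rather than treating it as the ``crux'' left for last.
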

\begin{proof}
The Dyck automaton $\A$ is $M_c$-deterministic by construction. It is $C$-deterministic
by Proposition \ref{proposition.DyckDeterministic}.
\end{proof}

One proves similarly that
\begin{proposition}
If $\A$ is right reduced, it is $H$-codeterministic for $H$ is $M_r$, $C$ or ${M_r}^*C$.
\end{proposition}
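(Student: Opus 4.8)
The plan is to mirror the proof of the companion statement for left-reduced automata, exploiting the left--right symmetry of the definition of a Dyck automaton. Recall that the right reduction of $\A$ is obtained from the left reduction construction by exchanging the roles of call and return edges, reversing edge directions, and reversing words; under this correspondence an admissible path of $\A$ reversed becomes an admissible path of the reversed automaton, a Dyck word stays a Dyck word (since reversing a Dyck word over $A$ yields a Dyck word once one also swaps $A_c$ and $A_r$), and a prime Dyck word stays prime. So the matrix $C$ for $\A$ corresponds to the transpose of the matrix $C$ for the reversed automaton, $M_r$ corresponds to the transpose of $M_c$, and ${M_r}^*C$ corresponds to the transpose of $C{M_c}^*$. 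The notion of $H$-codeterminism for $\A$ (at most one $H$-path with a given \emph{target} and label) is exactly the notion of $H$-determinism for the reversed automaton.

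First I would make this duality precise: define the reversal $\A^{\mathrm{op}}$ of a Dyck automaton by reversing all edges and swapping the call/return partition classes of the alphabet, and observe that $\pi \mapsto \widetilde{\pi}$ (reverse the path, reverse each label) is a bijection between admissible paths of $\A$ and admissible paths of $\A^{\mathrm{op}}$. Then I would check that this bijection carries $C$-paths of $\A$ to $\widetilde{C}$-paths of $\A^{\mathrm{op}}$ (where $\widetilde{C}$ is the matrix $C$ computed in $\A^{\mathrm{op}}$, which equals the transpose-up-to-reversal of the $C$ of $\A$), $M_r$-paths to $M_c$-paths, and ${M_r}^*C$-paths to $C{M_c}^*$-paths. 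The key point, already implicit in the text, is that the right reduction of $\A$ is by definition the reversal of the left reduction of $\A^{\mathrm{op}}$.

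Given this, the proof is a two-line invocation. If $\A$ is right reduced, then $\A = \B^{\mathrm{op}}$ where $\B$ is the left reduction of some Dyck automaton $\C$; so $\B$ is left reduced and hence $H$-deterministic for $H \in \{M_c, C, C{M_c}^*\}$ by the preceding proposition. Under the reversal bijection, two $H'$-paths of $\A$ with the same target and label correspond to two $H$-paths of $\B$ with the same start and label, where $H'$ is $M_r$, $C$, or ${M_r}^*C$ according as $H$ is $M_c$, $C$, or $C{M_c}^*$; by $H$-determinism of $\B$ these are equal, hence so are the original paths. Thus $\A$ is $H'$-codeterministic for $H' \in \{M_r, C, {M_r}^*C\}$.

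The main obstacle is purely bookkeeping: one must verify carefully that reversing a word over the pushdown alphabet $A$ interacts correctly with the Dyck/matched-return/matched-call conditions, i.e. that $\widetilde{u} \in \Dyck(A)$ with the swapped partition iff $u \in \Dyck(A)$, and likewise that primeness of Dyck words is preserved, and that a matched pair $((p,a,q),(r,b,s))$ of $\A$ becomes a matched pair of $\A^{\mathrm{op}}$ after swapping the components. None of this is deep, but it is the step where an error could creep in, so I would spell out the reversal correspondence once and for all (ideally earlier, near the definition of the right reduction) and then state the present proposition as an immediate consequence, exactly as the paper does.
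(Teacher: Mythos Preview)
Your proposal is correct and matches the paper's approach: the paper gives no explicit proof, writing only ``One proves similarly that'' before the statement, and you have correctly unpacked what that symmetric argument is---namely the left--right duality obtained by reversing edges and swapping the call/return partition, so that the right reduction is the reversal of a left reduction and $H$-codeterminism for $M_r$, $C$, ${M_r}^*C$ follows from the already-proved $H$-determinism for $M_c$, $C$, $C{M_c}^*$.
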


In order to count periodic sequences of sofic-Dyck shifts, we need some
machinery similar to the one used to count the periodic sequences of sofic
shifts (see for instance \cite{LindMarcus1995}).

Let $\A$ be a Dyck automaton over $A$ where $\A=(\G,M)$ with $\G=(Q,E)$.
Let $\ell$ be a positive integer. We fix an ordering on the states $Q$.
We define the Dyck automaton $\A_{\otimes\ell}= (\G_{\otimes\ell},M_{\otimes\ell})$ over
a new alphabet $A'$ where $\G_{\otimes\ell} =(Q_{\otimes\ell},E_{\otimes\ell})$ as follows.

\begin{itemize}
\item We set $A'=(A'_c,A'_r,A'_i)$ with $A'_c=A_c \cup \{-a \mid a
\in A_c\}$, $A'_r=A_r \cup \{-a \mid a
\in A_r\}$, and $A'_i=A_i \cup \{-a \mid a \in A_i\}$.
\item
We denote by $Q_{\otimes\ell}$ the set of ordered $\ell$-uples of
distinct states
of $Q$.
\item Let $P=(p_1,\dotsc,p_\ell)$, $R=(r_1,\dotsc,r_\ell)$, 
be two elements of $Q_{\otimes\ell}$. Thus $p_1 <\dotsm  < p_\ell$ and
$r_1 <\dotsm  < r_\ell$.
There is an edge labeled by $a$ from $P$ to $R$ in
$\A_{\otimes \ell}$ if and only if there are edges labeled by $a$ 
from $p_i$ to $p'_i$ for $1 \leq i \leq \ell$ and $R$ 
is an even permutation of $(p'_1,\dotsc,p'_\ell)$. If the permutation is odd we assign
the label $-a$. Otherwise, there is no edge with label $a$ or $-a$
from $P$ to $R$. 
\item 
We define $M_{\otimes\ell}$ as the set of pairs of edges 
$((p_1,\dotsc,p_\ell),a,(p'_1,\dotsc,p'_\ell)),$ 
and $((r_1,\dotsc,r_\ell)$, $\pm b, (r'_1,\dotsc,r'_\ell))$ of
$\A_{\otimes\ell}$ such that each edge
$(p_i,a,p'_i)$ is matched with $(r_i,b,r'_i)$ for $1 \leq i \leq \ell$.
\end{itemize}

We say that a path of $\A_{\otimes\ell}$ is \emph{admissible} if it is
admissible when the signs of the labels are omitted, the sign of the
label of a path being the product of the signs of the labels of the edges of
the path.

We denote by $C_{\otimes\ell,PP'}$ the set of signed prime Dyck words $c$ labeling
an admissible path in $\A_{\otimes\ell}$ from $P$ to $P'$. 
We denote by $C_{\otimes \ell}$ the matrix $(C_{\otimes \ell,PP'})_{P,P' \in
  Q_{\otimes\ell}}$ whose coefficients are sums of signed words of
$A^+$. More generally, if $H$ denotes one of the matrices
$C, C{M_c}^*,M_c,{M_r}^*C,M_r$ defined from $\A$, we 
denote by $H_{\otimes
  \ell}$ the matrix defined from
$\A_{\otimes\ell}$ similarly.

\subsection{Computation of the zeta function of $\X_H$}

Denote $\Z \llangle A \rrangle$ the set of
noncommutative formal power series over 
the alphabet $A$ with coefficients in $\Z$.
Let $\Z [\![ A ]\!]$ be the usual commutative algebra of formal power series
in the variables $a$ in $A$ and 
$\pi \colon \Z \llangle A \rrangle\rightarrow \Z [\![ A ]\!]$ be the
natural homomorphism. Let $S$ 
be a commutative or noncommutative series. One can write $S = \sum_{n \geq 0} [S]_n$
where each $[S]_n$ is the homogeneous part of $S$ of degree $n$. 
We denote by $\theta \colon \Z [\![ A ]\!]
\rightarrow \Z [\![ z ]\!]$ the homomorphism such that $\theta(a)=z$
for any letter $a \in A$.
The homomorphism $\theta$ and $\pi$ extends to matrices with
coefficients in $\Z \llangle A \rrangle$ and $\Z [\![ A ]\!]$ respectively.

\begin{proposition}  \label{proposition.trace}
Let $\A$ be a left reduced Dyck automaton and $H$ be one of the matrices
$C,C{M_c}^*,M_c$ defined from $\A$.
We have
\begin{align*}
\pi P_n(\X_H) 
&= \sum_{\ell=1}^{|Q|} (-1)^{\ell+1}\trace\sum_{1 \leq j \leq n} j
[\pi H_{\otimes\ell} ]_j [(1-\pi H_{\otimes\ell})^{-1}]_{n-j}.
\end{align*}
where $P_n(\X_H)$ is the set of periodic pattern of $\X_H$ of length~$n$. 
\end{proposition}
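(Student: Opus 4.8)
The plan is to count periodic patterns of $\X_H$ by first counting closed $H$-paths in $\A$ and then correcting for the fact that distinct closed paths may carry the same label. Since $\A$ is left reduced, Proposition stating $H$-determinism applies: $\A$ is $H$-deterministic for $H$ one of $C$, $C{M_c}^*$, $M_c$. Consequently, a periodic pattern $u$ of $\X_H$ of length $n$ is the label of a bi-infinite $H$-path, and by $H$-determinism that path is unique up to shift; moreover its underlying sequence of states is $n$-periodic. So $\pi(P_n(\X_H))$ should equal $\pi$ applied to the generating series counting, with multiplicity $n/(\text{period})$ appropriately, the closed $H$-paths of ``$H$-length'' summing to total $A$-length $n$. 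The standard device (as in the sofic case, cf.\ \cite{LindMarcus1995}) is: the commutative series $\trace (1-\pi H)^{-1} = \sum_{k\ge 1}\trace (\pi H)^k$ counts closed $H$-paths weighted by their labels, but a closed $H$-path of $H$-length $k$ arising as the $k$-fold power of a primitive one is counted once, whereas as a periodic pattern of the shift it should be counted with multiplicity $k$ (one for each cyclic rotation / choice of base edge). The operator $z\frac{d}{dz}\log$, equivalently extracting $\sum_j j[\pi H]_j[(1-\pi H)^{-1}]_{n-j}$ in degree $n$, performs exactly this reweighting: $z\frac{d}{dz}\log\det(1-\pi H)^{-1} = \trace\big( z\frac{d}{dz}(\pi H)\,(1-\pi H)^{-1}\big)$, whose degree-$n$ part is $\trace\sum_{1\le j\le n} j\,[\pi H]_j\,[(1-\pi H)^{-1}]_{n-j}$.

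The subtlety, and the reason for the alternating sum over $\ell$ and the tensor-power automata $\A_{\otimes\ell}$, is that $\trace(\pi H)^k$ overcounts: a single bi-infinite $H$-path visiting a state more than once in one period, or several distinct closed $H$-paths carrying the same label, get counted with the wrong multiplicity. This is precisely the obstruction that, in the non-Dyck sofic setting, is resolved by passing to the $\ell$-th exterior-power (or ``$\ell$-tuple of distinct states'') construction and using the matrix-tree / MacMahon master-theorem identity
\begin{equation*}
\prod_{\text{primitive closed }H\text{-paths }\gamma}(1 - \text{label}(\gamma)) \;=\; \sum_{\ell=0}^{|Q|} (-1)^{\ell}\,\trace (\pi H_{\otimes\ell})^{\,\cdot}\ \text{-type expansion},
\end{equation*}
more precisely $\sum_{\ell\ge 0}(-1)^\ell \det(1-\pi H_{\otimes\ell})$-style alternating combination collapses, via inclusion–exclusion over the set of states actually used, to the generating function of \emph{simple} cycles, which is what genuinely corresponds to periodic orbits. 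The key algebraic fact I would invoke is that the circularity of the matrices $C$, $M_c$, $C{M_c}^*$ (established in the preceding proposition) guarantees that distinct $H$-paths with the same label are exactly accounted for by this signed sum — circularity is the combinatorial input that makes the ``code-like'' behaviour hold, so that the labelled cycle series of $\X_H$ equals $\sum_{\ell=1}^{|Q|}(-1)^{\ell+1}$ times the (signed) labelled closed-path series of $\A_{\otimes\ell}$.

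So the proof would proceed in three steps. First, reduce from periodic patterns of the shift $\X_H$ to closed labelled $H$-paths using $H$-determinism of the left-reduced automaton, noting that each periodic pattern lifts uniquely. Second, apply the standard logarithmic-derivative identity to convert the ``count closed paths, reweight $k$-fold powers by $k$'' problem into the expression $\trace\sum_{1\le j\le n} j[\pi H_{\otimes\ell}]_j[(1-\pi H_{\otimes\ell})^{-1}]_{n-j}$ for each $\ell$. Third — the main obstacle — prove the inclusion–exclusion identity that the labelled periodic-pattern series of $\X_H$ equals $\sum_{\ell=1}^{|Q|}(-1)^{\ell+1}$ times the corresponding series on $\A_{\otimes\ell}$; this is where circularity of $H$ (previous proposition), the sign conventions built into the definition of $\A_{\otimes\ell}$ (even/odd permutations), and the compatibility of admissibility and matching under the tensor construction all have to be combined. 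I expect this third step to mirror the classical sofic-shift argument of \cite{LindMarcus1995} closely, with the Dyck structure contributing only through the fact that ``prime Dyck word labelling an admissible path'' behaves like an ordinary letter once circularity is in hand, so the bookkeeping of calls and returns is inert at this level and was already handled when passing from $X$ to the $\X_H$'s in Proposition \ref{proposition.P(X)}.
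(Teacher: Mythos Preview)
Your outline follows the paper's strategy closely --- reweighting by $j$, then an alternating sum over the $\A_{\otimes\ell}$ to correct multiplicities, exactly as in the sofic case of \cite{LindMarcus1995}. But you have swapped the roles of the two structural hypotheses, and this matters for how the argument actually runs.

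In your Step~1 you invoke $H$-determinism to say that a periodic pattern ``lifts uniquely''. It does not: a word $w\in E_{n,j}$ may label closed $H$-paths based at several distinct states (this is precisely the overcounting the signed sum has to fix). What is used at this stage in the paper is \emph{circularity} of $H$: if $x=u^\infty$ with $u$ of length $n$, then circularity gives a decomposition $u\sim x_1\cdots x_k$ with $x_i\in H$ that is unique up to cyclic rotation, and moreover guarantees that the $j=|x_1|$ shifts $x,\sigma(x),\dots,\sigma^{j-1}(x)$ are pairwise distinct. That is what produces the factor $j$ and the identity $\pi P_n(\X_H)=\sum_{j=1}^n j\,E_{n,j}$.

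Conversely, in your Step~3 you locate circularity as the input to the inclusion--exclusion. In the paper it is \emph{$H$-determinism} that does the work there: determinism forces the coefficient of each word $w$ in every entry $(D_{n,j})_{pq}$ to be $0$ or $1$, so that $w$ acts on some maximal subset $P\subseteq Q$ as a genuine permutation $\pi_w$. The signed-trace identity then reduces to the purely combinatorial fact (Lemma~6.4.9 of \cite{LindMarcus1995}, restated in the paper) that for any permutation $\pi$ of a finite set $P$,
\[
\sum_{\emptyset\neq R\subseteq P,\ \pi(R)=R}(-1)^{|R|+1}\varepsilon(\pi|_R)=1,
\]
which gives $c(w)=1$ for every $w\in E_{n,j}$. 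You gesture at ``inclusion--exclusion over the set of states actually used'', but this permutation lemma is the concrete mechanism, and your sketch does not name it or explain why determinism (not circularity) is the hypothesis that makes it applicable.
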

\begin{proof}
With a slight abuse of notations, we will say that a word $u$ belongs to
$H$ if $u$ is belongs to some $H_{pq}$.

Let $x$ be a periodic sequence of $\X_H$ of period $n \geq 1$.
We have $\sigma^n(x)=x$ if and only
if $x$ is a two-sided infinite concatenation of a word $w =
vx_2 \dotsm x_{k}u$ of
length $n$ with $x_i \in H$, $x_1=uv \in H$, and $v \neq
\varepsilon$. 
Let $j = |x_1|$.
The sequences $x,\sigma(x), \dotsc,
\sigma^{j-1}(x)$ are all distinct by circularity of the matrix $H$.
Since $\pi(w)= uvx_2\ldots x_k= x_1 \ldots x_k$, we get that
\begin{align*}
\pi P_n(\X_H) =  \sum_{1 \leq  j \leq n}\sum_{k \geq 1 }j E_{n,j,k}
= \sum_{1 \leq  j \leq n} j E_{n,j},
\end{align*}
where $E_{n,j,k}$ is the $H$-path labels of length $n$ which are concatenation of $k$
words $x _1 \ldots x_k$ of $H$ with $|x_1| = j$,
and $E_{n,j}$ is the union of the $E_{n,j,k}$.
The sets $E_{n,j,k}$ and $E_{n,j',k'}$ are disjoint for $k \neq k'$ or
for $j \neq j'$.

Let us denote by $D_{n,j,k}$ and $D_{n,j}$ the matrices
\begin{align*}
D_{n,j,k} &= [H]_j[H^{k-1}]_{n-j},\\
D_{n,j} &= \sum_{k \geq 1} D_{n,j,k}.
\end{align*}
Then $E_{n,j,k}$ (\resp $E_{n,j}$) is the set of labels
of $D_{n,j,k}$-paths (\resp $D_{n,k}$-paths).

Let $j$ be a fixed integer between
$1$ and $n$.  
Let us show that  
\begin{equation*}
\sum_{w \in E_{n,j}} w
= \sum_{\ell=1}^{|Q|} (-1)^{\ell+1} \trace((D_{n,j})_{\otimes\ell}).
\end{equation*}

Note that $w$ appears in $\trace((D_{n,j})_{\otimes\ell})$ for some integer $\ell$ with $1 \leq \ell
\leq  |Q|$ only if $w \in  E_{n,j}$.

Thus we can write
\begin{align}
\sum_{\ell=1}^{|Q|} (-1)^{\ell+1} \trace((D_{n,j})_{\otimes\ell})
&= \sum_{w \in E_{n,j}} c(w) \: w, \label{equation.cu}
\end{align}
where $c(w) \in \Z$. 

We will show that $c(w)=1$ for every word $w$ such that
$w \in E_{n,j}$.

Let $w$ be such a word. Since $\A$ is $H$-deterministic, the coefficient of each word in $H^k_{pq}$ for $k \geq 1$
and fixed states $p,q$, is at most one since there is at most one $H$-path
in $\A$ going from $p$ to $q$ and labeling a given word.
Hence the coefficient of $w$ in each $(D_{n,j})_{pq}$ is at most one
for fixed states $p,q$.

If $w \in E_{n,j}$
there must be at least one nonempty subset $R$ of $Q$ (of cardinal $m$)
on which $w$ acts as a permutation $\mu_w$ of $R$ induced by $D_{n,j}$, \ie such that the
coefficient of $w$ in $(D_{n,j})_{p\mu_w(p)}$ is one for each
$p \in R$. If two subsets have this property,
then does the union. Hence there is a largest subset $P \subseteq Q$
on which $w$ acts as a permutation. At this point we need a
combinatorial lemma used in \cite[Lemma 6.4.9]{LindMarcus1995}.
We recall its proof for the sake of completeness.

In the following lemma, the notation $\mu|_R$ means the restriction of
a permutation $\mu$ to a set of states $R$ and $\varepsilon(\mu)$ is the
signature of the permutation~$\mu$.
\begin{lemma}\cite[Lemma 6.4.9]{LindMarcus1995} \label{lemma.combi}
Let $\mu$ be a permutation of a finite set $P$ and let $\P=\{R
\subseteq P \mid R \neq \emptyset, \: \mu(R)=R\}$.
Then 
\begin{equation*}
\sum_{R \in \P} (-1)^{|R|+1} \varepsilon(\mu|_R)=1.
\end{equation*}
\end{lemma}
\begin{proof}[Proof of Lemma~\ref{lemma.combi}]
Recall that $P$ decomposes under $\mu$ into disjoint cycles, say $P_1,
\dotsm, P_d$. Thus each $\mu|_{P_i}$ is a cyclic permutation and so
\begin{equation*}
\varepsilon(\mu|_{P_i})=(-1)^{1+|P_i|}.
\end{equation*}
The nonempty sets $R \subseteq P$ for which $\mu(R)=R$ are exactly the
nonempty unions of sub-collections of $\{P_1, \dotsm, P_d\}$. Thus
\begin{align*}
\sum_{R \in \P} (-1)^{|R|+1} \varepsilon(\mu|_R)
&= \sum_{\emptyset \neq K \subseteq \{1,\dotsc,d\}}
(-1)^{1+|\cup_{k \in K} P_k|} \varepsilon(\mu|_{\cup_{k \in K}
    P_k}),\\
&= \sum_{\emptyset \neq K \subseteq \{1,\dotsc,d\}}
(-1)^{1+\sum_{k \in K} |P_k|} \prod_{k \in K} (-1)^{1 + |P_k|},\\
&= \sum_{\emptyset \neq K \subseteq \{1,\dotsc,d\}}
(-1)^{|K| + 1 + 2\sum_{k \in K} |P_k|},\\
&= \sum_{i=1}^d (-1)^{i+1} \binom{d}{i} = 1 - (1-1)^d = 1. 
\end{align*}
\end{proof}
Returning to the computation of the coefficient $c(w)$ in Equation~\ref{equation.cu},
let $P$ be the largest subset of $Q$ on which $w$ acts as a
permutation.
The coefficient $c(w)$ is by definition of $(D_{n,j})_{\otimes \ell}$,
\begin{align*}
c(w) &=  \sum_{R \in \P} (-1)^{|R|+1} \varepsilon(\mu_w|_R) = 1.
\end{align*}
Hence 
\begin{equation}
\sum_{\ell=1}^{|Q|} (-1)^{\ell+1} \trace((D_{n,j})_{\otimes\ell})
= \sum_{w \in E_{n,j}} w, \label{equation.cu2}.
\end{equation}
We get
\begin{align*}
\pi \P_n(\X_H)  
&=  \sum_{j=1}^n j E_{n,j},\\
&=  \sum_{\ell=1}^{|Q|} (-1)^{\ell+1} \sum_{j=1}^n j
\trace( \pi (D_{n,j})_{\otimes\ell}),\\
&=  \sum_{\ell=1}^{|Q|} (-1)^{\ell+1} 
\trace \sum_{j=1}^n j \sum_{k \geq 0} \pi
([H]_j[H^k]_{n-j})_{\otimes\ell},\\
&=  \sum_{\ell=1}^{|Q|} (-1)^{\ell+1} 
\trace \sum_{j=1}^n j \sum_{k \geq 0}
([\pi H _{\otimes\ell}]_j[ \pi H _{\otimes\ell}^k]_{n-j}),\\
&=  \sum_{\ell=1}^{|Q|} (-1)^{\ell+1} 
\trace \sum_{j=1}^n j  
[\pi H_{\otimes\ell} ]_j [(1-\pi H_{\otimes\ell})^{-1}]_{n-j}.
\end{align*}
\end{proof}

\begin{proposition}  \label{proposition.zetaXH}
Let $\A$ be a left reduced Dyck automaton. 
The zeta function of $\X_H$, where $H$ is one of the matrices
$C,C{M_c}^*,M_c$ defined from $\A$, is
\begin{equation*}
 \zeta_{\X_{H}}(z)  = \prod_{\ell=1}^{|Q|}
\det(I-H_{\otimes\ell}(z))^{(-1)^{\ell}},
\end{equation*}
where $H_{\otimes\ell}(z) = \theta\pi H_{\otimes\ell}$.
\end{proposition}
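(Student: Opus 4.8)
The strategy is to convert the coefficientwise formula for $\pi(P_n(\X_H))$ established in Proposition~\ref{proposition.trace} into a statement about the logarithmic derivative of $\zeta_{\X_H}$, and then recognize the resulting series as coming from the determinants $\det(I - H_{\otimes\ell}(z))$. First I would apply the homomorphism $\theta\pi$ to the identity of Proposition~\ref{proposition.trace}, so that $P_n(\X_H)$ becomes a genuine integer $p_n^{(H)}$ counting periodic patterns of length $n$ (here the fact that $\A$ is left reduced, hence $H$-deterministic, is what guarantees each periodic pattern is counted exactly once, which is precisely what makes $\theta\pi$ of the noncommutative count equal to the honest count). This gives
\begin{equation*}
p_n^{(H)} = \sum_{\ell=1}^{|Q|} (-1)^{\ell+1} \trace \sum_{1\le j\le n} j\,[H_{\otimes\ell}(z)]_j\,[(I-H_{\otimes\ell}(z))^{-1}]_{n-j}.
\end{equation*}

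Next I would recognize the inner double sum as the degree-$n$ coefficient of a derivative. For a square matrix series $K(z)$ with $K(0)=0$, one has the formal identity
\begin{equation*}
\sum_{n\ge 1} \Big(\sum_{1\le j\le n} j\,[K(z)]_j\,[(I-K(z))^{-1}]_{n-j}\Big) z^n = z\,K'(z)\,(I-K(z))^{-1},
\end{equation*}
since $z\frac{d}{dz}K(z) = \sum_j j[K(z)]_j z^j$ and $(I-K(z))^{-1} = \sum_k K(z)^k$. Taking traces and summing over $\ell$ with signs, and then dividing by $n$ and summing, the definition $\zeta_{\X_H}(z) = \exp\sum_{n\ge1} p_n^{(H)} z^n/n$ yields
\begin{equation*}
z\frac{d}{dz}\log \zeta_{\X_H}(z) = \sum_{\ell=1}^{|Q|} (-1)^{\ell+1}\, \trace\big(z\,H_{\otimes\ell}'(z)\,(I-H_{\otimes\ell}(z))^{-1}\big).
\end{equation*}

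Finally I would invoke the classical matrix identity $\frac{d}{dz}\log\det(I-K(z)) = -\trace\big(K'(z)(I-K(z))^{-1}\big)$, valid for a formal matrix series with $K(0)=0$ so that $\det(I-K(z))$ is a unit in $\Z[\![z]\!]$. Applied with $K = H_{\otimes\ell}(z)$, this turns the right-hand side into $-\sum_\ell (-1)^{\ell+1} z\frac{d}{dz}\log\det(I-H_{\otimes\ell}(z)) = \sum_\ell (-1)^{\ell} z\frac{d}{dz}\log\det(I-H_{\otimes\ell}(z))$, so that
\begin{equation*}
z\frac{d}{dz}\log\zeta_{\X_H}(z) = z\frac{d}{dz}\log\prod_{\ell=1}^{|Q|}\det(I-H_{\otimes\ell}(z))^{(-1)^\ell}.
\end{equation*}
Since both sides are logarithmic derivatives of power series with constant term $1$, they are equal, which gives the claimed formula. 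The only genuinely delicate point is the bookkeeping in the second step — checking that the convolution $\sum_{j} j[K]_j[(I-K)^{-1}]_{n-j}$ is exactly the $n$-th coefficient of $zK'(I-K)^{-1}$ and that traces commute with all of this — and making sure the matrix series involved have zero constant term so that the inverses and determinant manipulations are legitimate over $\Z[\![z]\!]$; this holds because $C$, $C M_c^*$, $M_c$ (and hence their $\otimes\ell$ versions) consist of nonempty words, so $\theta\pi H_{\otimes\ell}$ has no constant term.
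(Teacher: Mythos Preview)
Your proposal is correct and follows essentially the same route as the paper: both start from Proposition~\ref{proposition.trace}, identify the convolution $\sum_j j[K]_j[(I-K)^{-1}]_{n-j}$ with (the coefficients of) $K'(I-K)^{-1}$, and then invoke the Jacobi identity linking $\trace\!\big(K'(I-K)^{-1}\big)$ to $\log\det(I-K)$. The only cosmetic difference is that you compare logarithmic derivatives $z\frac{d}{dz}\log(\cdot)$ and then match constant terms, whereas the paper computes $\sum_n \theta\pi P_n(\X_H)/n$ directly as $\sum_\ell (-1)^{\ell+1}\trace(-\log(I-H_{\otimes\ell}(z)))$ and exponentiates via $\exp\trace = \det\exp$; these are equivalent manipulations. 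One small notational slip: with the paper's convention $[S]_n$ is the homogeneous part (already carrying the $z^n$), so your first displayed equation should read $p_n^{(H)} z^n = \dots$ rather than $p_n^{(H)} = \dots$, and the extra $z^n$ in your second display should be dropped --- but this does not affect the argument.
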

The same formula holds for $\X_H$ when $H$ is equal to $C$, ${M_r}^*C$ or
$M_r$ when $\A$ be a right reduced.
\begin{proof}

We get from Proposition~\ref{proposition.trace}
\begin{align*}
&\sum_{n \geq 1} 
\frac{\theta \pi P_n(\X_H)
}{n}\\
&=\sum_{n \geq 1} 
\frac{1}{n} 
\sum_{\ell=1}^{|Q|}  (-1)^{\ell+1}
 \trace\sum_{j=1}^{n} j [\theta\pi H_{\otimes\ell}]_j [(I -
 \theta\pi H_{\otimes\ell})^{-1}]_{n-j},\\
&= 
\sum_{\ell=1}^{|Q|} (-1)^{\ell+1}
\sum_{n \geq 1} 
\frac{1}{n} 
 \trace\sum_{j=0}^{n-1} (j+1) [\theta\pi H_{\otimes\ell}]_{j+1} [(I -
 \theta\pi H_{\otimes\ell})^{-1}]_{n-j-1},\\
&=  \sum_{\ell=1}^{|Q|} (-1)^{\ell+1}
\sum_{n \geq 1} 
\frac{1}{n} 
\trace \sum_{j=0}^{n-1} [\diff\theta\pi H_{\otimes\ell}]_{j} 
[(I - \theta\pi H_{\otimes\ell})^{-1}]_{n-j-1},\\
&=  \sum_{\ell=1}^{|Q|} (-1)^{\ell+1}
\sum_{n \geq 1} 
\frac{1}{n} 
\trace [ (\diff\theta\pi H_{\otimes\ell}) (I - \theta\pi H_{\otimes\ell})^{-1}]_{n-1},\\
&= \sum_{\ell=1}^{|Q|} (-1)^{\ell+1}
\sum_{n \geq 1} 
\frac{1}{n} 
\trace [ (\diff \log(I - \theta\pi H_{\otimes\ell})]_{n-1},\\
&=  \sum_{\ell=1}^{|Q|} (-1)^{\ell+1}
\sum_{n \geq 1} 
\trace [- \log(I - \theta\pi H_{\otimes\ell})]_{n},\\
&= \sum_{\ell=1}^{|Q|} (-1)^{\ell+1} \trace(- \log(I - \theta\pi H_{\otimes\ell})),
\end{align*}
where $\diff$ denotes the derivative with respect to the variable $z$.

Thus, using Jacobi's formula, we obtain 
\begin{align*}
\zeta(\X_H)(z)& = \exp 
\trace(\sum_{\ell=1}^{|Q|} (-1)^{\ell+1} (- \log(I - \theta\pi H_{\otimes\ell}))),\\
&= 
\det \exp(\sum_{\ell=1}^{|Q|} (-1)^{\ell}\log(I - \theta\pi H_{\otimes\ell}))),\\
&= \prod_{\ell=1}^{|Q|} \det (I - \theta\pi
H_{\otimes\ell})^{(-1)^{\ell}}.
\end{align*}
\end{proof}

\subsection{Computation of the zeta function of $X$}

The previous computations allow us to obtain 
directly the following general formula
for the zeta function of a sofic-Dyck shift $X$.
\begin{theorem} \label{theorem.zeta}
The zeta function of a sofic-Dyck shift accepted by a left reduced
Dyck (\resp right reduced) Dyck automaton
$\A$ (\resp $\B$) is given by the following formula, where $C$, $M_c$
and $C{M_c}^*$ are defined from $\A$ and 
$M_r$
and ${M_r}^*C$ are defined from $\B$.
\begin{align*}
 \zeta_{X}(z)  &= \frac{
   \zeta_{\X_{C{M_c}^*}}(z) 
   \zeta_{\X_{{M_r}^*C}}(z)
   \zeta_{\X_{M_c}}(z)  \zeta_{\X_{M_r}}(z)} 
{\zeta_{\X_C}(z)}\\
&= \prod_{\ell=1}^{|Q|} 
   \det(I-  (C{M_c}^*)_{\otimes\ell}(z))^{(-1)^{\ell}}
  \det(I-({M_r}^*C)_{\otimes\ell}(z))^{(-1)^{\ell}} \\
& \det(I-C_{\otimes\ell}(z))^{(-1)^{\ell} +1}
\det(I-M_{r,\otimes\ell}(z))^{(-1)^{\ell}}
\det(I-M_{c,\otimes\ell}(z))^{(-1)^{\ell}}.
\end{align*}
\end{theorem}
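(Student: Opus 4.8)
The plan is to assemble the theorem from the pieces already established, so the proof is essentially a bookkeeping argument with one point requiring care. First I would invoke Proposition~\ref{proposition.decomposition}, which already gives the first displayed equality
\begin{equation*}
\zeta_{X}(z) = \frac{\zeta_{\X_{C{M_c}^*}}(z)\,\zeta_{\X_{{M_r}^*C}}(z)\,\zeta_{\X_{M_c}}(z)\,\zeta_{\X_{M_r}}(z)}{\zeta_{\X_C}(z)},
\end{equation*}
valid for \emph{any} presentation $\A$ of $X$ and the matrices $C, M_c, M_r$ computed from it. The only subtlety is that in the statement of Theorem~\ref{theorem.zeta} the matrices $C{M_c}^*$, $M_c$ (and the $C$ used with them) are taken from a left reduced presentation $\A$, whereas $M_r$ and ${M_r}^*C$ (and the $C$ used with them) come from a possibly different right reduced presentation $\B$. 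So I would first note that Proposition~\ref{proposition.P(X)} identifies $P(\X_{C{M_c}^*})$, $P(\X_{M_c})$ etc.\ intrinsically in terms of the periodic points of $X$ itself (via the balance function and conjugacy classes of periodic patterns), hence these sets, and therefore the factors $\zeta_{\X_{C{M_c}^*}}$, $\zeta_{\X_{M_c}}$, do not depend on which presentation is used to compute them. The same holds for $\zeta_{\X_{{M_r}^*C}}$ and $\zeta_{\X_{M_r}}$. This justifies mixing the two presentations: one may take the ``positive'' factors from $\A$ and the ``negative'' factors from $\B$. For the $\zeta_{\X_C}$ in the denominator, Proposition~\ref{proposition.P(X)} gives $P(\X_C) = P(\X_{C{M_c}^*}) \cap P(\X_{{M_r}^*C})$, so again it is presentation-independent and may be computed from either $\A$ or $\B$.

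Next I would substitute the closed forms from Proposition~\ref{proposition.zetaXH}. For the matrices $H \in \{C{M_c}^*, M_c\}$ computed from the left reduced automaton $\A$, that proposition gives
\begin{equation*}
\zeta_{\X_H}(z) = \prod_{\ell=1}^{|Q|} \det\bigl(I - H_{\otimes\ell}(z)\bigr)^{(-1)^{\ell}},
\end{equation*}
and the parenthetical remark after Proposition~\ref{proposition.zetaXH} gives the same formula for $H \in \{{M_r}^*C, M_r\}$ computed from the right reduced automaton $\B$. For $H = C$ one may use either presentation (say $\A$), again by the presentation-independence just noted, so that $\zeta_{\X_C}(z) = \prod_{\ell} \det(I - C_{\otimes\ell}(z))^{(-1)^{\ell}}$. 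Plugging all five of these products into the first displayed formula and combining the exponents of $\det(I - C_{\otimes\ell}(z))$ — which appears once in a numerator factor and once in the denominator, giving net exponent $(-1)^{\ell} - (-(-1)^{\ell}) = (-1)^{\ell}+1$ — yields exactly the second displayed formula of the theorem. The exponent on each of the other four determinant factors is simply $(-1)^{\ell}$, as written.

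The one genuine obstacle is the dimension mismatch between the two presentations: $\A$ has some state set $Q$ and $\B$ has a possibly different state set $Q'$, so the products $\prod_{\ell=1}^{|Q|}$ and $\prod_{\ell=1}^{|Q'|}$ need not have the same range, and the matrices $C_{\otimes\ell}$ from $\A$ and from $\B$ live in different spaces. I would handle this by observing that $\det(I - H_{\otimes\ell}(z)) = 1$ whenever $\ell$ exceeds the number of states of the relevant automaton (there are no $\ell$-tuples of distinct states), so each product may harmlessly be extended to $\ell$ ranging up to $\max(|Q|,|Q'|)$; writing $|Q|$ in the statement is then an abuse understood as this common upper bound, or one simply notes each factor's product is over the states of the automaton defining that factor. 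I would also remark that $C_{\otimes\ell}(z)$, though computed from two different presentations in the two places it could appear, has the same characteristic quantity $\det(I - C_{\otimes\ell}(z))$ contribution because $\zeta_{\X_C}$ is presentation-independent; to keep the formula clean one fixes $\A$ for all occurrences of $C$. Apart from this indexing care, the proof is a direct substitution, so I would keep it to a few lines citing Propositions~\ref{proposition.decomposition}, \ref{proposition.P(X)}, and~\ref{proposition.zetaXH}.
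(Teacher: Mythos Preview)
Your overall strategy matches the paper's: the theorem is presented there as an immediate consequence of Propositions~\ref{proposition.decomposition} and~\ref{proposition.zetaXH}, with no further argument. Your added discussion of presentation-independence (so that the ``call-side'' factors may be read off $\A$ and the ``return-side'' factors off $\B$) and of the range of $\ell$ is a genuine clarification the paper simply elides.

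There is, however, one concrete slip in your bookkeeping. You assert that $\det(I-C_{\otimes\ell}(z))$ ``appears once in a numerator factor and once in the denominator''. It does not: in the decomposition of Proposition~\ref{proposition.decomposition} the matrix $C$ enters only through $\zeta_{\X_C}$ in the denominator; the numerator carries $C{M_c}^*$, ${M_r}^*C$, $M_c$, $M_r$, none of which is $C$. A straight substitution of Proposition~\ref{proposition.zetaXH} therefore yields
\[
\zeta_{\X_C}(z)^{-1}=\prod_{\ell}\det(I-C_{\otimes\ell}(z))^{-(-1)^{\ell}}=\prod_{\ell}\det(I-C_{\otimes\ell}(z))^{(-1)^{\ell+1}},
\]
so the exponent that actually falls out is $(-1)^{\ell+1}$, not $(-1)^{\ell}+1$. (Your side computation $(-1)^{\ell}-(-(-1)^{\ell})=(-1)^{\ell}+1$ is also arithmetically off; it equals $2(-1)^{\ell}$.) The printed exponent $(-1)^{\ell}+1$ in the statement is almost certainly a misprint for $(-1)^{\ell+1}$; the worked Example~\ref{example.zeta} proceeds via the first displayed equality and is consistent with $(-1)^{\ell+1}$. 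So do not invent an extra numerator occurrence of $C$ to force the printed exponent---carry out the plain substitution and flag the typo.
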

\begin{corollary} \label{corollary.zeta}
The zeta function of a sofic-Dyck shift is $\Z$-algebraic.
\end{corollary}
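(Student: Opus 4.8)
The plan is to derive Corollary~\ref{corollary.zeta} directly from Theorem~\ref{theorem.zeta} together with the closure properties already in hand. First I would recall that a sofic-Dyck shift admitting a deterministic presentation can be presented simultaneously by a left reduced automaton $\A$ and a right reduced automaton $\B$ (the left and right reductions of the given deterministic presentation both present the same shift by the proposition on reductions, and determinism survives because the reduction constructions are themselves determinization procedures). Hence Theorem~\ref{theorem.zeta} applies and expresses $\zeta_X(z)$ as a finite product of factors of the form $\det(I - H_{\otimes\ell}(z))^{\pm 1}$, where $H$ ranges over $C$, $C M_c^*$, $M_r^* C$, $M_c$, $M_r$ and $\ell$ ranges from $1$ to $|Q|$.

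Next I would argue that each such determinant is an algebraic function of $z$ over $\Z$, i.e.\ satisfies a nontrivial polynomial equation with coefficients in $\Z[z]$. The entries of $M_c$ and $M_r$ are (after applying $\theta\pi$) rational — in fact polynomial — in $z$, so those factors are already rational, a fortiori algebraic. For $C$, the key point is that each entry $C_{pq}$ is a set of prime Dyck words labeling admissible paths; since the language of labels of finite admissible paths is a visibly pushdown language (Proposition~\ref{proposition.admissibleVPL}), and visibly pushdown languages are unambiguous context-free, the generating series $\theta\pi C_{pq}(z)$ is $\Z$-algebraic by the Chomsky–Schützenberger theorem. The same holds for the tensored automata $\A_{\otimes\ell}$, whose admissible-path languages are again visibly pushdown (the construction only permutes and signs letters and restricts to admissible paths), so every entry of $C_{\otimes\ell}(z)$ is $\Z$-algebraic; then $\det(I - C_{\otimes\ell}(z))$ is a polynomial in these entries and hence $\Z$-algebraic, since the $\Z$-algebraic power series in one variable form a ring. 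The matrices $C M_c^*$ and $M_r^* C$ are handled the same way: $M_c^*$ contributes entries that are rational in $z$ (a geometric-type series in a matrix with polynomial entries), and the product of a $\Z$-algebraic matrix with a $\Z$-rational matrix has $\Z$-algebraic entries.

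Finally I would assemble the pieces: $\zeta_X(z)$ is a finite product of reciprocals and powers of $\Z$-algebraic series $\det(I - H_{\otimes\ell}(z))$; since an inverse of a nonzero algebraic series is algebraic and products of algebraic series are algebraic, $\zeta_X(z)$ is $\Z$-algebraic. The main obstacle, and the point that needs the most care, is justifying that the relevant one-variable generating series are algebraic \emph{over $\Z$} and not merely over $\Q$: this requires invoking the Chomsky–Schützenberger representation for the unambiguous context-free (indeed visibly pushdown) languages appearing as the path-label languages of $\A_{\otimes\ell}$, and checking that the grammar governing prime Dyck admissible paths is genuinely unambiguous so that the substitution $a \mapsto z$ produces an $\N$- (hence $\Z$-) algebraic series rather than something with spurious multiplicities. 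Once that is in place, closure of $\Z$-algebraic series under the ring operations and under inversion finishes the argument.
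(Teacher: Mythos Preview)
Your approach is correct and is precisely what the paper intends: the corollary is stated without proof immediately after Theorem~\ref{theorem.zeta}, and after Example~\ref{example.zeta} the paper simply remarks that ``the above formula shows that the zeta function of a sofic-Dyck shift is a $\Z$-algebraic series,'' so the argument is exactly the one you spell out (algebraicity of the entries of $C_{\otimes\ell}(z)$ via Chomsky--Sch\"utzenberger for the unambiguous context-free, indeed visibly pushdown, path-label languages, then closure of algebraic series under polynomial combinations and inversion). One small correction: your opening paragraph's attempt to track determinism through the reductions is both unnecessary and slightly off---left and right reductions exist for \emph{any} presenting Dyck automaton and are only weak-deterministic (Proposition~\ref{proposition.weakDeterministic}), not deterministic---so the hypothesis ``with a deterministic presentation'' plays no actual role in the argument, as the paper itself tacitly acknowledges in the remark following the example.
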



\begin{example}\label{example.zeta}
Let $\A$ be the Dyck automaton over $A$ pictured on the left part of 
Figure~\ref{figure.code}, where
$A=(\{a,a'\},\{b,b'\},\{i\})$.
The Dyck automaton $\A_{\otimes
  1}$ is the same as $\A$. The Dyck automaton $\A_{\otimes 2}$ is pictured
on the right part of Figure~\ref{figure.code}.
\begin{figure}[htbp]
    \centering
\begin{tikzpicture}[scale=0.4]
\node[state]            (3)         at   (14,0)                 {1};
\path     
(3) edge [in=35,out=335, loop] node {$-i$}  (3);
\node[state]            (1)         at   (0,0)                 {1};
\node[state]            (2) [right =2.5cm of 1]            {2};
\path     
(1)  edge[bend right] node {$i$} (2)
(2)  edge[bend right] node {$i$} (1)
(1) edge [in=100,out=40, loop] node  [inner sep=0.5mm,pos=0.5]   (a) {$b$} (1)
(1) edge [in=180,out=120, loop] node  [inner sep=0.5mm,pos=0.5]   (b) {$a$} (1)
(1) edge [in=250,out=190, loop] node  [inner sep=0.5mm,pos=0.5]   (c) {$a'$} (1)
(1) edge [in=330, out=270, loop] node  [inner sep=0.5mm,pos=0.5]   (d) {$b'$} (1)
(a) edge[matched edges, bend right] node {} (b)
(c) edge[matched edges, bend right] node {} (d);
\end{tikzpicture}
\caption{A Dyck automaton $\A$ (on the left) over 
$A=A_c \sqcup A_r \sqcup A_i$ with $A_c= \{a,a'\}$, $A_r= \{b,b'
\}$ and $A_i=\{i\}$ and the Dyck automaton
$\A_{\otimes 2}$ (on the right).
Matched edges are linked with a dotted
line.}
\label{figure.code}
\end{figure}
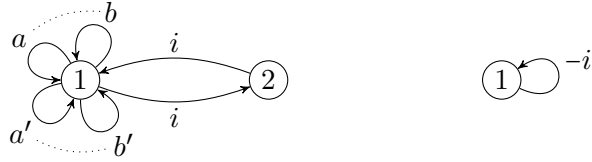
Let us compute the zeta function of $\X_\C$ for this automaton. 
Let
\[
C = \begin{bmatrix}
C_{11} & C_{12}\\
C_{21} & C_{22}
\end{bmatrix}, \:
C_{\otimes 2} = \begin{bmatrix}
C_{(1,2),(1,2)}
\end{bmatrix}.
\]
We have $C_{11}=  aD_{11}b + a'D_{11}b'$, 
$C_{22}= 0$, $C_{12}= i$, $C_{21}= i$,
with $D_{11}= aD_{11}bD_{11} + a'D_{11}b' D_{11} 
+ i i  D_{11} + \varepsilon$.
Hence 
\begin{equation*}
2z^2D^2_{11}(z) -(1-z^2) D_{11}(z) + 1 = 0
\end{equation*}
Since the coefficient of $z^0$ in $D_{11}(z)$ is $1$, we get
\begin{equation*}
D_{11}(z) = \frac{1 - z^2 - \sqrt{1- 10z^2 + z^4}}{4z^2}.
\end{equation*}
Hence 
\begin{equation*}
C_{11}(z) = 2z^2D_{11}(z)=\frac{1 - z^2 - \sqrt{1- 10z^2 + z^4}}{2}.
\end{equation*}
We have
$C_{22}(z)=0$, $C_{12}(z)=C_{21}(z)=z$.
We also have $C_{(1,2),(1,2)}=-i$ and thus $C_{(1,2),(1,2)}(z)=-z$.
Thus
\begin{align*}
\zeta_{X_{C}}(z) &=  \prod_{\ell=1}^{2}
\det(I- C_{\otimes\ell}(z))^{(-1)^{\ell}}\\
&= (1+z) 
\begin{vmatrix}
1- \frac{1-z^2 - \sqrt{1- 10z^2 + z^4}}{2}& -z\\
 -z&  1
\end{vmatrix}^{-1},\\
&=\frac{1+z}{1-z^2-\frac{1-z^2-\sqrt{1- 10z^2 + z^4}}{2}}.
\end{align*}
For $H=M_c,M_r$, we have 
\begin{align*}
\prod_{\ell=1}^{2} \det(I-H_{\otimes\ell}(z))^{(-1)^{\ell}} 
&= \frac{1}{1-2z}.
\end{align*}
We also have
\begin{align*}
C{M_c}^* &= 
\begin{bmatrix}
C_{11}& i\\
i & 0
\end{bmatrix}
\begin{bmatrix}
\{ a , a'\}^*& 0\\
0 & \varepsilon
\end{bmatrix}
= \begin{bmatrix}
C_{11}\{a, a' \}^*& i\\
i \{a , a' \}^*& 0
\end{bmatrix},\\
{M_r}^*C &= 
\begin{bmatrix}
\{ b , b' \}^*& 0\\
0 & \varepsilon
\end{bmatrix}
\begin{bmatrix}
C_{11}& i\\
i & 0
\end{bmatrix}
= \begin{bmatrix}
\{b , b' \}^* C_{11}& \{ b , b' \}^*i\\
i & 0
\end{bmatrix}.
\end{align*}
\begin{align*}
\prod_{\ell=1}^{2} 
\det(I- (C{M_c}^*)_{\otimes\ell}(z))^{(-1)^{\ell}} &= 
(1+z) 
\begin{vmatrix}
1- \frac{C_{11}(z)}{(1-2z)}& -z\\
-\frac{z}{1-2z}&  1
\end{vmatrix}^{-1}\\
&= 
\frac{(1+z)(1-2z)}
{
1-2z-z^2- C_{11}(z)
}.
\end{align*}
The same equality holds for ${M_r}^*C$.
We finally get
\begin{align*}
 \zeta_{X}(z)  &= 
\frac{(1+z) (1-z^2- C_{11}(z))}
{
(1-2z-z^2- C_{11}(z))^2
},\\
&= \frac{(1+z) (1-z^2- \frac{1 - z^2 - \sqrt{1- 10z^2 + z^4}}{2})}
{
(1-2z-z^2-\frac{1 - z^2 - \sqrt{1- 10z^2 + z^4}}{2})^2
}.
\end{align*}
\end{example}
The above formula shows that the zeta function of a sofic-Dyck shift
is a $\Z$-algebraic series.
It is proved in \cite{BealBlockeletDima2014b} that 
the zeta function of a finite-type-Dyck shifts  is the
generating series of an unambiguous context-free language, \ie
is an $\N$-algebraic
function. We conjecture that the result also holds for
sofic-Dyck shifts.

There is no known criterion for a $\Z$-algebraic series with coefficients in $\N$
to be $\N$-algebraic but there are some
necessary conditions on the asymptotic behavior of the
coefficients (see the Drmota-Lalley-Woods Theorem
in \cite[VII.6.1]{FlajoletSedgewick2009} and recent insights from
Banderier and Drmota in
\cite{BanderierDrmota2013, BanderierDrmota2013b}).

\noindent \textbf{Acknowledgements} The authors would like to thank
Arnaud Carayol for pointing us some mistakes in a preliminary version
of this paper and Pavel Heller and Wolfgang Krieger for helpful comments.
\begin{small}
\bibliographystyle{abbrv} 
\bibliography{Dyck}
\end{small}

\end{document}